\newcommand{\reals}{\mathbb{R}}
\newcommand{\ignore}[1]{}
\newcommand{\ddd}{\mathrm{d}}
\newcommand{\norm}[1]{\left\lVert#1\right\rVert}
\def\qed{\hfill\rule{2mm}{2mm}}
\newcommand{\arccc}[1]{
\widearc{#1}
}
\newcommand{\barrrr}[1]{
\overline{#1}
}
\begin{document}
\title{Evacuating from $\ell_p$ Unit Disks in the Wireless Model
\thanks{
This is the full version of the paper with the same title which will appear in the proceedings of the 
17th International Symposium on Algorithms and Experiments for Wireless Sensor Networks (ALGOSENSORS 2021), September 9-10, 2021, Lisbon, Portugal.
}
}
%
%
\author{
Konstantinos Georgiou
\thanks{Research supported in part by NSERC.}
\and
Sean Leizerovich
\and
Jesse Lucier
\thanks{Research supported by a NSERC USRA.}
\and
Somnath Kundu
\ignore{
First Author\inst{1}
\and
Second Author\inst{2,3}
\and
Third Author\inst{3}
} 
}
\authorrunning{K. Georgiou et al.}
%
\institute{
Department of Mathematics, Ryerson University,
Toronto, ON, M5B 2K3, Canada
\email{\{konstantinos,sleizerovich,jesse.lucier,somnath.kundu\}@ryerson.ca}
}
\maketitle              
\begin{abstract}
The search-type problem of evacuating 2 robots in the wireless model from the (Euclidean) unit disk was first introduced and studied by Czyzowicz et al. [DISC'2014]. Since then, the problem has seen a long list of follow-up results pertaining to variations as well as to upper and lower bound improvements. All established results in the area study this 2-dimensional search-type problem in the Euclidean metric space where the search space, i.e. the unit disk, enjoys significant (metric) symmetries. 

We initiate and study the problem of evacuating 2 robots in the wireless model from $\ell_p$ unit disks, $p \in [1,\infty)$, where in particular robots' moves are measured in the underlying metric space. 
To the best of our knowledge, this is the first study of a search-type problem with mobile agents in more general metric spaces. 
The problem is particularly challenging since even the circumference of the $\ell_p$ unit disks have been the subject of technical studies. 
In our main result, and after identifying and utilizing the very few symmetries of $\ell_p$ unit disks, we design \emph{optimal evacuation algorithms} that vary with $p$. Our main technical contributions are two-fold. 
First, in our upper bound results, we provide (nearly) closed formulae for the worst case cost of our algorithms. 
Second, and most importantly, our lower bounds' arguments reduce to a novel observation in convex geometry which analyzes trade-offs between arc and chord lengths of $\ell_p$ unit disks as the endpoints of the arcs (chords) change position around the perimeter of the disk, which we believe is interesting in its own right. 
Part of our argument pertaining to the latter property relies on a computer assisted numerical verification that can be done for non-extreme values of $p$.

\keywords{
Search
\and
Evacuation
\and
Wireless model
\and 
$\ell_p$ metric space
\and
Convex and computational geometry}
\end{abstract}

\section{Introduction}

In the realm of mobile agent computing, search-type problems are concerned with the design of searchers' (robots') trajectories in some known search space to locate a hidden object. 
Single searcher problems have been introduced and studied as early as the 60's by the mathematics community~\cite{beck1964linear,bellman1963optimal}, and later in the late 80's and early 90's by the theoretical computer science community~\cite{baezayates1993searching}. 
The previously studied variations focused mainly on the type of search domain, e.g. line or plane or a graph, and the type of computation, e.g. deterministic or randomized. 
Since search was also conducted primarily by single searchers, termination was defined as the first time the searcher hit the hidden object.
In the last decade with the advent of robotics,
search-type problems have been rejuvenated within the theoretical computer science community, which is now concerned with novel variations including the number of searchers (mobile agents),the communication model, e.g. face-to-face or wireless, and robots' specifications, e.g. speeds or faults, including crash-faults or byzantine faults. 
As a result of the multi-searcher setup, termination criteria are now subject to variations too, and these include the number or the type of searchers that need to reach the hidden item (for a more extended discussion with proper citations, see Section~\ref{sec: related work}). 

One of the most studied search domains, along with the line, is that of a circle, or a disk. In a typical search-type problem in the disk, the hidden item is located on the perimeter of the unit circle, and searchers start in its center. Depending on the variation considered, and combining all specs mentioned above, a number of ingenious search trajectories have been considered, often with counter-intuitive properties. Alongside the hunt for upper bounds (as the objective is always to minimize some form of cost, e.g. time or traversed space or energy) comes also the study of lower bounds, which are traditionally much more challenging to prove (and which rarely match the best known positive results). 

Search on the unbounded plane as well as in other 2-dimensional domains, e.g. triangles or squares, has been considered too, giving rise to a long list of treatments, often with fewer tight (optimal) results. 
While the list of variations for searching on the plane keeps growing, there is one attribute that is common to all previous results where robots' trajectories lie in $\reals^2$, which is the underlying Euclidean metric space. In other words, distances and trajectory lengths are all measured with respect to the Euclidean $\ell_2$ norm. Not only the underlying geometric space is well understood, but it also enjoys symmetries, and admits standard and elementary analytic tools from trigonometry, calculus, and analytic geometry. 

We deviate from previous results, and to the best of our knowledge, we initiate the study of a search-type problem with mobile agents in $\reals^2$ where the underlying metric space is induced by any $\ell_p$ norm, $p\geq 1$. 
The problem is particularly challenging since even ``highly symmetric'' shapes, such as the unit circle, enjoy fewer symmetries in non-Euclidean spaces. 
Even more, robot trajectories are measured with respect to the underlying metric, giving rise to technical mathematical expressions for measuring the performance of an algorithm.
In particular, we consider the problem of reaching (evacuating from) a hidden object (the exit) placed on the perimeter of the $\ell_p$ unit circle. Our unit-speed searchers start from the center of the circle, placed at the origin of the Cartesian plane $\reals^2$, and are controlled by a centralized algorithm that allows them to communicate their findings instantaneously. Termination is determined by the moment that the last searcher reaches the exit, and the performance analysis is evaluated against a deterministic worst case adversary. For this problem we provide \emph{optimal evacuation algorithms}. Apart from the novelty of the problem, our contributions pertain to (a) a technical analysis of search (optimal) algorithms that have to vary with $p$, giving rise to our upper bounds, and to
(b) an involved geometric argument that also uses, to the best of our knowledge, a novel observation on convex geometry that relates a given $\ell_p$ unit circle's arcs to its chords, giving rise to our matching lower bounds.

\subsection{Related Work}
\label{sec: related work}

Our contributions make progress in Search-Theory, a term that was coined after several decades of celebrated results in the area, and which have been summarized in books~\cite{ahlswede1987search,Alpern2013,alpern2002theory,stone1975theory}. The main focus in that area pertains to the study of (optimal) searchers' strategies who compete against (possibly hidden) hider(s) in some search domain. An even wider family of similar problems relates to exploration~\cite{AKS02}, terrain mapping,~\cite{mitchell2000geometric}, and hide-and-seek and pursuit-evasion~\cite{nahin2012chases}. 

The traditional problem of searching with one robot on the line~\cite{baezayates1993searching} has been generalized with respect to 
the number of searchers, 
the type of searchers, 
the search domain, 
and the objective, among others. 
When there are multiple searchers and the objective is that all of them reach the hidden object, the problem is called an \emph{evacuation problem}, with the first treatments dating back to over a decade ago~\cite{baumann2009earliest,FGK10}.
The evacuation problem that we study is a generalization of a problem introduced by Czyzowicz et al.~\cite{CGGKMP} and that was solved optimally. In that problem, a hidden item is placed on the (Euclidean) unit disk, and is to be reached by two searchers that communicate their findings instantaneously (wireless model). 
Variations of the problem with multiple searchers, as well as of another communication model (face-to-face) was considered too, giving rise to a series of follow-up papers~\cite{Watten2017,CGKNOV,disser2019evacuating}. 
Searching the boundary of the disc is also relevant to so-called Ruckle-type games, and closely related to our problem is a variation mentioned in~\cite{baston2013some} as an open problem, in which the underlying metric space is any $\ell_p$-induced space, $p\geq 1$, as in our work.

The search domain of the unit circle that we consider is maybe one of the most well studied, together with the line~\cite{Groupsearch}. 
Other topologies that have been considered include
multi-rays~\cite{BrandtFRW20},
triangles~\cite{ChuangpishitMNO17,CzyzowiczKKNOS15},
and graphs~\cite{angelopoulos2019expanding,Borowiecki0DK16}. 
Search for a hidden object on an unbounded plane was studied in~\cite{LS01}, later in~\cite{Emekicalp2014,Lenzen2014}, and more recently in~\cite{AcharjeeGKS19,DKP20}.

Search and evacuation problems with faulty robots have been studied in 
~\cite{czyzowicz2017evacuation,GKLPP19Algosensors,pattanayak2019chauffeuring} and with probabilistically faulty robots in~\cite{bgmp2020probabilistically}. 
Variations pertaining to the searcher's speeds appeared in~\cite{GeorgiouKK16,dmtcs:5528} (immobile agents), in~\cite{lamprou2016fast} (speed bounds) and in~\cite{CzyzowiczKKNOS17} (terrain dependent speeds). 
Search for multiple exits was considered in~\cite{czyzowicz2018evacuating,PattanayakR0S18},
while variation of searching with advice appeared in~\cite{georgiou2017searching}. 
Some variations of the objective include the so-called priority evacuation problem~\cite{CGKKKNOS18b,CzyzowiczGKKKNO20} and its generalization of weighted searchers~\cite{GL20}. 
Randomized search strategies have been considered in~\cite{beck1964linear,bellman1963optimal} and later in~\cite{kao1996searching}
for the line, and more recently in~\cite{ChuangpishitGS18} for the disk. 
Finally, turning costs have been studied in~\cite{demaine2006online} and an objective of minimizing a notion pertaining to energy (instead of time) was studied in~\cite{czyzowiczICALP2019,kranakis2009time}, just to name a few of the developments related to our problem. The reader may also see recent survey~\cite{CGK19search} that elaborates more on selected topics.

\subsection{High Level of New Contributions \& Motivation}
\label{sec: new contributions}

The algorithmic problem of searching in arbitrary metric spaces has a long history~\cite{chavez2001searching}, but the focus has been mainly touching on database management.
In our work, we extend results of a search-type problem in mobile agent computing first appeared in~\cite{CGGKMP}. More specifically, we provide optimal algorithms 
for the search-type problem of evacuating two robots in the wireless model from the $\ell_p$ unit disk, for $p\geq 1$ (previously considered only for the Euclidean space $p=2$). 
The novelty of our results is multi-fold. First, to the best of our knowledge, this is the first result in mobile agent computing in which a search problem is studied and optimally solved in $\ell_p$ metric spaces. Second, both our upper and lower bound arguments rely on technical arguments. Third, part of our lower bound argument relies on an interesting property of unit circles in convex geometry, which we believe is interesting in its own right. 

The algorithm we prove to be optimal for our evacuation problem is very simple, but it is one among infinitely many natural options one has to consider for the underlying problem (one for each deployment point of the searchers). Which of them is optimal is far from obvious, and the proof of optimality is, as we indicate, quite technical.

Part of the technical difficulty of our arguments 
arises from the implicit integral expression of arc lengths of $\ell_p$ circles. 
Still, by invoking the Fundamental Theorem of Calculus we determine the worst case placement of the hidden object for our algorithms.
Another significant challenge of our search problem pertains to the limited symmetries of the unit circle in the underlying metric space. As a result, it is not surprising that the behaviour of the provably optimal algorithm does depend on $p$, with $p=2$ serving as a threshold value for deciding which among two types of special algorithms is optimal.  
Indeed, consider an arbitrary contiguous arc of some fixed length of the $\ell_p$ unit circle with endpoints $A,B$. In the Euclidean space, i.e. when $p=2$, the length of the corresponding chord is invariant of the locations of $A,B$. In contrast, for the unit circle hosted in any other $\ell_p$ space, the slope of the chord $AB$ does determine its length. The relation to search and evacuation is that the arc corresponds to a subset of the search domain which is already searched, and points $A,B$ are the locations of the searchers when the exit is reported. Since searchers operate in the wireless model in our problem (hence one searcher will move directly to the other searcher when the hidden object is found), 
their trajectories are calculated 
so that their $\ell_p$ distance is the minimum possible for the same elapsed search time. 

Coming back to the $\ell_p$ unit disks, we show an interesting property which may be of independent interest (and which we did not find in the current literature). More specifically, and in part using computer assisted numerical calculations for a wide range of values of $p$, we show that for any arc of fixed length, the placement of its endpoint $A,B$ that minimizes the $\ell_p$ length of chord $AB$ is when $AB$ is parallel to the $y=0$ or $x=0$ lines, for $p\leq 2$, and when $AB$ is parallel to the $y=x$ or $y=-x$ lines for $p\geq 2$.  
The previous fact is coupled by a technical extension of a result first sketched in~\cite{CGGKMP}, according to which at a high level, as long as searchers have left any part of the unit circle of cumulative length $\alpha$ unexplored (not necessarily contiguous), then there are at least two unexplored points of arc distance \emph{at least} $\alpha$.


\section{Problem Definition, Notation \& Nomenclature}

For a vector $x=(x_1,x_2) \in \reals^2$, we denote by $\norm{x}_p$ the vector's $\ell_p$ norm, i.e. $\norm{x}_p=\left(|x_1|^p+|x_2|^p\right)^{1/p}$. The \emph{$\ell_p$ unit circle} is defined as 
$
\mathcal C_p := \left\{ x\in \reals^2: \norm{x}_p=1\right\},
$
see also Figure~\ref{fig: UnitCircles} for an illustration.
We equip $\reals^2$ with the metric $d_p$ induced by the $\ell_p$ norm, i.e. for $x,y \in \reals^2$ we write 
$
d_p(x,y) = \norm{x-y}_p. 
$
Similarly, if $r:[0,1] \mapsto \reals^2$ is an injective and continuously differentiable function, it's \emph{$\ell_p$ length} is defined as 
$
\mu_p(r):= \int_0^1 \norm{r'(t)}_p \ddd t.
$
As a result, a unit speed robot can traverse $r([0,1])$ in metric space $(\reals^2,d_p)$ in time $\mu_p(r)$.

We proceed with a formal definition of our search-type problem. 
In problem \textsc{WE}$_p$ (\emph{Wireless Evacuation in $\ell_p$ space}, $p\geq 1$), two unit-speed robots start at the center of a unit circle $\mathcal C_p$ placed at the origin of the metric space $(\reals^2,d_p)$.
Robots can move anywhere in the metric space, and they operate according to a centralized algorithm. 
An \emph{exit} is a point $P$ on the perimeter of $\mathcal C_p$.
An \emph{evacuation algorithm} $A$ consists of robots trajectories, 
either of which may depend on the placement of $P$ only after at least one of the robots 
passes through $P$ (\emph{wireless model}).\footnote{An underlying assumption is also that robots can distinguish points $(x,y)$ by their coordinates, and they can move between them at will. As a byproduct, robots have a sense of orientation. This specification was not mentioned explicitly before for the Euclidean space, since all arguments were invariant under rotations (which is not the case any more). However, even in the $\ell_2$ case this specification was silently assumed by fixing the cost of the optimal offline algorithm to 1 (a searcher that knows the location of the exit goes directly there), hence all previous results were performing competitive analysis by just doing worst case analysis.}
For each exit $P$, we define the evacuation cost of the algorithm as the first instance that the last robot reaches $P$. The \emph{cost of algorithm $A$} is defined as the supremum, over all placements $P$ of the exit, of the evacuation time of $A$ with exit placement $P$.
Finally, the \emph{optimal evacuation cost of \textsc{WE}$_p$} is defined as the infimum, over all evacuation algorithms $A$, of the cost of $A$.

Next we show that $\mathcal C_p$ has 4 axes of symmetry (and of course $\mathcal C_2$ has infinitely many, i.e. any line $ax+by=0, a,b \in \reals$). \begin{lemma}
\label{lem: symmetries}
Lines $y=0, x=0, y=x, y=-x$ are all axes of symmetry of $\mathcal C_p$.
Moreover, the center of $\mathcal C_p$ is its point of symmetry. 
\end{lemma}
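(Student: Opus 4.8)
The plan is to exploit the fact that each of the four claimed axes passes through the origin, so the reflection across it is a \emph{linear} map, and then to check directly that every such reflection preserves the $\ell_p$ norm. An isometry of $(\reals^2, d_p)$ that fixes the origin and maps $\mathcal C_p$ into itself necessarily maps it \emph{onto} itself, and this invariance is precisely what it means for the corresponding line to be an axis of symmetry of $\mathcal C_p$.

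Concretely, I would write down the four reflections as coordinate transformations: $\sigma_{x=0}(x_1,x_2) = (-x_1, x_2)$, $\sigma_{y=0}(x_1, x_2) = (x_1, -x_2)$, $\sigma_{y=x}(x_1,x_2) = (x_2, x_1)$, and $\sigma_{y=-x}(x_1, x_2) = (-x_2, -x_1)$, and for the point symmetry the map $\tau(x_1,x_2) = (-x_1,-x_2)$. The only computational ingredient needed is that for all real $a,b$ one has $|{-a}| = |a|$ and $|a|^p + |b|^p = |b|^p + |a|^p$; from this, $\norm{\sigma(x)}_p = \norm{x}_p$ for each $\sigma$ in the list, and $\norm{\tau(x)}_p = \norm{x}_p$. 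Hence each of these maps sends $\{x : \norm{x}_p = 1\}$ bijectively onto itself.

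It then remains to translate this invariance into the language of symmetry: for a line $\ell$ through the origin, $\sigma_\ell(\mathcal C_p) = \mathcal C_p$ says that the mirror image across $\ell$ of any point of $\mathcal C_p$ again lies on $\mathcal C_p$, i.e. $\ell$ is an axis of symmetry; and $\tau(\mathcal C_p) = \mathcal C_p$ says the origin is a centre of symmetry. The parenthetical remark about $p=2$ follows separately because $\norm{\cdot}_2$ is additionally invariant under every rotation about the origin, so every line $ax+by=0$ is an axis; this is not required for the statement as phrased.

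I do not expect a genuine obstacle here; the proof is essentially a verification. The only points requiring care are getting the reflection across $y=-x$ correct (it is $(x_1,x_2)\mapsto(-x_2,-x_1)$, not $(x_2,x_1)$) and emphasizing that all four lines pass through the origin, since otherwise the reflections would fail to be linear and the norm-invariance argument would not apply directly.
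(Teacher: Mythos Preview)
Your proposal is correct and follows essentially the same approach as the paper: both arguments write down the reflections across the four lines in coordinates and observe that each preserves $\norm{\cdot}_p$ because it only negates or swaps coordinates. Your treatment is slightly more thorough in that you also handle the point symmetry $\tau(x)=-x$ explicitly, whereas the paper's proof addresses only the four reflections and leaves the central symmetry implicit.
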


\begin{proof}
Reflection of point $P=(a,b)$ across lines $y=0, x=0, y=x, y=-x$ give points 
$P_1=(a,-b), P_2=(-a,b), P_3=(b,a), P_4=(-b,-a)$, respectively. It is easy to see that setting $\norm{P}_p=1$ implies that $\norm{P_i}_p=1, i=1,2,3,4$. 
\qed \end{proof}

We use the generalized trigonometric functions $\sin_p(\cdot), \cos_p(\cdot)$, as in \cite{richter2007generalized}, which are defined as
$
\sin_p(\phi) := \sin(\phi)/N_p(\phi), ~~
\cos_p(\phi) := \cos(\phi)/N_p(\phi),
$
where 
$
N_p(\phi) := \left( |\sin(\phi)|^p + |\cos(\phi)|^p \right)^{1/p}. 
$
By introducing 
$$
\rho_p(\phi) := \left( \cos_p(\phi), \sin_p(\phi)  \right), 
$$
which is injective and continuously differentiable function in each of the 4 quadrants, we have the following convenient parametric description of the $\ell_p$ unit circle; 
$
\mathcal C_p = \{ \rho_p (\phi): \phi \in [0,2\pi) \}.
$
In particular, set $Q_1=[0,\pi/2), Q_2=[\pi/2,\pi), Q_3=[\pi, 3\pi/2), Q_4=[3\pi/2,2\pi)$, and define for each $U \subseteq \mathcal C_p$ it's \emph{length} (measure) as
$$
\mu_p(U)
=
\sum_{i=1}^4 
\int_{t \in Q_i: \rho_p(t) \in U} \norm{\rho_p'(t)}_p \ddd t.
$$
It is easy to see that $\mu_p(\cdot)$ is indeed a measure, hence it satisfies the principle of inclusion-exclusion over $\mathcal C_p$.
Also, by Lemma~\ref{lem: symmetries} it is immediate that for every $U \subseteq \mathcal C_p$, and for $\overline U=\{\rho_p(t+\pi): \rho(t) \in U\}$, we have that $\mu_p(U) = \mu_p(\overline U)$ (both observations will be used later in Lemma~\ref{lem: arc chord points}). As a corollary of the same lemma, we also formalize the following observation. 
\begin{lemma}
\label{lem: symmetric measures}
For any $\phi \in \{k\cdot \pi/4: k=0,1,2,3,4\}$ and $\theta \in [0,\pi]$, 
let 
$U_+=\{\rho_p(\phi+t): t\in [0,\theta]\}$
and
$U_-=\{\rho_p(\phi-t): t\in [0,\theta]\}$. Then, we have that $\mu_p(U_+) = \mu_p(U_-)$. 
\end{lemma}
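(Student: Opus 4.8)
The idea is that the point $\rho_p(\phi)$, for $\phi\in\{k\pi/4:k=0,\dots,4\}$, lies on one of the four axes of symmetry of $\mathcal C_p$ from Lemma~\ref{lem: symmetries} --- namely $y=0$ when $\phi\in\{0,\pi\}$, $y=x$ when $\phi=\pi/4$, $x=0$ when $\phi=\pi/2$, and $y=-x$ when $\phi=3\pi/4$ --- and reflecting $\mathcal C_p$ across that axis exchanges the two arcs $U_+$ and $U_-$ that leave $\rho_p(\phi)$ in opposite directions, while preserving $\ell_p$-lengths. So the plan is to make this precise.

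First I would let $R$ denote the reflection associated with the chosen $\phi$; in every case $R$ is one of the coordinate maps $(a,b)\mapsto(a,-b),\,(-a,b),\,(b,a),\,(-b,-a)$ appearing in the proof of Lemma~\ref{lem: symmetries}, so $R$ is linear, maps $\mathcal C_p$ onto itself, and manifestly satisfies $\norm{Rx}_p=\norm{x}_p$ for all $x$; hence $R$ is an isometry of $(\reals^2,d_p)$. The crux is the identity
$$
R\bigl(\rho_p(\phi+t)\bigr)=\rho_p(\phi-t)\qquad\text{for all }t .
$$
I would verify this by a short case analysis over the five values of $\phi$, each time reducing to the corresponding ordinary trigonometric identity ($\cos(-t)=\cos t$, $\sin(-t)=-\sin t$; $\cos(\tfrac\pi2\pm t)=\mp\sin t$, $\sin(\tfrac\pi2\pm t)=\cos t$; and the analogous ones at $\pi$ and at $\pi/4,\,3\pi/4$) together with the observation that $N_p(\phi+t)=N_p(\phi-t)$ for these particular $\phi$, since $N_p$ depends only on $|\sin(\cdot)|$ and $|\cos(\cdot)|$. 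Granting the identity, $R(U_+)=\{R\rho_p(\phi+t):t\in[0,\theta]\}=\{\rho_p(\phi-t):t\in[0,\theta]\}=U_-$ follows at once.

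It then remains to conclude $\mu_p(U_+)=\mu_p(U_-)$ from $R(U_+)=U_-$ and the fact that $R$ is an $\ell_p$-isometry. Concretely: since $\theta\le\pi$, the map $t\in[0,\theta]\mapsto\rho_p(\phi\pm t)$ is an injective, piecewise $C^1$ parametrization of $U_\pm$, and the finitely many quadrant boundaries among $Q_1,\dots,Q_4$ contribute measure zero, so $\mu_p(U_\pm)=\int_0^\theta\norm{\rho_p'(\phi\pm t)}_p\,\ddd t$. Differentiating the displayed identity in $t$ gives $R\bigl(\rho_p'(\phi+t)\bigr)=-\rho_p'(\phi-t)$, and taking $\ell_p$-norms, using that $R$ is linear and norm-preserving, yields $\norm{\rho_p'(\phi+t)}_p=\norm{\rho_p'(\phi-t)}_p$ pointwise; integrating over $[0,\theta]$ gives the claim.

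The only real work is the bookkeeping: verifying $R\rho_p(\phi+t)=\rho_p(\phi-t)$ in each of the five cases, and noting that although $U_+$ and $U_-$ may each straddle several of the arcs $Q_1,\dots,Q_4$ used in the definition of $\mu_p(\cdot)$, this only affects a measure-zero set of parameter values and does not obstruct the single clean parametrization above. There is no essential difficulty here; the statement is in effect a corollary of Lemma~\ref{lem: symmetries} together with the elementary parity and shift identities satisfied by $\sin_p$ and $\cos_p$.
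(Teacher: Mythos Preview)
Your proposal is correct and is essentially the approach the paper has in mind: the paper does not give an explicit proof but simply states the lemma as a corollary of Lemma~\ref{lem: symmetries}, and what you have written is precisely the natural way to unpack that corollary---using the appropriate $\ell_p$-isometric reflection fixing $\rho_p(\phi)$ to swap $U_+$ and $U_-$. Your added bookkeeping (the pointwise identity $R\rho_p(\phi+t)=\rho_p(\phi-t)$ via the trig identities and evenness of $N_p$ about these $\phi$, and the handling of the quadrant decomposition) fills in details the paper leaves implicit.
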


The perimeter of the $\ell_p$ unit circle can be computed as 
$$
\mu_p(\mathcal C_p) = 
\sum_{i=1}^4 
\int_{Q_i} \norm{\rho_p'(t)}_p \ddd t = 
4
\int_{0}^{\pi/2} \norm{\rho_p'(t)}_p \ddd t
:=
2\pi_p. 
$$
By Lemma~\ref{lem: symmetric measures}, we also have 
$
\int_0^{\pi/2} \norm{\rho_p'(t)}_p \ddd t = 2\int_0^{\pi/4} \norm{\rho_p'(t)}_p \ddd t
=
\pi_p/2.
$
Clearly $\mu_2(\mathcal C_2)/2= \pi_2 = \pi=3.14159\ldots$, while the rest of the values of $\pi_p$, for $p\geq 1$, do not have known number representation, in general. 
However, it is easy to see that $\pi_1= \pi_\infty=4$. More generally we have that $\pi_p=\pi_q$ whenever $p,q\geq 1$ satisfy $1/p+1/q=1$~\cite{Keller:2009:V}. As expected, $\pi_2=\pi$ is also the minimum value of $\pi_p$, over $p\geq 1$~\cite{AdlTan00}, see also Figure~\ref{fig: ValueOfPi} for the behavior of $\pi_p$.

For every $\phi, \theta \in [0,2\pi)$, let $A=\rho(\phi), B=\rho(\phi+\theta)$ be two points on the $\ell_p$ unit circle. 
The \emph{chord $\overline{AB}$} is defined as the line segment with endpoints $A,B$. From the previous discussion we have $\mu_p\left( \overline{AB} \right) = d_p(A,B)$.
The \emph{arc $\arccc{AB}$} is defined as the curve $\{\rho(\phi+t): t\in [0,\theta]\}$, hence arcs identified by their endpoints are read counter-clockwise. 
	The length of the same arc is computed as $\mu_p\left( \arccc{AB} \right)$. 

Finally, the \emph{arc distance} of two points $A,B\in \mathcal C_p$ is defined as 
$
\arccc d_p (A,B) := \min\left\{ 
\mu_p\left( \arccc{AB} \right), \mu_p\left( \arccc{BA} \right)  \right\},
$
which can be shown to be a metric. By definition, it follows that $\arccc d_p (A,B) \in [0, \pi_p]$.

Next we present an alternative parameterization of the $\ell_p$ unit circle that will be convenient for some of our proofs. We define 
\begin{equation}
\label{equa: alternative param}
r_p(s) := \left( -s, \left(1- |s|^p\right)^{1/p} \right),
\end{equation}
and we observe that $r_p(s) \in \mathcal C_p$, for every $s \in [-1,1]$. 
It is easy to see that as $s$ ranges from $-1$ to $1$, we traverse the upper 2 quadrants of the unit circle with the same direction as $\rho_p(t)$, when $t$ ranges from $0$ to $\pi$. 
Moreover, for every $t\in [0,\pi]$, there exists unique $s=s(t)$, with $s\in [-1,1]$ such that $\rho_p(t)=r_p(s)$, and $s(t)$ strictly increasing in $t$ with $s(0)=-1, s(\pi/4)=-2^{-1/p}, s(\pi/2)=0, s(3\pi/4)=2^{-1/p}$ and $s(\pi)=1$.


\section{Algorithms for Evacuating 2 Robots in $\ell_p$ Spaces}

First we present a family of algorithms Wireless-Search$_p$($\phi$) for evacuating 2 robots from the $\ell_p$ unit circle $\mathcal C_p$. The family is parameterized by $\phi \in \reals$, see also Figure~\ref{fig: SearchingTwoCircles} for two examples, Algorithm Wireless-Search$_{1.5}$(0) and Wireless-Search$_{3}$($\pi/4$).
\begin{algorithm}
\caption{Wireless-Search$_p$($\phi$)}
\label{algo: wireless search phi}
\begin{algorithmic}[1]
\STATE Both robots move to point $\rho_p(\phi)$. 
\STATE Robots follow trajectories $\rho_p(\phi\pm t)$, $t\geq 0$, till the exit is found and communicated. 
\STATE Finder stays put, and non-finder moves to finder's location along the shortest chord (line segment). 
\end{algorithmic}
\end{algorithm}

Our goal is to prove the following. 
\begin{theorem}
\label{thm: upper bound best algo}~\\
For all $p\in [1,2]$, Algorithm Wireless-Search$_p$(0) is optimal. \\
For all $p\in [2,\infty)$, Algorithm Wireless-Search$_p$($\pi/4$) is optimal.
\end{theorem}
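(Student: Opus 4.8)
The plan is to prove Theorem~\ref{thm: upper bound best algo} in two parts: an upper bound on the cost of the two specific algorithms, and a matching lower bound that holds for \emph{every} evacuation algorithm for \textsc{WE}$_p$.

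\textbf{Upper bound.} For a fixed $p$ and fixed deployment angle $\phi$, I would analyze Algorithm Wireless-Search$_p$($\phi$) directly. The two robots reach $\rho_p(\phi)$ at time $1$ (the center is at $\ell_p$-distance $1$ from any point of $\mathcal C_p$), then sweep the circle in opposite directions at unit speed; by the time they have searched an arc of total length $2\ell$ the elapsed time is $1+\ell$. If the exit is discovered when the searchers sit at two points $A=\rho_p(\phi+t)$, $B=\rho_p(\phi-t)$, the non-finder crosses the chord $\overline{AB}$, so the evacuation time is $1 + \mu_p\bigl(\arccc{BA}\bigr)/2 \,+\, d_p(A,B)$ where the already-searched arc (of length $2t$, read through $\rho_p(\phi)$) is the complement of $\arccc{BA}$ — wait, more carefully: the unsearched arc when the exit is reported is $\arccc{BA}$ of some length $2\alpha$, the search time is $1+(\pi_p - \alpha)$, and the extra chord cost is $d_p(A,B)$. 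Using the alternative parameterization $r_p(s)$ from \eqref{equa: alternative param} and the Fundamental Theorem of Calculus (as the paper signals), I would write the adversary's payoff as a function of the position of the just-discovered exit along the circle and maximize it. The crucial feature making $\phi=0$ (resp. $\phi=\pi/4$) the right choice is Lemma~\ref{lem: symmetric measures}: with those deployment points the two searchers' positions $A,B$ are reflections across an axis of symmetry of $\mathcal C_p$, so the chord $\overline{AB}$ stays parallel to a coordinate axis (for $\phi=0$) or to the line $y=x$ (for $\phi=\pi/4$) throughout the search, and the worst-case chord is then explicitly computable, yielding the (nearly) closed-form cost claimed in the abstract.

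\textbf{Lower bound.} Here I would argue that no algorithm can beat the above cost. Fix any evacuation algorithm and run it; at any time after the searchers have left the center, let $\alpha$ be the cumulative length of the still-unexplored portion of $\mathcal C_p$ (not necessarily an arc). The extension of the Czyzowicz et al.\ observation quoted in Section~\ref{sec: new contributions} gives that there exist two unexplored points at arc distance at least $\alpha$; the adversary will place the exit at the later-visited of these two points. This forces the total time to be at least $1 + (\pi_p-\alpha) + (\text{chord length between those two points})$, and the chord-versus-arc convex-geometry fact — for an arc of fixed length, the $\ell_p$-chord length is minimized when the arc is centered so that the chord is parallel to a coordinate axis ($p\le 2$) or to $y=\pm x$ ($p\ge 2$) — shows that the best the algorithm can hope for is exactly the configuration realized by Wireless-Search$_p$(0) (resp. Wireless-Search$_p$($\pi/4$)). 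Optimizing over $\alpha$ matches the upper bound. I would reduce the lower bound to the single inequality comparing, over all placements of a fixed-length arc, its chord length against the ``canonical'' chord length, and invoke the computer-assisted verification cited in the abstract for the non-extreme values of $p$, with separate hand arguments for $p=1,2,\infty$.

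\textbf{Main obstacle.} The hard part is the convex-geometry lemma on the arc--chord trade-off: proving that, among all arcs of a prescribed $\ell_p$-length on $\mathcal C_p$, the chord is shortest precisely at the symmetric placement dictated by $p$ versus $2$. The arc length is only an implicit integral of $\norm{\rho_p'(t)}_p$, so one cannot differentiate the chord-length-as-a-function-of-arc-endpoint in closed form; this is exactly why the paper resorts to numerical verification for intermediate $p$. A secondary technical nuisance is the non-contiguous-unexplored-set argument — ensuring that ``two unexplored points at arc distance $\ge\alpha$'' survives when the unexplored set is a union of arcs — which requires carefully adapting the pigeonhole/averaging sketch of \cite{CGGKMP} to the $\ell_p$ measure $\mu_p$ and the (only four) symmetries available. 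Everything else (the FTC computation of the worst exit, assembling the pieces, the endpoint cases $p\in\{1,2,\infty\}$) is routine once these two ingredients are in place.
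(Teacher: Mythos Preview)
Your outline matches the paper's strategy: upper bound via direct analysis of the symmetric algorithms using the Fundamental Theorem of Calculus (the paper's Theorem~\ref{thm: explored optimal algo}), lower bound via the two-unexplored-points lemma (Lemma~\ref{lem: arc chord points}) plus the arc--chord minimization fact (Lemma~\ref{lem: min arc is what I used}), with the cases $p\in\{1,\infty\}$ handled by the trivial bound $1+\pi_p$. There is, however, one substantive gap in the lower-bound half.

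Your arc--chord fact says that for an arc of \emph{fixed} length $u$ the chord is shortest at the symmetric placement, i.e.\ $d_p(A,B)\ge \mathcal L_p(u)$ where $u$ is the \emph{actual} arc distance of $A,B$. But the unexplored-points lemma only guarantees that this arc distance is \emph{at least} $\alpha$, not equal to $\alpha$; to conclude $d_p(A,B)\ge \mathcal L_p(\alpha)$ you additionally need that $\mathcal L_p$ is increasing on $[0,\pi_p]$. The paper isolates this as a separate statement (Lemma~\ref{lem: mathcal L monotonicity}) and relies on a second computer-assisted verification for it. Without monotonicity your ``optimizing over $\alpha$'' step does not close, since in a non-Euclidean $\ell_p$ space a larger arc distance could in principle yield a shorter chord. (A minor arithmetic slip as well: when the unexplored length is $\alpha$, the elapsed search time is at least $1+\pi_p-\alpha/2$, not $1+\pi_p-\alpha$, since the two robots search in parallel.)
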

Figure~\ref{fig: AlgoPerformance} depicts the performance of our algorithms as $p\geq 1$  varies. 
Our analysis is formal, however we do rely on computer-assisted numerical calculations to verify certain analytical properties in convex geometry (see proof of Lemma~\ref{lem: mathcal L monotonicity} on page~\pageref{lem: mathcal L monotonicity}, and proof of Lemma~\ref{lem: min arc is what I used} on page~\pageref{lem: min arc is what I used}) that effectively contribute a part of our lower bound argument for bounded values of $p$, as well as $p=\infty$. For large values of $p$, e.g. $p\geq 1000$, where numerical verification is of limited help, we provide provable upper and lower bounds that differ by less than $0.042\%$, multiplicatively (or less than $0.0021$, additively).

Recall that as $\phi$ ranges in $[0,2\pi)$, then $\rho_p(\phi)$ ranges over the perimeter of $\mathcal C_p$. In particular, for any execution of Algorithm~\ref{algo: wireless search phi}, the exit will be reported at some point $\rho_p(\phi \pm t)$, where $t \in [0, \pi]$. 
Since in the last step of the algorithm, the non-finder has to traverse the line segment defined by the locations of robots when the exit is found, we may assume without loss of generality that the exit is always found at some point $\rho_p(\phi \pm t)$, where $t \in [0, \pi]$, say by robot \#1. 
Note that even though Algorithm~\ref{algo: wireless search phi} is well defined for all $[0,2\pi)$ (in fact all reals), due to Lemma~\ref{lem: symmetries} it is enough to restrict to $\phi \in [0, \pi/4]$.

In the remaining of this section, we denote by $\mathcal E_{p,\phi} (\tau)$ the evacuation time of Algorithm Wireless-Search$_p$($\phi$), given that the exit is reported after robots have spent time $\tau$ searching in parallel. We also denote by $\delta_{p,\phi}(\tau)$ the distance of the two robots at the same moment, assuming that no exit has been reported previously. Hence, 
\begin{equation}
\label{equa: evacuation time at tau}
\mathcal E_{p,\phi} (\tau) = 1 +\tau + \delta_{p,\phi}(\tau).
\end{equation}
Since $\mu_p(\mathcal C_p) = 2\pi_p$ and the two robots search in parallel, an exit will be reported for some $\tau \in [0,\pi_p]$. 
Hence, the worst case evacuation time $E_{p,\phi}$ of Algorithm Wireless-Search$_p$($\phi$) is given by 
$$
E_{p,\phi}:= \max_{\tau \in [0,\pi_p]} \mathcal E_{p,\phi} (\tau).\footnote{For arbitrary algorithms one should define the cost as the supremum over all exit placements. Since in Algorithm Wireless-Search$_p$($\phi$) the searched space remains contiguous and its boundaries keep expanding with time, the maximum always exists.}
$$

\subsection{Worst Case Analysis of Algorithm Wireless-Search$_p$($\phi$)}

It is important to stress that parameter $t$ in the description of robots' trajectories in Algorithm Wireless-Search$_p$($\phi$) does not represent the total elapsed search time. Even more, and for an arbitrary value of $\phi$, it is not true that robots occupy points $\rho_p(\phi\pm t)$ simultaneously. To see why, recall that from the moment robots deploy to point $\rho_p(\phi)$, they need time
$
\alpha_{1,2}(\phi, t)
:=
\mu_p\left( 
\left\{ 
\rho_p(\phi \pm s):~s\in [0,t]
\right\}
\right)
$
in order to reach points $\rho_p(\phi\pm t)$. Moreover, $\alpha_{1}(\phi, t) \not = \alpha_{2}(\phi, t)$, unless $\phi = k\cdot \pi/4$ for some $k=0,1,2,3$, as per Lemma~\ref{lem: symmetric measures}. We summarize our observation with a lemma. 

\begin{lemma}
\label{lem: sym trajectories}
Let $\phi \in \{0,\pi/4\}$, and consider an execution of Algorithm~\ref{algo: wireless search phi}.
When one robot is located at point $\rho_p(\phi+t)$, for some $t\in [-\pi, \pi]$, then the other robot is located $\rho_p(\phi-t)$, and in particular $\alpha_{1}(\phi, t) = \alpha_{2}(\phi, t)$.
\end{lemma}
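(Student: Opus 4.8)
The plan is to derive the statement directly from the equal-measure property at the symmetric angles, i.e. Lemma~\ref{lem: symmetric measures}. First I would recall the dynamics of Algorithm~\ref{algo: wireless search phi}: after moving together to $\rho_p(\phi)$, the two robots start, at the same instant and at unit speed, to traverse the two arcs $t\mapsto \rho_p(\phi+t)$ and $t\mapsto\rho_p(\phi-t)$, $t\ge 0$. By definition, the elapsed time at which the first robot reaches $\rho_p(\phi+t)$ is $\alpha_1(\phi,t)=\mu_p\!\left(\{\rho_p(\phi+s):s\in[0,t]\}\right)$, and similarly the second robot reaches $\rho_p(\phi-t)$ after elapsed time $\alpha_2(\phi,t)=\mu_p\!\left(\{\rho_p(\phi-s):s\in[0,t]\}\right)$.

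Next I would invoke Lemma~\ref{lem: symmetric measures} with the specific value $\phi\in\{0,\pi/4\}$, both of which are of the form $k\cdot\pi/4$; applied with arc parameter $\theta=t\in[0,\pi]$ it yields $\alpha_1(\phi,t)=\alpha_2(\phi,t)$ for every $t\in[0,\pi]$. Since $\mu_p$ is a (non-degenerate) measure along the circle, each map $t\mapsto\alpha_i(\phi,t)$ is continuous and strictly increasing on $[0,\pi]$, hence a bijection onto $[0,\pi_p]$; as the two maps coincide, they share the same inverse. Consequently, for any elapsed search time $\tau\in[0,\pi_p]$ there is a single value $t=t(\tau)$ for which robot~$1$ is located at $\rho_p(\phi+t)$ and, at the very same moment, robot~$2$ is located at $\rho_p(\phi-t)$. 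This establishes the claim for $t\in[0,\pi]$, and in particular $\alpha_1(\phi,t)=\alpha_2(\phi,t)$.

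Finally, to cover $t\in[-\pi,0)$ as stated, I would simply use $\rho_p(\phi+t)=\rho_p(\phi-(-t))$ with $-t\in(0,\pi]$: a robot sitting at $\rho_p(\phi+t)$ for negative $t$ is precisely the robot traversing the ``$-$'' arc, and by the case already treated the other robot is at $\rho_p(\phi-(-t))=\rho_p(\phi-t)$; the degenerate value $t=\pm\pi$ places both robots at the common antipodal point $\rho_p(\phi+\pi)=\rho_p(\phi-\pi)$, consistently with $\alpha_1(\phi,\pi)=\alpha_2(\phi,\pi)=\pi_p$. I do not expect a genuine obstacle here beyond bookkeeping; the one point that truly requires the hypothesis $\phi\in\{0,\pi/4\}$ (rather than an arbitrary $\phi$) is the synchronization of the two parametrizations, which fails in general exactly because $\alpha_1(\phi,t)\neq\alpha_2(\phi,t)$ otherwise, as remarked just before the lemma.
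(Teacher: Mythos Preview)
Your proposal is correct and follows essentially the same approach as the paper: the lemma is presented there as an immediate observation summarizing the preceding paragraph, which notes that $\alpha_1(\phi,t)=\alpha_2(\phi,t)$ for $\phi=k\cdot\pi/4$ precisely by Lemma~\ref{lem: symmetric measures}. Your write-up is in fact more detailed than the paper's, spelling out the monotonicity/bijection argument that converts $\alpha_1=\alpha_2$ into the simultaneous-position claim and handling the sign of $t$ explicitly.
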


Now we provide worst case analysis of two Algorithms for two special cases of metric spaces. The proof is a warm-up for the more advanced argument we employ later to analyze arbitrary metric spaces.

\begin{lemma}
\label{lem: upper bound 1 and infty}
$E_{1,0}=E_{\infty,\pi/4}=5.$
\end{lemma}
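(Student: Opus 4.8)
The plan is to prove both equalities by exploiting the fact that the $\ell_1$ and $\ell_\infty$ unit circles are squares, so the generalized trigonometric parametrization simplifies to piecewise-linear motion, and by using the fact that for $\phi \in \{0,\pi/4\}$ the two robots move symmetrically (Lemma~\ref{lem: sym trajectories}). Recall from the excerpt that $\pi_1=\pi_\infty=4$, so the total search time before the exit is reported lies in $[0,\pi_p]=[0,4]$, and by~\eqref{equa: evacuation time at tau} we have $\mathcal E_{p,\phi}(\tau) = 1+\tau+\delta_{p,\phi}(\tau)$. The whole task thus reduces to computing $\delta_{p,\phi}(\tau)$ explicitly in these two cases and maximizing $1+\tau+\delta_{p,\phi}(\tau)$ over $\tau\in[0,4]$.

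First I would treat $E_{1,0}$. Here $\mathcal C_1$ is the square with vertices $(\pm 1,0),(0,\pm 1)$, and Algorithm Wireless-Search$_1$($0$) deploys both robots to $\rho_1(0)=(1,0)$, then sends them in opposite directions around the square. Since $\mu_1(\mathcal C_1)=2\pi_1=8$ and both robots travel at unit speed, after time $\tau$ each robot has covered arc length $\tau$ from $(1,0)$; by symmetry one is at the point obtained by going counter-clockwise a distance $\tau$, the other clockwise the same distance. I would parametrize a robot's position as a function of arc length along the square's perimeter (each side has $\ell_1$-length $2$, since e.g. the segment from $(1,0)$ to $(0,1)$ has $\ell_1$-length $|1-0|+|0-1|=2$), so after time $\tau\le 2$ the counter-clockwise robot is at $(1-\tau/2,\ \tau/2)$ and the clockwise one at $(1-\tau/2,\ -\tau/2)$, giving $\delta_{1,0}(\tau)=\norm{(0,\tau)}_1=\tau$; hence $\mathcal E_{1,0}(\tau)=1+2\tau$ on $[0,2]$. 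For $\tau\in[2,4]$ I would similarly track the robots onto the next pair of sides and compute $\delta_{1,0}(\tau)$, which will be a piecewise-linear concave-type function; the maximum of $1+\tau+\delta_{1,0}(\tau)$ should occur at $\tau=2$, where the robots sit at $(0,1)$ and $(0,-1)$ at $\ell_1$-distance $2$, yielding $\mathcal E_{1,0}(2)=1+2+2=5$. I would then verify that on $[2,4]$ the quantity never exceeds $5$ (the robots start closing the gap, so $\tau+\delta_{1,0}(\tau)$ stays $\le 4$).

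For $E_{\infty,\pi/4}$ the argument is entirely analogous: $\mathcal C_\infty$ is the square with vertices $(\pm1,\pm1)$, and $\rho_\infty(\pi/4)=(\cos_\infty(\pi/4),\sin_\infty(\pi/4))$ is the corner $(1,1)$ (up to normalization), so starting the robots at a corner and sending them symmetrically along the two incident edges is again piecewise-linear motion. Each side of $\mathcal C_\infty$ has $\ell_\infty$-length $2$, the perimeter is $8=2\pi_\infty$, and after time $\tau\le 2$ the robots are at $(1-\tau/2\cdot\text{(sign)},\,\ldots)$ on adjacent sides with $\delta_{\infty,\pi/4}(\tau)$ equal to their $\ell_\infty$-separation, which I expect to be $\tau$ again (so $\mathcal E_{\infty,\pi/4}(\tau)=1+2\tau$ on $[0,2]$, peaking at $5$ when $\tau=2$ and the robots are at the two opposite corners $(-1,1)$ and $(1,-1)$ with $\ell_\infty$-distance $2$), and I would check $[2,4]$ doesn't beat it. As a shortcut I could also invoke the isometry between $(\reals^2,d_1)$ and $(\reals^2,d_\infty)$ given by the linear map $(x,y)\mapsto(x+y,x-y)$ (a scaled rotation by $\pi/4$), which maps $\mathcal C_1$ to a scaled $\mathcal C_\infty$ and the deployment point $(1,0)$ to a corner, making $E_{\infty,\pi/4}=E_{1,0}$ immediate once the first case is done. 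The main obstacle is simply the bookkeeping of tracking both robots across side-transitions of the square and confirming that the single maximum of $1+\tau+\delta_{p,\phi}(\tau)$ is attained at $\tau=2$ rather than somewhere in $(2,4)$; this is a finite, routine case analysis but must be done carefully to rule out a later larger value.
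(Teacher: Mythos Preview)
Your approach is essentially the same as the paper's: parametrize the robots' positions piecewise-linearly along the square's edges, compute $\delta_{p,\phi}(\tau)$ explicitly, and maximize $1+\tau+\delta_{p,\phi}(\tau)$ over $[0,4]$. Two small points are worth noting. First, in the $\ell_\infty$ case your position formula ``$(1-\tau/2\cdot\text{(sign)},\ldots)$'' is a slip: along a horizontal or vertical edge of $\mathcal C_\infty$ a unit-$\ell_\infty$-speed robot has coordinate changing at rate $1$, not $1/2$, so after time $\tau\le 2$ the positions are $(1-\tau,1)$ and $(1,1-\tau)$ (your endpoint check at $\tau=2$ is nonetheless correct). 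Second, the paper actually carries out the $[2,4]$ case and finds $\delta_{p,\phi}(\tau)=4-\tau$, so $\mathcal E_{p,\phi}(\tau)\equiv 5$ is \emph{constant} on $[2,4]$ in both instances---not merely bounded by $5$---which is slightly stronger than what you plan to verify. Your suggested isometry shortcut via $(x,y)\mapsto(x+y,x-y)$ is a genuinely nice alternative the paper does not use; it would let you deduce $E_{\infty,\pi/4}=E_{1,0}$ directly once the $\ell_1$ case is done.
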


\begin{proof}
First we study Algorithm Wireless-Search$_1$($0$) for evacuating 2 robots from the $\ell_1$ unit disk. By~\eqref{equa: evacuation time at tau}, if the exit is reported after time $\tau$ of parallel search, then 
$
\mathcal E_{1,0} (\tau) = 1 +\tau + \delta_{1,0}(\tau).
$
Note that $\pi_1=4$, so the exit is reported no later than parallel search time $4$. 
First we argue that $\mathcal E_{1,0} (\tau)$ is increasing for $\tau \in [0,2]$.
Indeed, in that time window robot \#1 is moving from point $(1,0)$ to point $(0,1)$ along trajectory $(1-\tau/2,\tau/2)$ (note that this parameterization induces speed 1 movement). 
By Lemma~\ref{lem: sym trajectories}, robot \#2 at the same time is at point $(1-\tau/2,-\tau/2)$. It follows that $\delta_{1,0}(\tau)=\tau$, so indeed $\mathcal E_{1,0} (\tau)$ is increasing for $\tau \in [0,2]$. 
Finally we show that $\mathcal E_{1,0} (\tau)=5$, for all $\tau \in [2,4]$. Indeed, note that for the latter time window, robot \#1 moves from point $(0,1)$ to point $(-1,0)$ along trajectory $(-(\tau-2)/2,1-(\tau-2)/2)$. By Lemma~\ref{lem: sym trajectories}, robot \#2 at the same time is at point $(-(\tau-2)/2,-1+(\tau-2)/2)$. It follows that $\delta_{1,0}(\tau)=|-(\tau-2)/2+(\tau-2)/2|+|1-(\tau-2)/2+1-(\tau-2)/2|=4-\tau$, and hence 
$\mathcal E_{1,0} (\tau) = 1+\tau + \delta_{1,0}(\tau) = 5$, as wanted.

Next we study Algorithm Wireless-Search$_\infty$($\pi/4$) for evacuating 2 robots from the $\ell_\infty$ unit disk. By~\eqref{equa: evacuation time at tau}, if the exit is reported after time $\tau$ of parallel search, then 
$
\mathcal E_{\infty,\pi/4} (\tau) = 1 +\tau + \delta_{\infty,\pi/4}(\tau).
$
As before, $\pi_\infty=4$, so the exit is reported no later than parallel search time $4$. 
We show again that $\mathcal E_{\infty,\pi/4} (\tau)$ is increasing for $\tau \in [0,2]$.
Indeed, in that time window robot \#1 is moving from point $(1,1)$ to point $(-1,1)$ along trajectory $(1-\tau,1)$ (note that this induces speed 1 movement). 
By Lemma~\ref{lem: sym trajectories}, robot \#2 at the same time is at point $(1,1-\tau)$. It follows that $\delta_{\infty,\pi/4}(\tau)=\max\{|1-\tau-1|,|1-1+\tau|\}=\tau$, 
so indeed $\mathcal E_{\infty,\pi/4} (\tau)$ is increasing for $\tau \in [0,2]$. 
Finally we show that $\mathcal E_{\infty,\pi/4} (\tau)=5$, for all $\tau \in [2,4]$. Indeed, note that for the latter time window, robot \#1 moves from point $(-1,1)$ to point $(-1,-1)$ along trajectory $(-1,1-(\tau-2))$. By Lemma~\ref{lem: sym trajectories}, robot \#2 at the same time is at point $(1-(\tau-2),-1)$. It follows that $\delta_{\infty,\pi/4}(\tau)=\max\{|-1-1+(\tau-2)|,|1-(\tau-2)+1|\}=4-\tau$, and hence 
$\mathcal E_{\infty,\pi/4} (\tau) = 1+\tau + \delta_{\infty,\pi/4}(\tau) = 5$, as wanted.  
\qed \end{proof}

It is interesting to see that the algorithms of Lemma~\ref{lem: upper bound 1 and infty} outperform algorithms with different choices of $\phi$. For example, it is easy to see that $E_{1,\pi/4}\geq 6$. Indeed, note that Algorithm Wireless-Search$_1$($\pi/4$) deploys robots at point $(1/2,1/2)$. 
Robot reaches point $(0,1)$ after 1 unit of time, and it reaches point $(-1,0)$ after an additional 2 units of time. The other robot is then at point $(0,-1)$, at an $\ell_1$ distance of 2. So, the placement of the exit at point $(-1,0)$ induces cost $1+1+2+2=6$. A similar argument shows that $E_{\infty,0}\geq 6$ too.


We conclude this section with a summary of our positive results, introducing at the same time some useful notation. The technical and lengthy proof can be found in Appendix~\ref{sec: proof of positive results}.

\begin{theorem}
\label{thm: explored optimal algo}
Let $w_p$ be the unique root to equation $w^p+1=2(1-w)^p$, and define

$$
s_p :=
\left\{
\begin{array}{ll}
 \left(\left(2^p-1\right)^{\frac{1}{p-1}}+1\right)^{-1/p} &, ~~p\in (1,2] \\
\left(w_p^{p/(p-1)} +1\right)^{-1/p} &,~~ p \in (2,\infty) .
\end{array}
\right.
$$

For every $p \in (1,2]$, the placement of the exit inducing worst case cost for Algorithm Wireless-Search$_p$(0) results in the total explored portion of $\mathcal C_p$ with measure
$$e^-_p:=
\pi_p+2\int_{0}^{s_p} \left(z^{p^2-p} \left(1-z^p\right)^{1-p}+1\right)^{1/p} \ddd z.$$
Also, when the exit is reported, robots are at distance 
$\gamma^-_p:=2 (1-s_p^p)^{1/p}$. 

For every $p \in [2,\infty)$, the placement of the exit inducing worst case cost for Algorithm Wireless-Search$_p$($\pi/4$) results in the total explored portion of $\mathcal C_p$ with measure
$$e^+_p:=
\pi_p+2\int_{2^{-1/p}}^{s_p} \left(z^{p^2-p} \left(1-z^p\right)^{1-p}+1\right)^{1/p} \ddd z
.$$
Also, when the exit is reported, robots are at distance 
$\gamma^+_p:= 2^{1/p} \left( 
\left(1-s_p^p\right)^{1/p}
+s_p
\right)
$.

We also set $e_p$ (and $\gamma_p$) to be equal to $e_p^-$ (and $\gamma_p^-$) if $p\leq 2$, and equal to $e_p^+$ (and $\gamma_p^+$) if $p> 2$, and in particular $e_p \in (\pi_p, 2\pi_p]$.
\end{theorem}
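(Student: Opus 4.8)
The plan is, for each of the two algorithms, to find the exact maximizer of the worst-case profile $\mathcal E_{p,\phi}(\tau) = 1 + \tau + \delta_{p,\phi}(\tau)$ from~\eqref{equa: evacuation time at tau}, and then read off $e_p$ as twice the maximizing parallel-search time and $\gamma_p$ as the corresponding robot distance. Since $\phi \in \{0,\pi/4\}$, Lemma~\ref{lem: sym trajectories} applies: after parallel search time $\tau$ the robots occupy $\rho_p(\phi+t)$ and $\rho_p(\phi-t)$, with $t$ fixed by requiring the arc $\rho_p(\phi)\to\rho_p(\phi+t)$ to have measure $\tau$; the explored portion of $\mathcal C_p$ is the union of these two arcs, of total measure $2\tau$. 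The first step is to switch to the parameterization $r_p$ of~\eqref{equa: alternative param}: writing $s$ for the value with $\rho_p(\phi+t) = r_p(s) = (-s,(1-|s|^p)^{1/p})$, the relevant axis of symmetry of $\mathcal C_p$ (Lemma~\ref{lem: symmetries}) yields $\delta_{p,\phi}$ in closed form. For $\phi = 0$, reflecting across $y = 0$ sends $\rho_p(t)$ to $\rho_p(-t)$, so the robots share an abscissa and $\delta_{p,0} = 2(1-|s|^p)^{1/p}$; for $\phi = \pi/4$, reflecting across $y = x$ sends $(-s,(1-|s|^p)^{1/p})$ to $((1-|s|^p)^{1/p},-s)$, so $\delta_{p,\pi/4} = 2^{1/p}\bigl|\,s+(1-|s|^p)^{1/p}\,\bigr|$.

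The second step is to express $\tau$ in terms of $s$. Differentiating the arc-length integral along $r_p$ gives $\ddd\tau/\ddd s = \norm{r_p'(s)}_p = \bigl(|s|^{p^2-p}(1-|s|^p)^{1-p}+1\bigr)^{1/p}$ --- exactly the integrand in the statement --- and integrating, together with the Section~2 identities that each coordinate-quadrant arc and the arc $\rho_p(\pi/4)\to\rho_p(3\pi/4)$ all have measure $\pi_p/2$, yields
$$
\tau(s) = \frac{\pi_p}{2} + \int_{c}^{s}\bigl(z^{p^2-p}(1-z^p)^{1-p}+1\bigr)^{1/p}\,\ddd z ,
$$
with $c = 0$ for the $\phi = 0$ algorithm and $c = 2^{-1/p}$ for the $\phi = \pi/4$ algorithm. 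Thus the evacuation cost, re-parameterized, is $1+\tau(s)+\delta_{p,\phi}(s)$, and it remains to maximize it over the admissible range of $s$.

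The core is then a one-variable optimization. On the initial stretch (where $s$ runs from $-1$ to $0$ if $\phi = 0$, resp. from $-2^{-1/p}$ to $2^{-1/p}$ if $\phi = \pi/4$) both $\tau(s)$ and $\delta_{p,\phi}(s)$ are increasing, so the maximizer lies strictly beyond it, where $\delta_{p,\phi}$ keeps a fixed sign. Writing $v := s^{p-1}(1-s^p)^{1/p-1}$, so that $\ddd\tau/\ddd s = (v^p+1)^{1/p}$, stationarity reads $(v^p+1)^{1/p} = 2v$ for $\phi = 0$, i.e. $v^p = 1/(2^p-1)$, which rearranges to $s = s_p = \bigl((2^p-1)^{1/(p-1)}+1\bigr)^{-1/p}$; and $(v^p+1)^{1/p} = 2^{1/p}(v-1)$ for $\phi = \pi/4$, i.e. $v^p+1 = 2(v-1)^p$, which under $v = 1/w$ becomes exactly $w^p+1 = 2(1-w)^p$, so $v = 1/w_p$ and $s = s_p = \bigl(w_p^{p/(p-1)}+1\bigr)^{-1/p}$. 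I would then verify this is the unique interior maximizer (the relevant derivative changes sign once, using monotonicity of $v$ in $s$ on $(0,1)$ and the shape of $g(v):=v^p+1-2(v-1)^p$), that it lands in the advertised sub-interval ($s_p < 2^{-1/p}$ for $\phi = 0$ amounts to $2^p>2$; $2^{-1/p} < s_p < 1$ for $\phi = \pi/4$ amounts to $0 < w_p < 1$, which follows from the intermediate value theorem since $w \mapsto w^p+1-2(1-w)^p$ has derivative $pw^{p-1}+2p(1-w)^{p-1}>0$ on $(0,1)$), and --- for $\phi = \pi/4$ --- that the tail in which a robot crosses into the lower half-plane gives nothing larger: there, by the center-of-symmetry of $\mathcal C_p$, the situation re-parameterizes to a nearly identical integral, and the analogous stationarity expression $(u^p+1)^{1/p} - 2^{1/p}(u+1)$ with $u \ge 0$ is negative throughout (since $(u+1)^p \ge u^p+1$ for $p \ge 1$), so the cost is already decreasing and cannot recover. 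Plugging $s_p$ back in gives $e_p = 2\tau(s_p)$ and $\gamma_p = \delta_{p,\phi}(s_p)$, matching the displayed formulas; and $e_p \in (\pi_p,2\pi_p]$ because $\tau(s_p)$ is an arc length of $\mathcal C_p$, hence $\le \pi_p$, while the added integral is strictly positive ($s_p>0$ when $\phi=0$, $s_p>2^{-1/p}$ when $\phi=\pi/4$).

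I expect the main obstacle to be the bookkeeping of the case analysis rather than any isolated estimate: pinning down the correct sub-interval for the maximizer, excluding both the monotone initial stretch and the lower-half tail, and --- for $\phi = \pi/4$ --- recognizing the not-obviously-solvable stationarity equation $v^p+1 = 2(v-1)^p$ as a disguised form of the defining equation of $w_p$ through the reciprocal substitution, all while keeping track of the sign constraints ($v>1 \Leftrightarrow w_p<1$) that make each rearrangement valid.
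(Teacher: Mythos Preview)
Your proposal is correct and follows essentially the same route as the paper: parameterize by $r_p$, use the axis symmetries to get closed-form $\delta_{p,\phi}$, reduce to a one-variable optimization via the Fundamental Theorem of Calculus with the substitution $v=s^{p-1}(1-s^p)^{1/p-1}$, and for $\phi=\pi/4$ rule out the lower-half tail by showing its derivative is everywhere negative. Your reciprocal substitution $v=1/w$ in the $\phi=\pi/4$ case is a welcome clarification (the paper's Lemma on the critical point of $f_1$ writes the stationarity condition directly as $w^p+1=2(1-w)^p$ with $w$ simultaneously denoting the variable $v\ge 1$ and the root $w_p\in(0,1)$, which is only consistent through exactly your substitution); one small slip is that the check ``$s_p<2^{-1/p}$ for $\phi=0$'' is unnecessary---the relevant interval there is all of $(0,1)$---and your function $g(v)$ pertains only to the $\phi=\pi/4$ uniqueness check, the $\phi=0$ case needing the (easier) monotonicity of $(1+v^{-p})^{1/p}$ instead.
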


Quantities $e_p, \gamma_p$, and some of their properties are depicted in Figures~\ref{fig: GammapEp},~\ref{fig: RelativeEp}, and discussed in Section~\ref{sec: visualizations}. One can also verify that 
$\lim_{p\rightarrow 2^-} e_p^- = \lim_{p\rightarrow 2^+} e_p^+=4\pi/3$, and that $\lim_{p\rightarrow 2^-} \gamma_p^- = \lim_{p\rightarrow 2^+} \gamma_p^+=\sqrt3$.
In order to justify that indeed $e_p \in (\pi_p, 2\pi_p]$, recall that by Lemma~\ref{equa: distance robots} robots' positions during the first $\pi_p/2$ search time (after robots reach perimeter in time 1) is an increasing function. Since the rate of change of time is constant (it remains strictly increasing) for the duration of the algorithm, it follows that the evacuation cost of our algorithms remains increasing till some additional search time. Since robots search in parallel and in different parts of $\mathcal C_p$, and since $e_p$ is the measure of the combined explored portion of the unit circle, it follows that for $e_p>2\pi_p/2=\pi_p$. At the same time, the unit circle has circumference $2\pi_p$, hence $e_p\leq 2\pi_p$.

In other words, 
$\gamma^-_p$ is the length of chord with endpoints on $\mathcal C_p$, $p \in (1,2]$, defining an arc of length $e^-_p$.
Similarly, $\gamma^+_p$ is the length of a chord with endpoints on $\mathcal C_p$, $p \in (2,\infty)$, defining an arc of length $e^+_p$.
Unlike the Euclidean unit disks, in $\ell_p$ unit disks, arc and chord lengths are not invariant under rotation. In other words , arbitrary chords of length $\gamma^-_p, \gamma^+_p$ do not necessarily correspond to arcs of length $e^-_p$, and $e^+_p$, respectively ,and vice versa.
The claim extends also to the $\ell_1, \ell_\infty$ spaces. For a simple example, consider points 
$
A=\rho_1(\pi/4)=(1/2,1/2), 
B=\rho_1(3\pi/4)=(-1/2,1/2), 
C=\rho_1(0)=(1,0), 
D=\rho_1(\pi/2)=(0,1)$.
It is easy to see that 
$d_p(A,B)=1$ and $d_p(C,D)=2$, while 
$\arccc d_p (A,B) = \arccc d_p (C,D)= \pi_1/2=2$, in other words two arcs of the same length identify chords of different length. 
We are motivated to introduce the following definition. 

\begin{definition}
\label{def: min chord for arc}
For every $p \in [1,\infty)$, and for every $u \in [0,2\pi_p)$, we define 
$$
\mathcal L_p(u) := \min_{A,B\in \mathcal C_p}\left\{ 
\norm{A-B}_p:  
~~\mu_p\left( \arccc{AB} \right)=u
\right\}.
$$
\end{definition}
In other words $\mathcal L_p(u)$ is the length of the shortest line segment (chord) with endpoints in $\mathcal C_p$ at arc distance $u$ (and corresponding to an arc of measure $u$), and hence $\mathcal L_p (u) = \mathcal L_p (2\pi_p -u)$ for every $u \in (0,2\pi_p)$. 
 As a special example, note that $\mathcal L_2(u)= 2\sin(u/2)$, as well as $\mathcal L_p(\pi_p)=2$, for all $p\in [1,\infty)$. 

\begin{lemma}
\label{lem: mathcal L monotonicity}
For every $p \in (1,\infty)$, function $\mathcal L_p(u)$ is increasing in $u \in [0,\pi_p]$.
\end{lemma}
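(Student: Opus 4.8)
\textbf{Proof proposal for Lemma~\ref{lem: mathcal L monotonicity}.}

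The plan is to fix the arc length $u \in [0,\pi_p]$ and understand, among all arcs of measure $u$ on $\mathcal C_p$, which placement minimizes the chord length, and then show that this minimum value grows with $u$. A natural first move is to parameterize: an arc of measure $u$ is determined by the position of its starting endpoint, say $A = \rho_p(\phi)$, with the other endpoint $B = \rho_p(\phi + \theta(\phi,u))$ where $\theta(\phi,u)$ is the (unique, by monotonicity of arc length) angular width needed to sweep measure $u$ starting from $\phi$. By the four axes of symmetry (Lemma~\ref{lem: symmetries}) and Lemma~\ref{lem: symmetric measures}, it suffices to let $\phi$ range over a fundamental domain, e.g. $\phi \in [0,\pi/4]$, and to combine this with the reflection $\mu_p(U)=\mu_p(\overline U)$ observation. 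So $\mathcal L_p(u) = \min_{\phi}\, d_p(\rho_p(\phi), \rho_p(\phi+\theta(\phi,u)))$.

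The key structural input I would isolate first — and this is the same fact the introduction advertises as the novel convex-geometry observation — is that the chord of fixed arc length $u$ is shortest precisely when it is parallel to a coordinate axis ($p\le 2$) or to a line $y=\pm x$ ($p\ge 2$), i.e. when $A,B$ are symmetric about one of the four axes of symmetry. Granting that (it is presumably Lemma~\ref{lem: min arc is what I used}, stated later, whose proof relies on the computer-assisted verification mentioned in the text), the minimizing configuration becomes explicit: for $p\le 2$, $\mathcal L_p(u)$ is realized by $A=\rho_p(\pi/2 - t)$, $B = \rho_p(\pi/2 + t)$ with $2\mu_p(\{\rho_p(\pi/2+s):s\in[0,t]\}) = u$, so that $A = (\cos_p(\pi/2-t),\sin_p(\pi/2-t)) = (s, (1-s^p)^{1/p})$ and $B=(-s,(1-s^p)^{1/p})$ in the alternative parameterization~\eqref{equa: alternative param}, giving $\mathcal L_p(u) = 2s$ where $s = s(t) \in [0, 2^{-1/p}]$ increases with $t$, hence with $u$. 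Monotonicity in $u$ is then immediate from monotonicity of $s(\cdot)$ and of the arc-length function. For $p\ge 2$ the symmetric configuration sits about the line $y=x$: $A = \rho_p(\pi/4 - t)$, $B=\rho_p(\pi/4+t)$, and one computes $d_p(A,B) = 2^{1/p}\,|(\cos_p(\pi/4-t) - \cos_p(\pi/4+t))|$ (the two coordinate differences have equal absolute value by the $y=x$ symmetry); writing this in terms of $s(t)$ one again gets an explicitly increasing function of $t$, and $t$ is increasing in $u$ because arc length accumulated from $\pi/4$ is increasing. In both regimes the endpoint of the range $u=\pi_p$ gives the antipodal-across-one-axis pair with chord length $2$, consistent with $\mathcal L_p(\pi_p)=2$ noted in the text.

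The main obstacle is therefore not the monotonicity computation itself — once the minimizing configuration is pinned down, it reduces to composing two monotone functions ($u \mapsto t$ and $t \mapsto s(t) \mapsto$ chord length) — but rather the justification that the axis-symmetric placement is the minimizer. That is exactly the delicate trade-off between arc and chord lengths that the paper handles separately (with numerical assistance for intermediate $p$), so in this lemma I would cite it as a black box. A secondary subtlety to handle cleanly: one must confirm that the parameterization $\rho_p$ fails to be differentiable only at the four points $\rho_p(k\pi/2)$, so that within a fundamental domain the relevant length integrals and the implicit function $\theta(\phi,u)$ are well-behaved (strictly increasing, continuously differentiable), which follows from the smoothness of $\rho_p$ on each open quadrant already recorded in the text. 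If one wanted to avoid invoking Lemma~\ref{lem: min arc is what I used}, an alternative would be a direct argument: show $d/du$ of the minimized chord is nonnegative by an envelope/Danskin-type argument on $\min_\phi d_p(\rho_p(\phi),\rho_p(\phi+\theta(\phi,u)))$, differentiating through the minimum; but this still seems to route back through identifying where the min is attained, so I expect the black-box route to be the intended and cleaner one.
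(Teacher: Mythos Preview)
Your approach differs from the paper's. The paper does \emph{not} give a rigorous proof of this lemma: it offers only a heuristic sketch (contrapositive: if $\mathcal L_p$ were decreasing on some interval, then for some fixed chord slope the chord length would have to decrease in the arc length, contradicting the intuition that longer arcs subtend longer chords once the direction is held fixed---the paper points to Lemma~\ref{equa: distance robots} as evidence for the two special slopes), and then states explicitly that the monotonicity of $\mathcal L_p(u)$ is verified numerically in \textsc{Mathematica} and illustrated in Figure~\ref{fig: LambdaP3Dx2}. No minimizing placement is identified in the course of that argument.

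Your route---pin down the minimizing placement first, then check monotonicity there---is more structured, and the final monotonicity computation you sketch is correct (minor slip: in the $p\le2$ case your $s$ should range over $[0,1]$, not $[0,2^{-1/p}]$, so that $u=\pi_p$ reaches the diameter). The genuine gap is the black box you invoke. Lemma~\ref{lem: min arc is what I used} only asserts $\mathcal L_p(e_p)=\gamma_p$, i.e.\ it locates the minimizing chord for the \emph{single} arc length $u=e_p$; the companion Lemmas~\ref{lem: min arc is what I used restate} and~\ref{lem: min arc is what I used restate x2} likewise concern only $\sigma_p$, which by definition fixes the arc length at $e_p$. The statement you actually need---that for \emph{every} $u\in[0,\pi_p]$ the chord of arc length $u$ is shortest in the axis-symmetric (for $p\le2$) or diagonal-symmetric (for $p\ge2$) configuration---is advertised in the introduction but is never isolated or verified as a lemma. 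Citing Lemma~\ref{lem: min arc is what I used} (which moreover appears \emph{after} the present lemma in the paper's logical order) does not supply it. To close the gap along your lines you would need a computer-assisted verification over a two-parameter family (arc length and tangential angle), which is no cheaper than the paper's direct numerical check of $\mathcal L_p(u)$ itself.
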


The intuition behind Lemma~\ref{lem: mathcal L monotonicity} is summarized in the following proof sketch. Assuming, for the sake of contradiction, that the lemma is false, there must exist an interval of arc lengths, and some $p\geq 1$ for which $\mathcal L_p(u)$ is strictly decreasing. By first-order continuity of $\norm{A-B}_p$, and in the same interval of arc-lengths, chord $\norm{A-B}_p$ must be decreasing in $\mu_p\left( \arccc{AB} \right)$ even when points $A,B$ are conditioned to define a line with a fixed slope (instead of admitting a slope that minimizes the chord length). However, the last statement gives a contradiction (note that the chord length is maximized at the diameter, when the corresponding arc has length $\pi_p$). Indeed, consider points $A,B,A',B'$ such that $\arccc d_p (A,B)=u, \arccc d_p (A',B')=u'$, with $u<u'\leq \pi_p$. It should be intuitive that $d_p(A,B) \leq d_p (A',B) \leq d_p (A',B')$
(note that the proof of Lemma~\ref{equa: distance robots} shows the monotonicity of $\delta_{p,0}(\tau), \delta_{p,\pi/4}(\tau)$, which is used in the previous statement when $AB,A'B'$ are either parallel to the line $x=0$ or parallel to the line $y=-x$).

For fixed values of $p$, Lemma~\ref{lem: mathcal L monotonicity} can also be verified with confidence of at least 6 significant digits in MATHEMATICA. 
Due to precision limitations, the values of $p$ cannot be too small, neither too big, even though a modified working precision can handle more values of $p$. 
With standard working precision, any $p$ in the range between $1.001$ and $45$ can be handled within a few seconds. 
As we argue later, for large values of $p$, Lemma~\ref{lem: mathcal L monotonicity} bears less significance, since in that case we have an alternative way to prove the (near) optimality of algorithms Wireless-Search$_p$($\phi$), as per Theorem~\ref{thm: upper bound best algo}.
Next we provide a visual analysis of function $\mathcal L_p$ that effectively justifies Lemma~\ref{lem: mathcal L monotonicity}, see Figure~\ref{fig: LambdaP3Dx2}.

\begin{figure}[h!]
  \centering
  \includegraphics[width=0.9\linewidth]{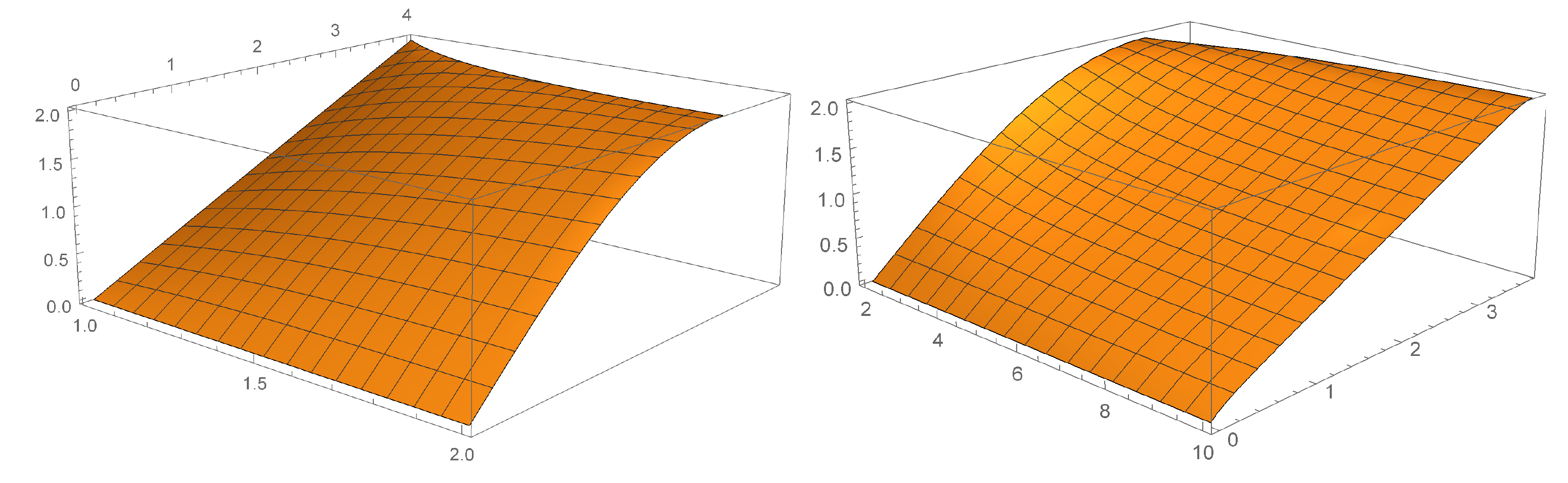}
\caption{
Figures depict $\mathcal L_p(u)$ for various values of $p$ and for $u \in [0,\pi_p]$. 
Left-hand side figure shows increasing function $\mathcal L_p(u)$ for $p\in (1,2]$. 
Right-hand side figure shows increasing function $\mathcal L_p(u)$ for $p\in [2,10]$. 
Recall that $\mathcal L_2(u)= 2\sin(u/2)$, $\mathcal L_p(\pi_p)=2$, for all $p\in [1,\infty)$, as well as that $\pi_1=\pi_\infty=4$ and $\pi_p<4$ for $p\in (1,\infty)$. 
}
\label{fig: LambdaP3Dx2}
\end{figure}

\section{Visualization of Key Concepts and Results}
\label{sec: visualizations}

In this section we provide visualizations of some key concepts, along with visualizations of our results. The Figures are referenced in various places in our manuscript but we provide self-contained descriptions.

\begin{figure}[h!]
\begin{subfigure}[t]{0.45\textwidth}
\centering
\includegraphics[width =2.3in]{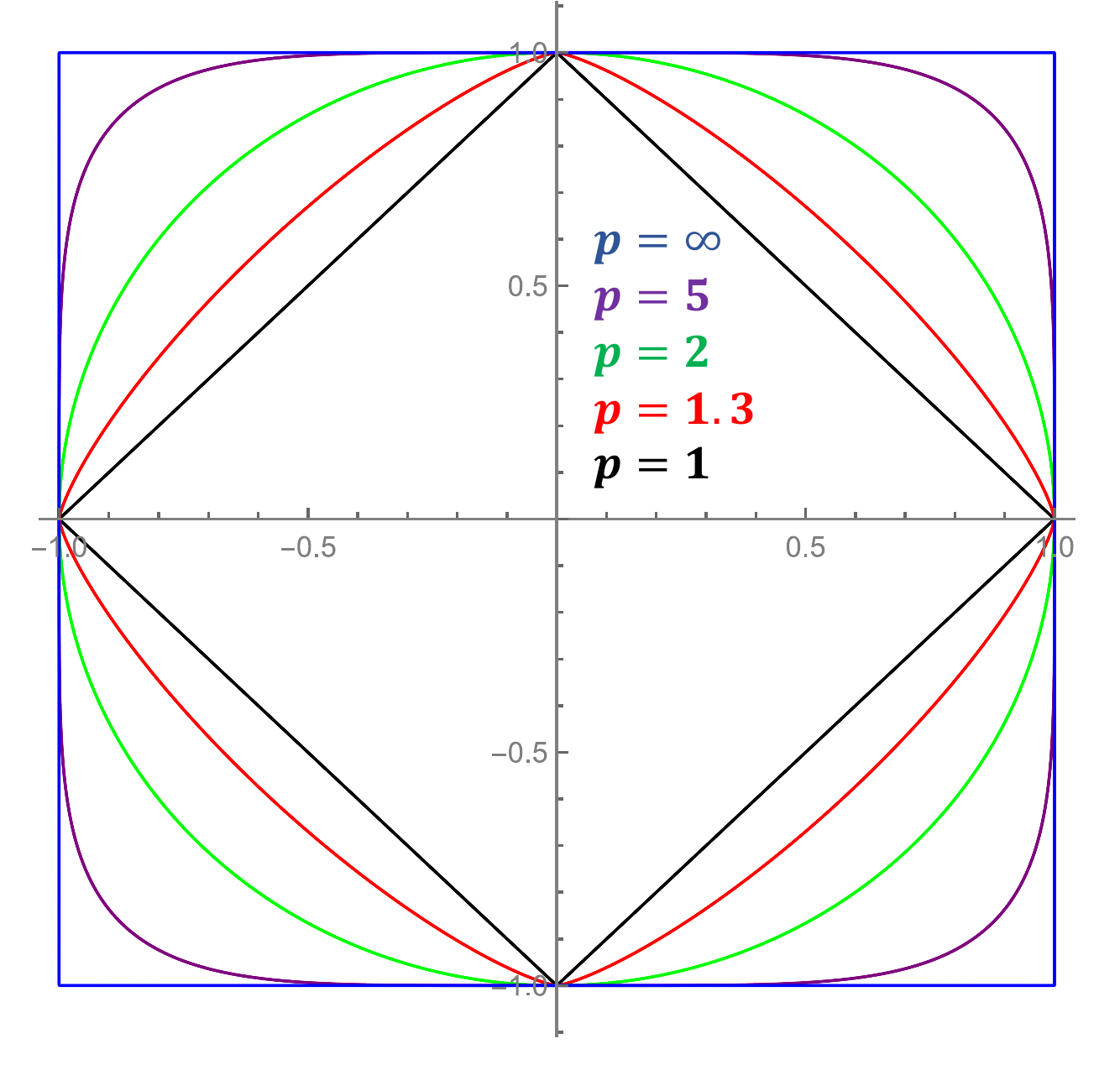}
              \caption{Unit circles $\mathcal C_p$, for $p=1,1.3,2,5,\infty$, induced by the $\ell_p$ norm.}
              \label{fig: UnitCircles}
\end{subfigure}\hfill
\begin{subfigure}[t]{0.5\textwidth }
\centering
\includegraphics[width =2.9in]{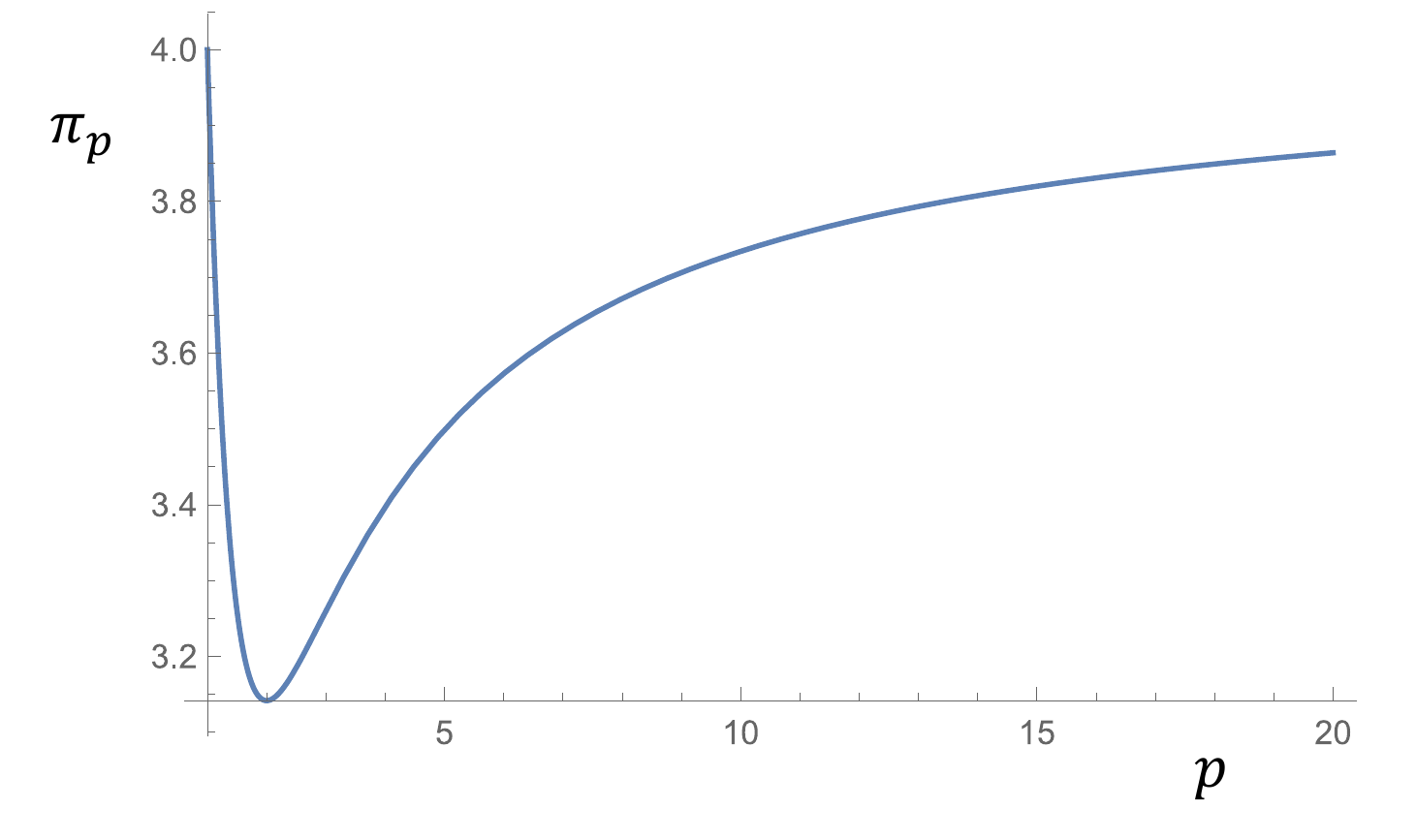}
              \caption{The behavior of $\pi_p$ as $p$ ranges from 1 up to $\infty$, where $\pi_1=\pi_\infty=4$ and $\pi_2=\pi$ is the smallest value of $\pi_p$.}
              \label{fig: ValueOfPi}
\end{subfigure}
\caption{}
\end{figure}

Figures~\ref{fig: GammapEp} and~\ref{fig: RelativeEp} depict quantities pertaining to algorithm Wireless-Search$_p$($\phi$) (where $\phi=0$, if $p\in [1,2)$ and $\phi=\pi/4$, if $p\in (2,\infty)$) for the placement of the hidden exit inducing the worst case cost. 
Moreover Figure~\ref{fig: GammapEp}
depicts quantities $e_p/2, \gamma_p$, as per Theorem~\ref{thm: explored optimal algo}. In particular, for each $p$, quantity $e_p/2$ is the time a searcher has spent searching the perimeter of $\mathcal C_p$ till the hidden exit is found (in the worst case).
Therefore, $e_p$ represents the portion of the perimeter that has been explored till the exit is found.               
              Quantity $\gamma_p$ is the distance of the two robots at the moment the hidden exit is found so that the total cost of the algorithm is $1+e_p/2+\gamma_p$. 
              By~\cite{CGGKMP} we know that $e_2=4\pi/3$ and $\gamma_2=\sqrt3$. 
              Our numerical calculations also indicate that 
$\lim_{p\rightarrow 1} e_p=12/5$,
$\lim_{p\rightarrow 1} \gamma_p=8/5$, and
$\lim_{p\rightarrow \infty} e_p=\lim_{p\rightarrow \infty} \gamma_p=2$.

              Figure~\ref{fig: RelativeEp} depicts quantities $e_p/2\pi_p$, which equals the explored portion of the unit circle $\mathcal C_p$, relative to its circumference, of Algorithm Wireless-Search$_p$($\phi$) (where $\phi=0$, if $p\in [1,2)$ and $\phi=\pi/4$, if $p\in (2,\infty)$)
              when the worst case cost inducing exit is found. 
                By~\cite{CGGKMP} we know that $e_2/2\pi_2=(4\pi/3)/2\pi=2/3$. 
                Interestingly, quantity $e_p/2\pi_p$ is maximized when $p=2$, that is in the Euclidean plane searchers explore the majority of the circle before the exit is found, for the placement of the exit inducing worst case cost. 
Also, numerically we obtain that
$\lim_{p\rightarrow 1} e_p/2\pi_p=3/5$, and
$\lim_{p\rightarrow \infty} e_p/2\pi_p=1/2$. 
The reader should contrast the limit valuations with Lemma~\ref{lem: upper bound 1 and infty} according which in both cases $p=1,\infty$ the cost of our search algorithms is constant and equal to 5 for all placements of the exit that are found from the moment searchers have explored half the unit circle and till the entire circle is explored.

\begin{figure}[h!]
\begin{subfigure}[t]{0.47\textwidth}
\centering
\includegraphics[width =3in]{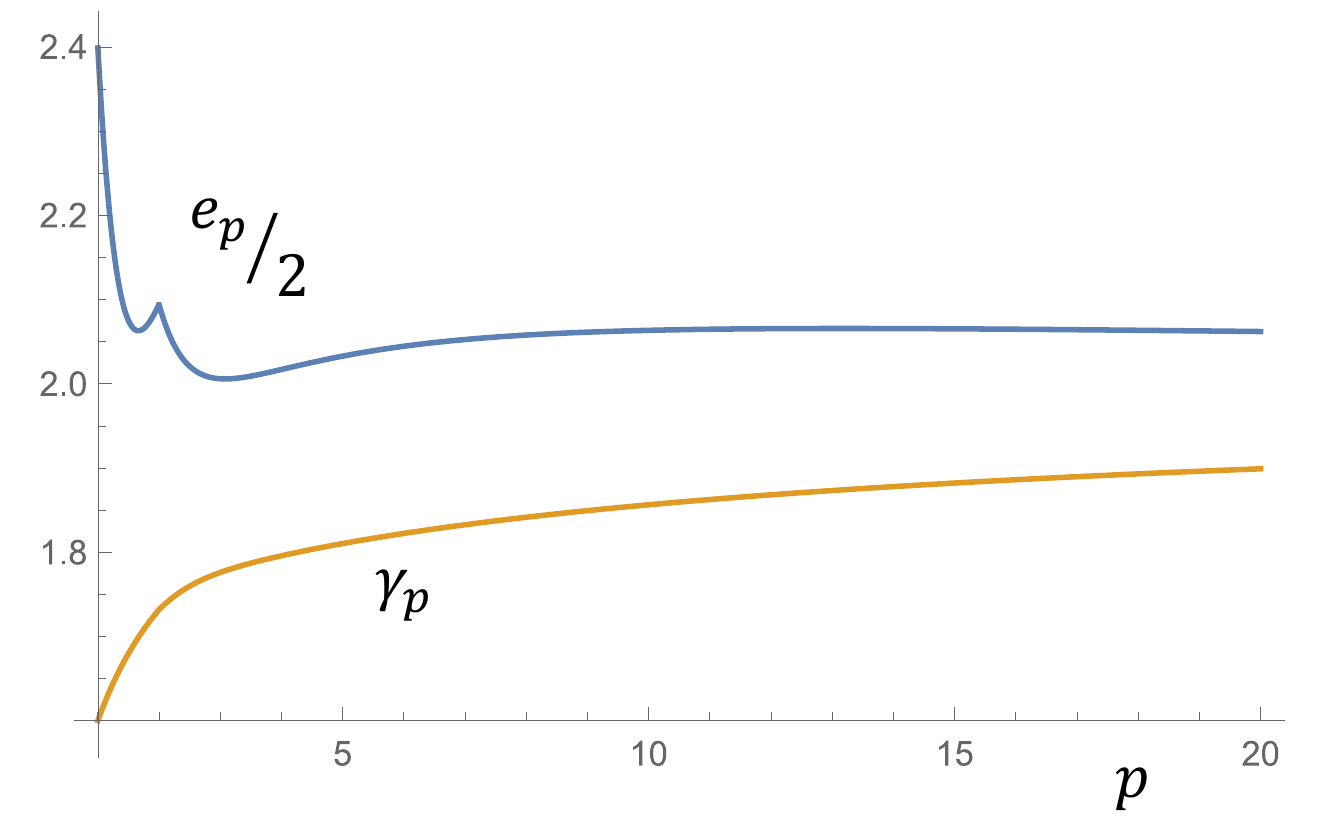}
              \caption{
Perimeter search time $e_p/2$ and distance $\gamma_p$ between searchers when worst case cost inducing exit is found as a function of $p$, see also Theorem~\ref{thm: explored optimal algo}. 
              }
              \label{fig: GammapEp}
\end{subfigure}\hfill
\begin{subfigure}[t]{0.47\textwidth }
\centering
\includegraphics[width =2.9in]{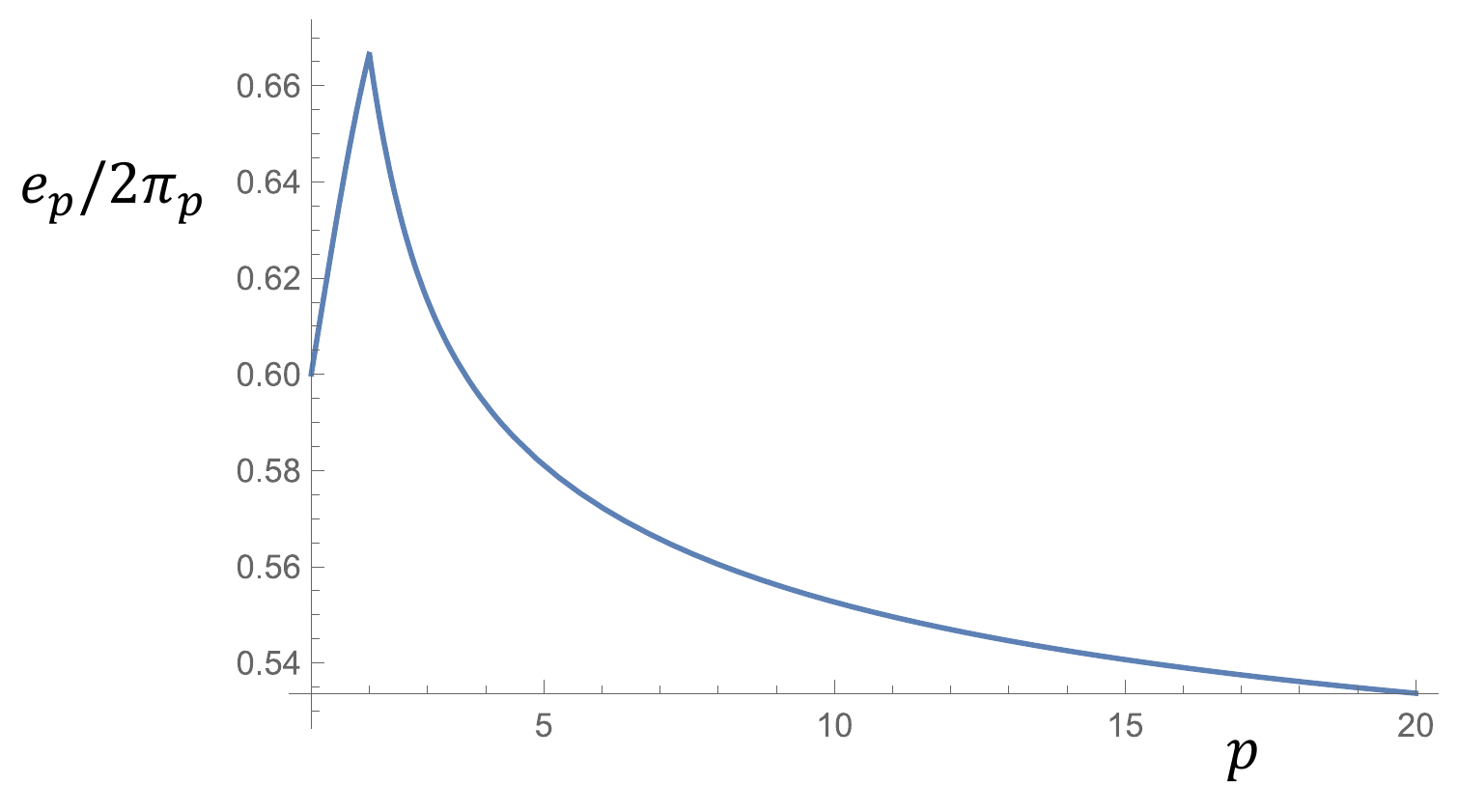}
              \caption{
Explored portion $e_p/2\pi_p$ as a function of $p$. 
}
              \label{fig: RelativeEp}
\end{subfigure}
\caption{}
\end{figure}

Figure~\ref{fig: AlgoPerformance} shows the worst case performance analysis of Algorithm Wireless-Search$_p$($\phi$) (where $\phi=0$, if $p\in [1,2)$), which is also optimal for problem \textsc{WE}$_p$.
As per Lemma~\ref{lem: upper bound 1 and infty}, the evacuation cost is 5 for $p=1$ and $p=\infty$. 
The smallest (worst case) evacuation cost when $p\in [1,2]$ is $4.7544$ and is attained at $p\approx 1.5328$.
The smallest (worst case) evacuation cost when $p\in [2,\infty]$ is $4.7784$ and is attained at $p\approx 2.6930$.
As per~\cite{CGGKMP}, the cost is $1+\sqrt3+2\pi/3 \approx 4.82644$ for the Euclidean case $p=2$.

\begin{figure}[h!]
\begin{subfigure}[t]{0.43\textwidth}
\centering
\includegraphics[width =2.1in]{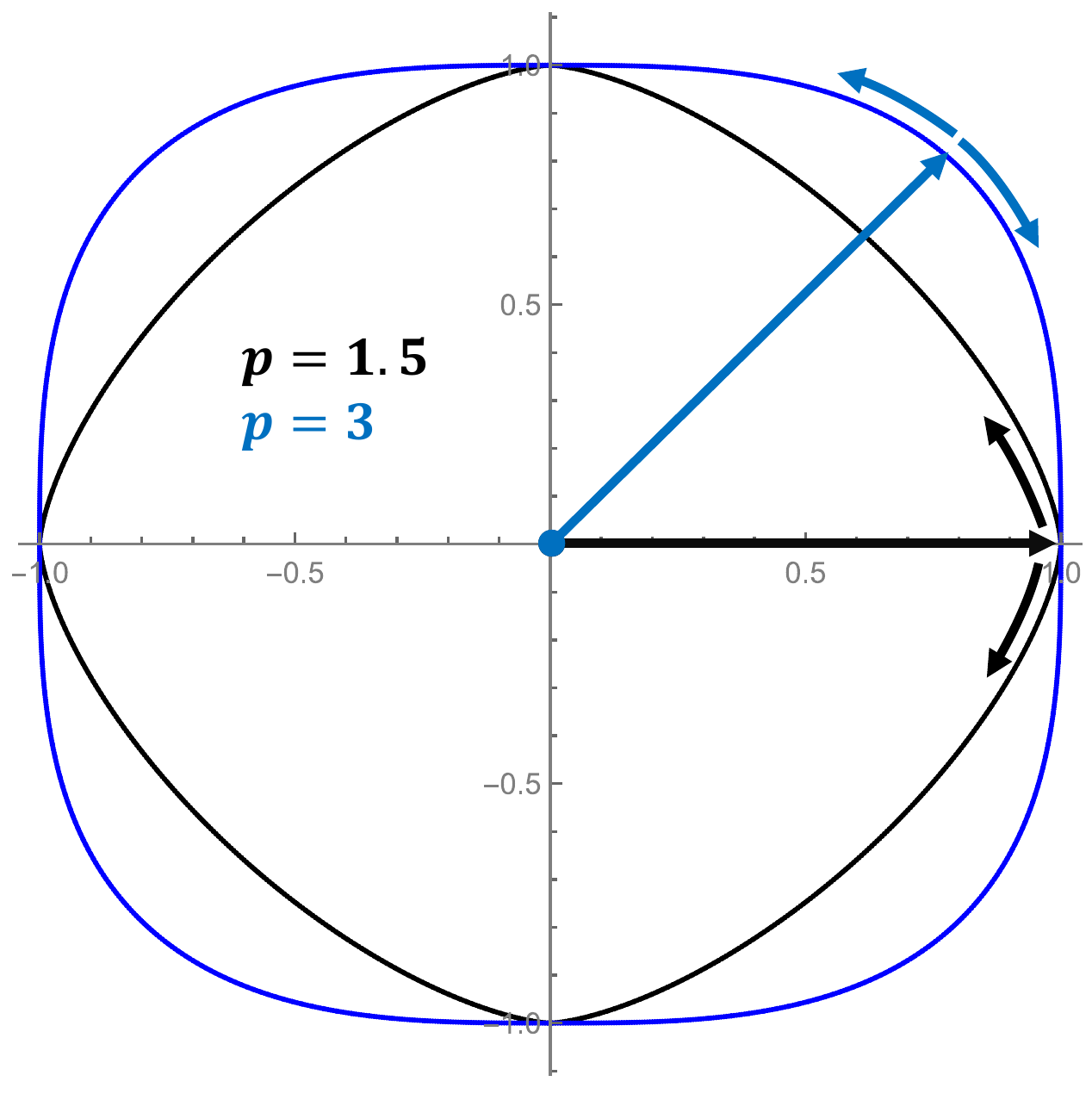}
              \caption{
              Figure depicts robots' trajectories for 
              algorithms Wireless-Search$_{1.5}$(0) and Wireless-Search$_3$($\pi/4$).
              }
              \label{fig: SearchingTwoCircles}
\end{subfigure}\hfill
\begin{subfigure}[t]{0.52\textwidth }
\centering
\includegraphics[width =2.6in]{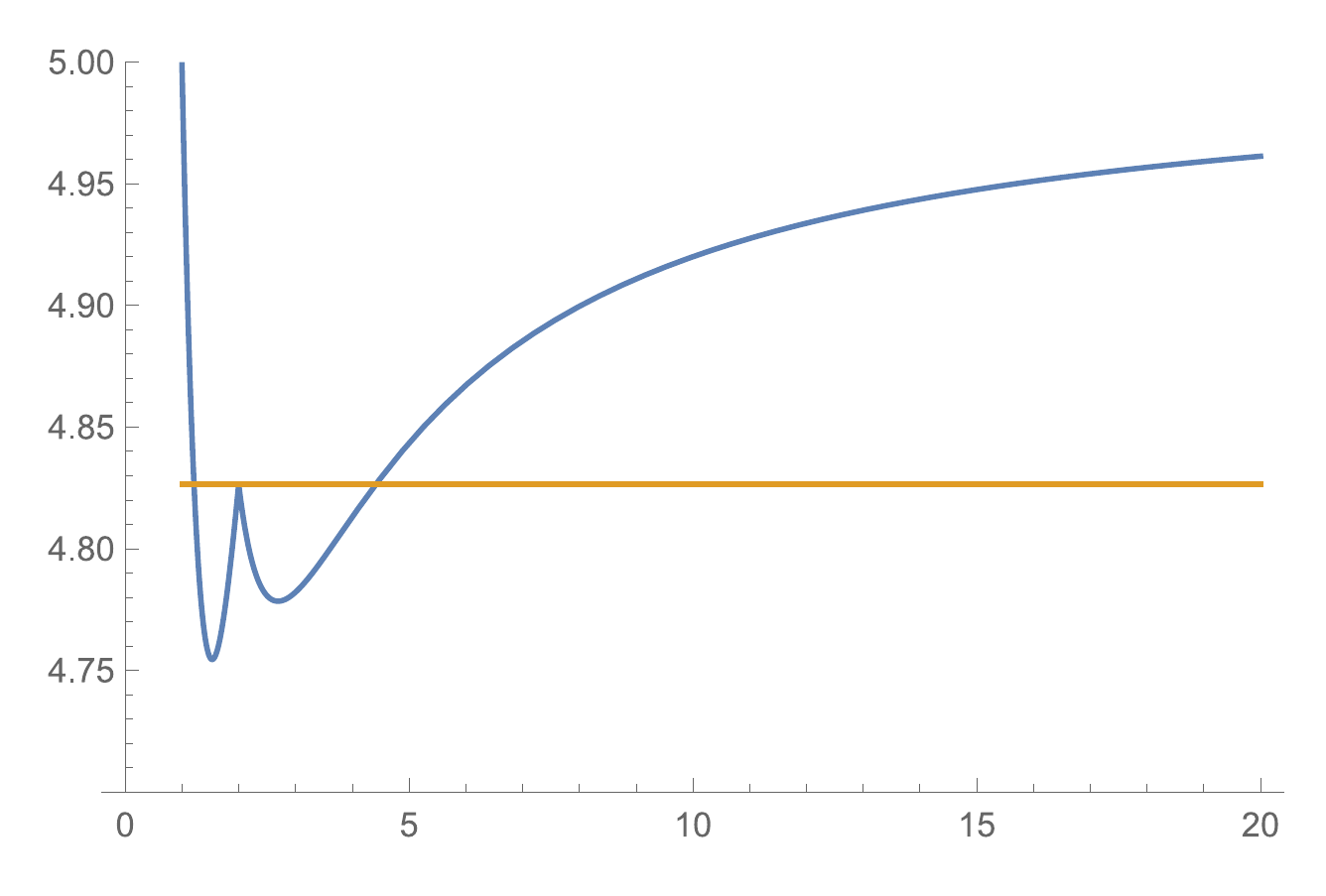}
              \caption{
Blue curve depicts the worst case evacuation cost of Algorithm Wireless-Search$_p$($\phi$), where $\phi=0$, if $p\in [1,2)$, as a function of $p$ . 
Yellow line is the optimal evacuation cost in the Euclidean metric space. 
              }
              \label{fig: AlgoPerformance}
\end{subfigure}
\caption{}
\end{figure}

\section{Lower Bounds \& the Proof of Theorem~\ref{thm: upper bound best algo}}
\label{sec: lower bounds}

First we prove a weak lower bound that holds for all $\ell_p$ spaces, $p\geq 1$ (see also Figure~\ref{fig: ValueOfPi} for a visualization of $\pi_p$). 

\begin{lemma}
\label{sec: lower bound weak}
For every $p \in [1,\infty)$, the optimal evacuation cost of \textsc{WE}$_p$ is at least $1+\pi_p$. 
\end{lemma}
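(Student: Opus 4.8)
The plan is to run a standard adversarial covering argument that exploits the defining feature of the wireless model: until the exit is discovered, the robots' trajectories are fixed and independent of the exit's location. Fix an arbitrary evacuation algorithm and let $T$ be its cost; if $T=\infty$ there is nothing to prove, so assume $T<\infty$. Until the exit is found, the two robots follow fixed trajectories $r_1,r_2$ that do not depend on the exit location $P$. For any placement $P\in\mathcal C_p$, the finder must physically reach $P$, and the earliest this can occur is the first time $\tau(P)$ at which $r_1$ or $r_2$ passes through $P$; since the evacuation time with exit $P$ is at least $\tau(P)$, we must have $\tau(P)\le T$ for every $P\in\mathcal C_p$. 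Equivalently, by time $T$ the two default trajectories have jointly visited every point of $\mathcal C_p$.

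The core step is then the following \emph{exploration bound}: in time $T$ a single unit-speed robot starting at the center can visit a subset of $\mathcal C_p$ of $\mu_p$-measure at most $T-1$. Two elementary geometric facts do the work. First, every point of $\mathcal C_p$ is at $\ell_p$-distance exactly $1$ from the origin (it has $\ell_p$-norm $1$), so no point of a unit-speed trajectory starting at the origin lies on $\mathcal C_p$ before time $1$. Second, a point moving along $\mathcal C_p$ accrues $\mu_p$-arc-length at a rate equal to its instantaneous $\ell_p$-speed: writing the traversed portion of the curve as $\rho_p(\phi(t))$, its $\ell_p$-speed is $|\phi'(t)|\,\norm{\rho_p'(\phi(t))}_p$, which is exactly the integrand defining $\mu_p$ (equivalently, this follows from convexity of the unit ball, since a chord of $\mathcal C_p$ is never longer than the subtended arc). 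Hence on any maximal time interval during which the robot stays on $\mathcal C_p$, the $\mu_p$-measure of the arc it covers there is at most the trajectory length spent there; summing over all such intervals and subtracting the first unit of time needed to reach $\mathcal C_p$ at all yields the bound $T-1$. Applying this to both robots, the total measure of $\mathcal C_p$ visited by time $T$ is at most $2(T-1)$.

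Combining the two parts, all of $\mathcal C_p$, which has $\mu_p$-measure $\mu_p(\mathcal C_p)=2\pi_p$, is visited by time $T$, so $2\pi_p\le 2(T-1)$, that is $T\ge 1+\pi_p$. As the algorithm was arbitrary, the optimal evacuation cost of \textsc{WE}$_p$ is at least $1+\pi_p$.

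The only point that needs a little care is the exploration bound when a robot's trajectory repeatedly leaves and re-enters $\mathcal C_p$ (and may sweep some arcs more than once): one writes the set of times the robot lies on $\mathcal C_p$ as a closed subset of $[0,T]$, bounds the covered arc-length on each of its maximal subintervals by the distance travelled there, and notes that for $p\in\{1,\infty\}$ the finitely many corners of $\mathcal C_p$ have zero $\mu_p$-measure and hence do not affect the estimate. I expect this bookkeeping, rather than any genuinely hard step, to be the main obstacle; the geometry and the adversary argument are otherwise routine.
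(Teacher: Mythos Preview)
Your argument is correct and follows essentially the same route as the paper's proof: an adversarial covering argument using that robots need time at least $1$ to reach $\mathcal C_p$ and can jointly sweep at most $2(T-1)$ of its $2\pi_p$ perimeter by time $T$. The paper presents this tersely via an $\epsilon$-argument (at time $1+\pi_p-\epsilon$ some point is still unexplored), whereas you phrase it contrapositively and spell out the exploration bound more carefully; the substance is the same.
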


\begin{proof}
The circumference of $\mathcal C_p$ has length $2\pi_p$. 
Two unit speed robots can reach the perimeter of $\mathcal C_p$ in time at least 1. 
Since they are searching in parallel, in additional time $\pi_p-\epsilon$, they can only search at most $2\pi_p-2\epsilon$ measure of the circumference. 
Hence, there exists an unexplored point $P$. 
Placing the exit at $P$ shows that the evacuation time is at least $1+\pi_p - 2\epsilon$, for every $\epsilon>0$. 
\qed \end{proof}

In particular, recall that $\pi_1=\pi_\infty=4$, and hence no evacuation algorithm for \textsc{WE}$_1$ and \textsc{WE}$_\infty$  has cost  less than 5. 
As a corollary, we obtain that 
Algorithms Wireless-Search$_1$(0) and Wireless-Search$_\infty$($\pi/4$) are optimal, hence proving the special cases $p=1,\infty$ of Theorem~\ref{thm: upper bound best algo}. The remaining cases require a highly technical treatment. 

The following is a generalization of a result first proved in~\cite{CGGKMP} for the Euclidean metric space (see Lemma 5 in the Appendix of the corresponding conference version). The more general proof is very similar. 

\begin{lemma}
\label{lem: arc chord points}
For every $V \subseteq \mathcal C_p$, with $\mu_p(V) \in (0,\pi_p]$, and for every small $\epsilon >0$, there exist $A,B \in V$ with $\arccc d_p (A,B) \geq \mu_p(V) - \epsilon$. 
\end{lemma}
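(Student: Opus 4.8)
The plan is to argue by a measure-theoretic averaging (pigeonhole) argument over rotations of a diametrically-opposed pair of points, exploiting the central symmetry of $\mathcal C_p$ established in Lemma~\ref{lem: symmetries}. Write $V \subseteq \mathcal C_p$ with $a := \mu_p(V) \in (0,\pi_p]$. For a point $X = \rho_p(t) \in \mathcal C_p$, let $\barr X := \rho_p(t+\pi)$ denote its antipode; recall from the discussion after Lemma~\ref{lem: symmetric measures} that $\mu_p(\barr U) = \mu_p(U)$ for every $U$. The idea is that the two arcs joining $X$ to $\barr X$ each have measure $\pi_p$, and as $X$ sweeps the circle these arcs sweep across $V$; since $\mu_p(V) = a$, on average a ``large'' sub-arc of each of these half-circles lies inside $V$, and the two endpoints of such a sub-arc are the desired $A, B$.

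First I would set up the counting integral. For each $t \in [0, 2\pi_p)$ (parameterizing $\mathcal C_p$ by arc length rather than by the angle $\phi$, to make the measure uniform), consider the open half-circle $H_t$ consisting of the arc of measure $\pi_p$ starting at the arc-length-$t$ point and going counter-clockwise; its two endpoints are antipodal. I want to show there is some $t$ for which $H_t \cap V$ contains two points at arc distance $\geq a - \epsilon$. Suppose not: then for every $t$, every pair of points of $V$ lying in the closed half-circle $\barr{H_t}$ has arc distance $< a - \epsilon$, which forces $V \cap \barr{H_t}$ to be contained in an arc of length $\leq a - \epsilon$ (here is where I use that arc distance restricted to a half-circle is just the difference of arc-length coordinates, since any two points of a half-circle are at arc distance at most $\pi_p$, realized by the sub-arc inside that half-circle). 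Hence $\mu_p(V \cap \barr{H_t}) \leq a - \epsilon$ for all $t$. Now integrate over $t \in [0, 2\pi_p)$: by Fubini/Tonelli, $\int_0^{2\pi_p} \mu_p(V \cap \barr{H_t})\, dt = \int_0^{2\pi_p} \left( \int_{\mathcal C_p} \mathbf 1[x \in V]\, \mathbf 1[x \in \barr{H_t}]\, d\mu_p(x) \right) dt = \int_{\mathcal C_p} \mathbf 1[x\in V] \left( \int_0^{2\pi_p} \mathbf 1[x \in \barr{H_t}]\, dt \right) d\mu_p(x)$. The inner integral is exactly $\pi_p$ for every fixed $x$ (a fixed point lies in a half-circle $\barr{H_t}$ for a set of $t$'s of total length $\pi_p$), so the double integral equals $\pi_p \cdot \mu_p(V) = \pi_p\, a$. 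On the other hand the ``suppose not'' bound gives the double integral is $\leq 2\pi_p(a - \epsilon) = 2\pi_p a - 2\pi_p \epsilon$. Combining, $\pi_p a \leq 2\pi_p a - 2\pi_p\epsilon$, i.e. $\pi_p a \geq 2\pi_p \epsilon$, which fails once $\epsilon > a/2$ — but I need it to fail for \emph{small} $\epsilon$, so this crude version is not quite enough and needs refinement.

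The fix, and what I expect to be the delicate point, is to be more careful about how small a sub-arc of $V$ can be forced inside a half-circle: rather than bounding $\mu_p(V \cap \barr{H_t}) \le a-\epsilon$, I should track that $V$ has total measure $a \le \pi_p$, so a half-circle of measure $\pi_p$ either contains $V$ almost entirely or there is a rotation making it do so. Concretely, since $\mu_p(V) = a \le \pi_p = \frac12 \mu_p(\mathcal C_p)$, an averaging argument shows there exists $t^\star$ with $\mu_p(V \cap \barr{H_{t^\star}}) \ge \mu_p(V) - \epsilon/2$ for any desired slack (indeed the average of $\mu_p(V \cap \barr{H_t})$ over $t$ is $a/2$, but by choosing $\barr{H_{t^\star}}$ to be the half-circle whose ``center'' is a point of $V$ of maximal local density, or simply by a continuity/intermediate-value argument as $t$ varies one unit around the circle, one gets a half-circle capturing all but an $\epsilon/2$-fraction of $V$ — this is where the hypothesis $a \le \pi_p$ is essential, it would be false for $a$ slightly above $\pi_p$). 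Inside that half-circle the two extreme points $A, B$ of $V \cap \barr{H_{t^\star}}$ (in the arc-length coordinate) satisfy $\arccc d_p(A,B) \ge \mu_p(V \cap \barr{H_{t^\star}}) \ge a - \epsilon$, which completes the proof. I would structure the writeup so that the continuity-based extraction of the ``good'' half-circle is isolated as the one technical claim; since the excerpt says the proof is ``very similar'' to the Euclidean one in~\cite{CGGKMP}, I would follow that template and mainly check that central symmetry (Lemma~\ref{lem: symmetries}) and the measure identities after Lemma~\ref{lem: symmetric measures} are all that the Euclidean argument actually used, so nothing about the $\ell_p$ geometry beyond these symmetries is needed.
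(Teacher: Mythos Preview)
Your ``fix'' has a genuine gap. The claim that some half-circle $\barr{H_{t^\star}}$ satisfies $\mu_p(V\cap \barr{H_{t^\star}})\ge \mu_p(V)-\epsilon/2$ is simply false. Take $V$ to be the union of two arcs of length $a/2$ centered at a pair of antipodal points. Then, as you yourself compute, the average of $\mu_p(V\cap \barr{H_t})$ over $t$ equals $a/2$; but in this example that average is also the \emph{maximum}: every closed half-circle meets $V$ in measure exactly $a/2$. No continuity or intermediate-value trick can produce a value larger than the global maximum. Your extreme-point step would then output two points at arc distance about $a/2$, which is not $\ge a-\epsilon$. (The lemma still holds for this $V$ because it contains nearly antipodal points at arc distance $\pi_p\ge a$, but your argument does not detect them.) More generally, whenever $V$ is spread roughly symmetrically around the circle, the half-circle capture fraction stays bounded away from $1$, so the method cannot close the gap between $a/2$ and $a-\epsilon$.

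The paper's argument is structurally different and does not try to trap $V$ in a single half-circle. It proceeds by contradiction: assuming all pairs in $V$ are at arc distance $<u:=\mu_p(V)-\epsilon$, it first observes that the antipodal set $N$ of $V$ is disjoint from $V$ (else two antipodes in $V$ give arc distance $\pi_p\ge u$), hence $N\subseteq V^\complement$ with $\mu_p(N)=\mu_p(V)$. It then fixes $A\in V$, notes that the arc of points at arc distance $\ge u$ from $A$ lies in $V^\complement$, extends this to a maximal excluded arc $\arccc{TR}\subseteq V^\complement$, and picks points $A',A''\in V$ just outside its endpoints. The ``forbidden'' arcs for $A'$ and $A''$ (points at arc distance in $[u,\pi_p]$ from each) together with $N$ are then shown, via inclusion--exclusion, to have total $\mu_p$-measure exceeding $\mu_p(V^\complement)$, a contradiction. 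The antipodal symmetry (Lemma~\ref{lem: symmetries}) and the induced measure identity are indeed the only $\ell_p$-specific ingredients, as you anticipated, but the combinatorial structure of the argument is what you are missing.
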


\begin{proof}

For the sake of contradiction, consider some $V \subseteq \mathcal C_p$, with $\mu_p(V) \in (0,\pi_p]$, and some small $\epsilon >0$, such no two points both in $V$ have arc distance at least $\mu_p(V) - \epsilon$. Below we denote the latter quantity by $u$, and note that $u \in (0, \pi_p)$, as well as that $\mu_p(V)=u+\epsilon>u$. We also denote by $V^\complement$ the set $\mathcal C_p \setminus V$. 
The argument below is complemented by Figure~\ref{fig: ArcChordPoints}.
\begin{figure}[t]
  \centering
  \includegraphics[width=0.6\linewidth]{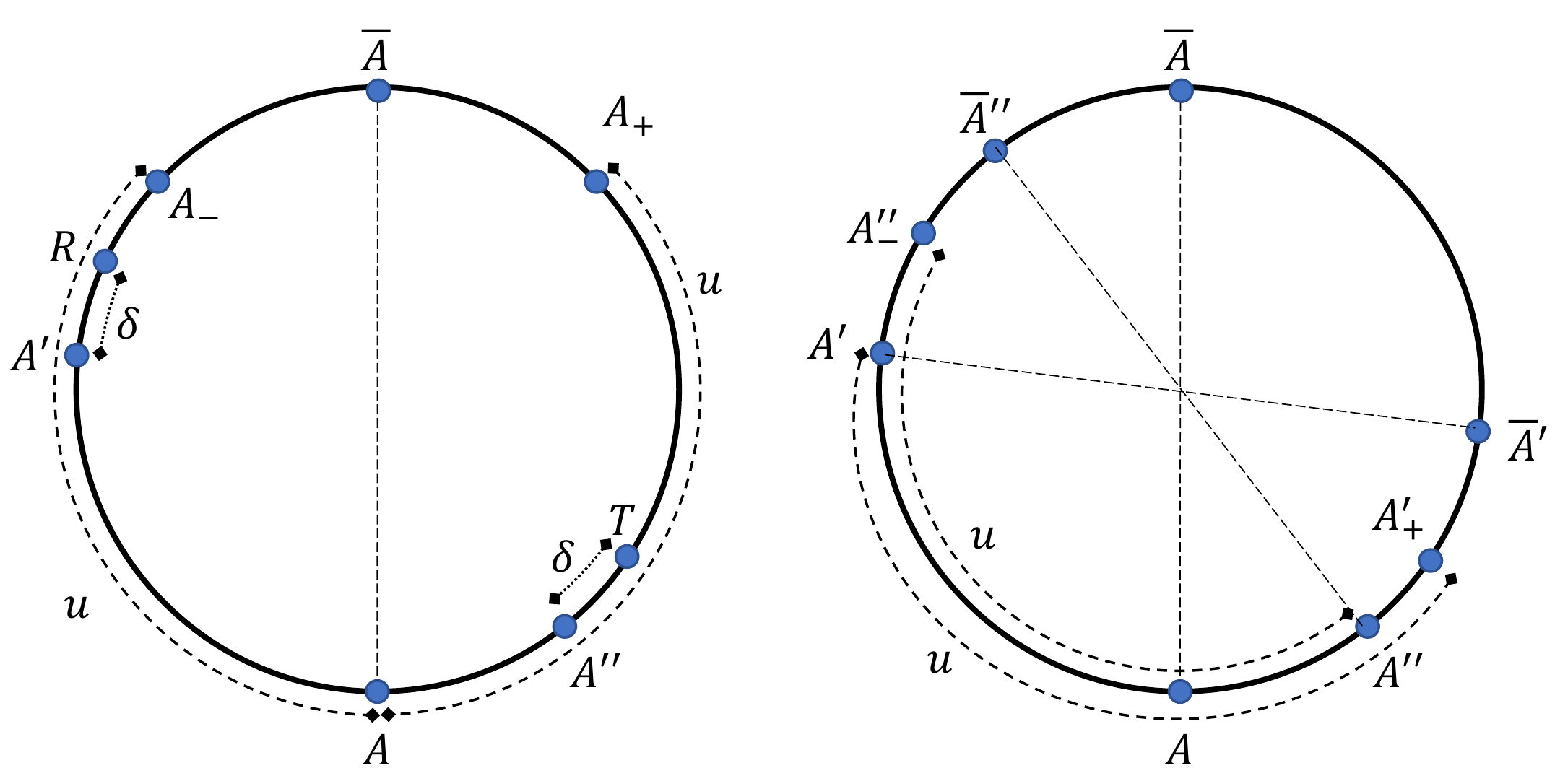}
\caption{
An abstract $\ell_p$ unit circle, for some $p\geq 1$, depicted as a Euclidean unit circle for simplicity. 
On the left we depict points $A, \overline A, A_-,A_+,A',A'',T,R$. On the right we keep only the points relevant to our final argument, and add also points $A_-'',A_+', \overline{A}', \overline{A}''$. }
\label{fig: ArcChordPoints}
\end{figure}

Since $V$ is non-empty, we consider some arbitrary $A \in V$. We define the set of \emph{antipodal} points of $V$
$$
N := \{ B \in \mathcal C_p: \exists C \in V, \arccc d_p (B,C) = \pi_p \}
$$
Note that $N\cap V = \emptyset$ as otherwise we have a contradiction, i.e. two points in $V$ with arc distance $\pi_p > u= \mu_p(V) - \epsilon$. In particular, we conclude that $N \subseteq V^\complement$, and hence by Lemma~\ref{lem: symmetries} we have $\mu_p(N) = \mu_p(V)=u+\epsilon$. 

Let $\barrrr{A}$ be the point antipodal to $A$, i.e. $\arccc d_p(A,\barrrr A) = \pi_p$. 
Next, consider points $A_-, A_+ \in \mathcal C_p$ at anti-clockwise and clockwise arc distance $u$ from $A$, that is $\arccc d_p (A,A_-) =\arccc d_p (A_+,A)=u$. All points in $\arccc{A_+A_-}$ are by definition at arc distance at least $u$ from $A$.
In particular, $\barrrr A \in \arccc{A_+A_-}$ and $A_- \in \arccc{\barrrr{A}A}, A_+ \in \arccc{A\barrrr{A}}$. 
 We conclude that $V \cap \arccc{A_+A_-} = \emptyset$, as otherwise we have $A\in V$ together with some point in $V \cap \arccc{A_+A_-}$ make two points with arc distance at least $u$. Note that this implies also that $\arccc{A_+A_-} \subseteq V^\complement$. 

Consider now the minimal, inclusion-wise, arc $\arccc{TR} \subseteq V^\complement$, containing $\arccc{A_+A_-}$. Such arc exists because $A_-,A_+ \in \arccc{A_+A_-} \subseteq V^\complement$. In particular, since $A\in V$, we have that $R \in \arccc{A_-A}$ and $T \in \arccc{AA_+}$. 

For some arbitrarily small $\delta >0$, with $\delta < \min\{ u, \epsilon/2\}$, let $A',A''\in V$ such that $\arccc d( RA') = d(A''T)=\delta$. Such points $A',A''$ exist, as otherwise $\arccc{TR}$ would not be minimal. 
Clearly, we have $A' \in \arccc{RA}$ and $A'' \in \arccc{AT}$. 

Since $A' \in \arccc{RA} \subseteq \arccc{\barrrr{A}A}$, its antipodal point $\barrrr{A}'$ lies in $\arccc{A\barrrr{A}}$. 
Similarly, since $A'' \in \arccc{AT} \subseteq \arccc{A\barrrr{A}}$, its antipodal point $\barrrr{A}''$ lies in $\arccc{\barrrr{A}A}$. 
Finally, we consider point $A_-''$ at clockwise arc distance $u$ from $A''$, and point $A_+'$ at anti-clockwise distance $u$ from $A'$, 
that is $\arccc{d}(A_-'',A'') = \arccc{d}(A',A_+') = u$. 
We observe that $A_-'' \in \arccc{\barrrr{A}''A}$ and $A_+' \in \arccc{A\barrrr{A}'}$.

Recall that $A''\in V$, hence $\arccc{\barrrr{A}''A_-''} \subseteq V^c$, as otherwise any point in $\arccc{\barrrr{A}''A_-''} \cap V$ together with $A'$, at arc distance at least $u$, would give a contradiction. 
Similarly, since $A'\in V$, hence $\arccc{A_+'\barrrr{A}'} \subseteq V^c$, as otherwise any point in $\arccc{A_+'\barrrr{A}'} \cap V$ together with $A''$, at arc distance at least $u$, would give a contradiction.

Lastly, abbreviate $\arccc{\barrrr{A}''A}, \arccc{A_+'\barrrr{A}'}$ by $X,Y$, respectively. 
Note that 
$
\mu_p(X) 
= \mu_p (\arccc{\barrrr{A}''A''} \setminus \arccc{A_-''A''})  
= \mu_p (\arccc{\barrrr{A}''A''}) - \mu_p(\arccc{A_-''A''})
= \pi_p - u.
$
Similarly, 
$
\mu_p(Y) 
= \mu_p (\arccc{A'\barrrr{A}'} \setminus \arccc{A'A_+'})  
= \mu_p (\arccc{A'\barrrr{A}'}) - \mu_p(\arccc{A'A_+'})
= \pi_p - u.
$
Recall that $A_-'' \in \arccc{\barrrr{A}''A}$ and $A_+' \in \arccc{A\barrrr{A}'}$, and hence sets $X, Y$ intersect either at point $A$ or have empty intersection. As a result
$
\mu_p \left( 
X \cap Y 
\right) =0,
$
as well as 
$
\mu_p \left( 
N \cap X \cap Y 
\right)
=0
$.

Recall that $\mu_p(\arccc{RA'}) = \mu_p(\arccc{A''T}) =  \delta$, and so by Lemma~\ref{lem: symmetries} we also have 
$
\mu_p \left( 
X \cap N 
\right)
=
\mu_p \left( 
Y \cap N 
\right) = \delta.
$
But then, using inclusion-exclusion for measure $\mu_p$, we have
\begin{align*}
\mu_p( N \cup X \cup Y ) 
&=
\mu_p(N) + \mu_p(X) + \mu_p(Y) 
- \mu_p(N\cap X) - \mu_p(N\cap Y)- \mu_p(X\cap Y)
+ \mu_p(N\cap X \cap Y) \\
& = u+\epsilon + \pi - u + \pi - u  - \delta - \delta - 0 + 0 \\
& = 2\pi_p - u + \epsilon - 2\delta \\
& > 2\pi_p-u \\
& > 2\pi_p-\mu_p(V) \\
& = \mu_p(V^\complement).
\end{align*}
Hence $\mu_p( N \cup X \cup Y )  >  \mu_p(V^\complement)$.
On the other hand, recall that $N,X,Y \subseteq V^\complement$, hence $N \cup X \cup Y  \subseteq V^\complement$, hence $\mu_p( N \cup X \cup Y )  \leq  \mu_p(V^\complement)$, which is a contradiction. 
\qed \end{proof}

We are now ready to prove a general lower bound for \textsc{WE}$_p$ which we further quantify later. 

\begin{lemma}
\label{lem: lower bound generic}
For every $p \in (1,\infty)$, the optimal evacuation cost of \textsc{WE}$_p$ is at least $1+e_p/2 + \mathcal L_p(e_p)$. 
\end{lemma}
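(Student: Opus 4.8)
The strategy is to exploit the wireless lower-bound template: fix any evacuation algorithm $A$ and run it up to the moment when the total explored measure of $\mathcal C_p$ equals exactly $e_p$. Since two unit-speed robots need time at least $1$ to reach the perimeter and then search in parallel, the elapsed time at that moment is at least $1+e_p/2$ (the robots together cover $e_p$, so one robot alone has searched for time at least $e_p/2$). Now let $V \subseteq \mathcal C_p$ be the \emph{unexplored} part, so $\mu_p(V) = 2\pi_p - e_p$. By Theorem~\ref{thm: explored optimal algo} we have $e_p \in (\pi_p, 2\pi_p]$, hence $\mu_p(V) \in [0,\pi_p)$; choosing the stopping moment just slightly earlier if necessary, we can arrange $\mu_p(V) \in (0,\pi_p]$ so that Lemma~\ref{lem: arc chord points} applies.

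First I would invoke Lemma~\ref{lem: arc chord points}: for every small $\epsilon>0$ there exist $A,B\in V$ with $\arccc d_p(A,B) \geq \mu_p(V)-\epsilon$. The adversary places the exit at whichever of $A,B$ the algorithm visits \emph{last}. At the stopping moment neither $A$ nor $B$ has been visited (they lie in $V$), so from that moment on the algorithm still needs to bring some robot to the exit; but a robot currently somewhere on $\mathcal C_p$ (or inside it) and the exit point are separated — in the worst arrangement dictated by the adversary — by a chord whose corresponding arc has length at least $\arccc d_p(A,B) \geq \mu_p(V)-\epsilon$. Hence the remaining travel time is at least $\mathcal L_p\big(\mu_p(V)-\epsilon\big)$ by Definition~\ref{def: min chord for arc}, since $\mathcal L_p$ is precisely the minimum chord length over all point pairs at that arc distance. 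Here is where I must be careful: $\mathcal L_p$ was defined via $\mu_p(\arccc{AB})=u$ rather than via the symmetric arc distance, but since $\mathcal L_p(u)=\mathcal L_p(2\pi_p-u)$ and $\arccc d_p$ is the min of the two arc lengths, the bound $\|A-B\|_p \geq \mathcal L_p(\arccc d_p(A,B))$ is valid, and by Lemma~\ref{lem: mathcal L monotonicity} (monotonicity of $\mathcal L_p$ on $[0,\pi_p]$) we get $\mathcal L_p(\mu_p(V)-\epsilon) \to \mathcal L_p(\mu_p(V))$ as $\epsilon\to 0$. Adding up: the evacuation cost of $A$ is at least $1 + e_p/2 + \mathcal L_p(\mu_p(V)) - o(1)$.

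The last step is to reconcile $\mathcal L_p(\mu_p(V)) = \mathcal L_p(2\pi_p - e_p)$ with the claimed $\mathcal L_p(e_p)$: these are equal by the symmetry $\mathcal L_p(u)=\mathcal L_p(2\pi_p-u)$ noted right after Definition~\ref{def: min chord for arc}. Letting $\epsilon\to 0$ (and taking a sup over exit placements, which only helps) yields the bound $1 + e_p/2 + \mathcal L_p(e_p)$, as claimed.

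\textbf{Main obstacle.} The delicate point is the argument that when the exit is reported after the explored measure reaches $e_p$, the non-finder robot is genuinely forced to travel a chord of $\ell_p$-length at least $\mathcal L_p(\mu_p(V))$ — i.e.\ that the adversary can always force the two "last two unexplored points" to be the finder's and non-finder's positions, rather than the algorithm having cleverly positioned a robot near one of them in advance. This is exactly the subtlety that the wireless model resolves: the algorithm cannot know the exit location until a robot hits it, so at the critical moment both candidate points $A,B\in V$ are still unvisited, and whichever the adversary picks, the robot that must reach it is at $\ell_p$-distance at least $\mathcal L_p$ of the arc distance away — one should phrase this as: among all pairs of currently-unexplored points at arc distance $\geq \mu_p(V)-\epsilon$, the adversary picks the exit to be the one maximizing the finishing time, and a short case analysis (finder stays, non-finder travels, or vice versa) shows the extra cost is at least the shortest chord subtending that arc. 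I would write this carefully, possibly isolating it as the one genuinely new ingredient beyond the Euclidean proof of~\cite{CGGKMP}.
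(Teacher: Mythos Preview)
Your proposal is correct and follows essentially the same route as the paper's proof: stop the algorithm when the explored measure is (just under) $e_p$, apply Lemma~\ref{lem: arc chord points} to the unexplored set $V$ to obtain two unexplored points $A,B$ at arc distance at least $\mu_p(V)-\epsilon$, let the adversary place the exit at whichever of $A,B$ is visited second, and lower-bound the remaining travel via Definition~\ref{def: min chord for arc}, Lemma~\ref{lem: mathcal L monotonicity}, and the symmetry $\mathcal L_p(u)=\mathcal L_p(2\pi_p-u)$. Your ``main obstacle'' is easier than you fear: since evacuation requires \emph{both} robots to reach the exit, the robot that first reaches one of $\{A,B\}$ is itself at $\ell_p$-distance $d_p(A,B)$ from the exit (placed at the other point), so the remaining time is at least $d_p(A,B)\geq \mathcal L_p(\arccc d_p(A,B))$ with no finder/non-finder case analysis needed.
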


\begin{proof}
Consider an arbitrary evacuation algorithm $\mathcal A$. We show that the cost of $\mathcal A$ is at least $1+e_p + \mathcal L_p(e_p)$. 
By Theorem~\ref{thm: explored optimal algo}, we have that $e_p \in (\pi_p, 2\pi_p]$. Let $\epsilon>0$ be small enough, were in particular $\epsilon <e_p-\pi_p$. We let evacuation algorithm $\mathcal A$ run till robots have explored exactly $e_p-\epsilon$ part of $\mathcal C_p$. 

The two unit speed robots need time 1 to reach the perimeter of $\mathcal C_p$.
Since moreover they (can) search in parallel (possibly different parts of the unit circle), they need an additional time at least $(e_p-\epsilon)/2$ in order to explore measure $e_p-\epsilon$. 
The unexplored portion $V$ of $\mathcal C_p$ has therefore measure $u:=2\pi_p - (e_p-\epsilon)$, where $u \in (0,\pi_p)$. 

By Lemma~\ref{lem: arc chord points}, there are two points $A,B \in V$ that are at an arc distance $v\geq u-\epsilon = 2\pi_p - e_p$. 
By definition, both points $A,B$ are unexplored. 
We let algorithm $\mathcal A$ run  even more and till the moment any one of the points $A,B$ is visited by some robot, and we place the exit at the other point (even if points are visited simultaneously), hence algorithm $\mathcal A$ needs an additional time $d_p(A,B)$ to terminate, for a total cost at least $1+e_p/2-\epsilon/2+d_p(A,B)$. But then, note that 
$
d_p(A,B) \geq \mathcal L_p(v)\geq \mathcal L_p(2\pi_p - e_p),
$
where the first inequality is due Definition~\ref{def: min chord for arc}
and the second inequality due to Lemma~\ref{lem: mathcal L monotonicity}, and the claim follows by recalling that $\mathcal L_p(2\pi_p-e_p)=\mathcal L_p(e_p)$. 
\qed \end{proof}

Recall that, for every $p\in (1, \infty)$,  the evacuation algorithms we have provided for \textsc{WE}$_p$ have cost $1+e_p/2+\gamma_p$.\footnote{This is unless, by Lemma~\ref{lem: performance phi=pi_p/4} and for $p\geq 2$, we have that $E_{p,\pi/4}=1+\pi_p$. However, in the latter case we can invoke Lemma~\ref{sec: lower bound weak} according to which Algorithm Wireless-Search$_p$($\pi/4$) would still be optimal. Hence, we may assume w.l.o.g that  
$E_{p,\pi/4}\not =1+\pi_p$ and that $E_{p,\pi/4}$ is given by the alternative formula of Lemma~\ref{lem: performance phi=pi_p/4}.}
At the same time, Lemma~\ref{lem: lower bound generic} implies that no evacuation algorithm has cost less than $1+e_p/2 + \mathcal L_p(e_p)$. So, the optimality of our algorithms, 
that is, the proof of Theorem~\ref{thm: upper bound best algo}, is implied directly by the following lemma, which is verified numerically.
The details are presented in the next section. 

\begin{lemma}
\label{lem: min arc is what I used}
For every $p \in (1,\infty)$, we have $\mathcal L_p(e_p)=\gamma_p$. 
\end{lemma}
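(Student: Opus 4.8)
The plan is to show that the specific chord $\gamma_p$ identified in Theorem~\ref{thm: explored optimal algo} — the chord connecting the two robot positions at the worst-case moment of Algorithm Wireless-Search$_p$(0) (for $p \le 2$) or Wireless-Search$_p$($\pi/4$) (for $p \ge 2$) — is in fact the \emph{shortest} chord over all chords whose corresponding arc has measure exactly $e_p$. Since by construction this chord subtends an arc of measure $e_p$ (the searchers have explored $e_p$ of the perimeter and sit symmetrically at its endpoints), we automatically have $\mathcal L_p(e_p) \le \gamma_p$; the content of the lemma is the reverse inequality $\mathcal L_p(e_p) \ge \gamma_p$.

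First I would set up the minimization in Definition~\ref{def: min chord for arc} concretely: parametrize an arc of fixed measure $e_p$ by the position of one endpoint, say $A = \rho_p(\phi)$, with the other endpoint $B = B(\phi)$ determined implicitly by $\mu_p(\arccc{AB}) = e_p$. By the four axes of symmetry (Lemma~\ref{lem: symmetries}) and the fact that $e_p > \pi_p$, it suffices to let $\phi$ range over a fundamental domain, e.g. $[0,\pi/4]$, and to minimize $f_p(\phi) := \norm{A(\phi) - B(\phi)}_p$ there. The key geometric claim — flagged in the introduction as the ``novel observation in convex geometry'' — is that this minimum is attained precisely when the chord is parallel to a coordinate axis (for $p \le 2$) or to a line $y = \pm x$ (for $p \ge 2$); these are exactly the symmetric configurations that Algorithm Wireless-Search$_p$(0) and Wireless-Search$_p$($\pi/4$) produce, by Lemma~\ref{lem: sym trajectories}. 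So the proof reduces to two parts: (i) identifying the critical configuration, and (ii) verifying the chord length there equals the closed-form $\gamma_p^\mp$ of Theorem~\ref{thm: explored optimal algo}.

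For part (ii), I would use the alternative parametrization $r_p(s)$ from~\eqref{equa: alternative param} and the explicit worst-case data already computed in Theorem~\ref{thm: explored optimal algo}: the worst-case exit corresponds to search parameter $s_p$, the two (symmetric) robot positions are then $r_p(\pm s_p)$ up to the relevant rotation, and a direct computation of $\norm{r_p(s_p) - r_p(-s_p)}_p$ (resp. the $\pi/4$-rotated analogue) recovers $\gamma_p^- = 2(1 - s_p^p)^{1/p}$ (resp. $\gamma_p^+ = 2^{1/p}((1-s_p^p)^{1/p} + s_p)$). This step is essentially bookkeeping with the generalized trigonometric functions and the symmetric-measure facts (Lemmas~\ref{lem: symmetric measures},~\ref{lem: sym trajectories}). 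Part (i) is the real obstacle: one must show $f_p(\phi)$ has no competing lower local minimum as $\phi$ moves the chord away from the symmetric orientation. I would attack this by differentiating $f_p$ using the Fundamental Theorem of Calculus to handle the implicit constraint $\mu_p(\arccc{AB}) = e_p$ (which fixes $dB/d\phi$ in terms of the ratio of the integrands $\norm{\rho_p'}_p$ at the two endpoints), and then analyzing the sign of $f_p'$ — equivalently, leveraging the monotonicity of $\mathcal L_p$ from Lemma~\ref{lem: mathcal L monotonicity} together with the monotonicity of $\delta_{p,0}(\tau)$ and $\delta_{p,\pi/4}(\tau)$ established in the analysis of the algorithms (referenced as Lemma~\ref{equa: distance robots}). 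Since a fully analytic treatment of $f_p'$ across all $p \in (1,\infty)$ appears intractable in closed form, this is exactly where the computer-assisted numerical verification enters: for each fixed $p$ in the workable range (roughly $1.001 \le p \le 45$ as noted after Lemma~\ref{lem: mathcal L monotonicity}) one checks with high precision that $f_p$ attains its global minimum at the claimed symmetric orientation, and for extreme $p$ one falls back on the independent near-optimality bounds mentioned after Theorem~\ref{thm: upper bound best algo}. The main difficulty, then, is not any single estimate but the structural claim that the axis-parallel (resp. diagonal) orientation is globally optimal — ruling out the possibility that some skew chord of the same arc-length is shorter — and converting that claim into a verification that is both rigorous and numerically tractable.
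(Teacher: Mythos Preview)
Your proposal is essentially the paper's own approach: reduce $\mathcal L_p(e_p)$ to a one-parameter minimization of the chord length over a fundamental domain, identify the minimizer with the symmetric configuration produced by Wireless-Search$_p(0)$ (for $p\le 2$) or Wireless-Search$_p(\pi/4)$ (for $p\ge 2$), and then fall back on computer-assisted numerical verification because a closed-form sign analysis of the derivative is out of reach. The only noteworthy difference is the choice of parameter: the paper parametrizes by the \emph{tangential angle}, i.e.\ the position $\theta\in[0,\pi/4]$ of the arc's midpoint, rather than by an endpoint $\phi$; this is slightly cleaner because the symmetry group acts on the midpoint, so the fundamental domain really is $[0,\pi/4]$ and the two boundary values $\theta=0$ and $\theta=\pi/4$ correspond \emph{exactly} to the two algorithmic configurations, whereas with your endpoint parametrization the fundamental domain is shifted by $e_p/2$ in arc length and the bookkeeping is a bit more awkward. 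One small caution: your appeal to Lemma~\ref{lem: mathcal L monotonicity} does not help here, since that lemma concerns monotonicity in the \emph{arc length} $u$, not in the position of a fixed-length arc; the relevant monotonicity (which the paper states as the stronger Lemma~\ref{lem: min arc is what I used restate x2} and verifies numerically) is that of $\sigma_p(\theta)$ in $\theta$.
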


\section{Numerical Verification of Lemma~\ref{lem: min arc is what I used}}
\label{sec: critical arc orientation}

Consider a contiguous arc of $\mathcal C_p$ of length $e_p$. As the endpoints of the arc move around the perimeter of $\mathcal C_p$, the length of the corresponding chord, i.e. line segment with the same endpoints, changes. 
Lemma~\ref{lem: min arc is what I used} states that the shortest such length is equal to $\gamma_p$, as per Theorem~\ref{thm: explored optimal algo}. 

For an arbitrary contiguous arc $\arccc{AB}$ of $\mathcal C_p$ of length $e_p$, let $M$ be the midpoint of the arc, i.e. point $M$ satisfies 
$\mu_p\left( \arccc{BM} \right)=\mu_p\left( \arccc{MA} \right)=e_p/2$.
We define the \emph{tangential angle} of the arc $\arccc{AB}$ as the angle $\theta$ satisfying $\rho_p(\theta)=M$. 
In other words, the tangential angle of an arc assumes values in $[0,2\pi)$. 

Clearly, as the tangential angle of a fixed-length arc varies in $[0,2\pi)$, the length of the corresponding chord changes. 
At the same time, by the symmetries of $\mathcal C_p$, all possible chord length values are attained as the tangential angle ranges in $[0,\pi/4]$.
In other words, Lemma~\ref{lem: min arc is what I used} states that as the tangential angle of a contiguous arc of length $e_p$ ranges in $[0,\pi/4]$, the minimum length of the corresponding chord equals $\gamma_p$. 

It is now informative to recall the definition of $\gamma_p$, which is the $\ell_p$ distance of the two searchers at the moment the exit is found and for the placement of the exit that induces the worst case cost of algorithm Wireless-Search$_p$($\phi$), where $\phi=0$, when $p\in [1,2)$ and $\phi=\pi/4$ when $p\in (2,\infty)$. 
In particular (see Lemma~\ref{lem: sym trajectories}), when $p \in [1,2)$ the positions of the robots define arcs with tangential angle 0, whereas when $p \in (2,\infty)$ the positions of the robots define arcs with tangential angle $\pi/4$. 
Stated differently, $\gamma_p$ is by definition, the length of a chord corresponding to an arc of length $e_p$ that has tangential angle $0$ if $p\in [1,2)$ and $\pi/4$ when $p \in (2,\infty)$.
In order to formally restate Lemma~\ref{lem: min arc is what I used}, we  introduce the following notation. Function $\sigma_p:[0,\pi/4]\mapsto \reals$ is defined as the length of the chord, corresponding to an arc of length $e_p$ with tangential angle $\theta$. 
Using this notation, we need to show that $\min_{\theta \in [0,\pi/4]} \sigma_p(\theta)=\gamma_p$, which is exactly what the next lemma states. 


\begin{lemma}
\label{lem: min arc is what I used restate}
Function $\sigma_p$ is minimized at $\theta=0$ when $p\in [1,2)$ and at $\theta =\pi/4$ when $p \in (2,\infty)$. 
\end{lemma}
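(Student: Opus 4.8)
The plan is to parameterize an arc of fixed length $e_p$ by the position of one endpoint and express the chord length $\sigma_p(\theta)$ explicitly, then show its only interior critical points on $[0,\pi/4]$ are the two candidate extrema, with the behaviour at the boundary determining whether $\theta=0$ or $\theta=\pi/4$ is the minimizer depending on whether $p<2$ or $p>2$. Concretely, fix the arc $\arccc{AB}$ with midpoint $M=\rho_p(\theta)$; by the definition of the tangential angle, $A=\rho_p(\phi_1)$ and $B=\rho_p(\phi_2)$ where $\phi_1<\theta<\phi_2$ are determined by $\mu_p(\arccc{MB})=\mu_p(\arccc{AM})=e_p/2$. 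Since $e_p \in (\pi_p, 2\pi_p]$, each half-arc has length $e_p/2 \in (\pi_p/2, \pi_p]$, so it wraps past the antipode of $M$; it is cleaner to describe the chord $\overline{AB}$ through the antipodal point $\overline M = \rho_p(\theta+\pi)$, writing $A = \rho_p(\theta + \pi - \psi)$, $B = \rho_p(\theta+\pi+\psi')$ where $\psi,\psi' \ge 0$ satisfy $\mu_p(\{\rho_p(\theta+\pi-t): t\in[0,\psi]\}) = \mu_p(\{\rho_p(\theta+\pi+t):t\in[0,\psi']\}) = \pi_p - e_p/2$. Then $\sigma_p(\theta) = \norm{\rho_p(\theta+\pi-\psi(\theta)) - \rho_p(\theta+\pi+\psi'(\theta))}_p$.

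The key structural step is to differentiate $\sigma_p(\theta)^p$ with respect to $\theta$ using the Fundamental Theorem of Calculus: the constraint $\int_0^{\psi}\norm{\rho_p'(\theta+\pi-t)}_p\,\ddd t$ being constant in $\theta$ gives a relation determining $\psi'(\theta)$ (and by symmetry $\psi'^{\,}$), from which $\ddd\sigma_p/\ddd\theta$ becomes an explicit algebraic expression in $\theta$. By Lemma~\ref{lem: symmetric measures} and the four axes of symmetry from Lemma~\ref{lem: symmetries}, the angles $\theta=0$ and $\theta=\pi/4$ are precisely the points where the arc is itself symmetric under a reflection of $\mathcal C_p$; at those two angles $\psi(\theta)=\psi'(\theta)$ and a reflection argument forces $\ddd\sigma_p/\ddd\theta=0$, so they are always critical points. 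The heart of the matter is to show there are no \emph{other} critical points in the open interval $(0,\pi/4)$, i.e. that $\sigma_p$ is monotone on each side of its unique interior extremum — equivalently that $\sigma_p$ has exactly one interior critical point and it is a maximum. This is where the paper's admitted reliance on computer-assisted numerical verification enters: one reduces $\ddd\sigma_p/\ddd\theta$ to a single-variable function whose sign pattern on $[0,\pi/4]$ is checked numerically for each fixed $p$ in the verifiable range ($1.001 \le p \le 45$ or so), and then a separate analytic argument handles large $p$ (and $p=\infty$) where the near-optimality estimate of Theorem~\ref{thm: upper bound best algo} makes the tight statement unnecessary.

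Finally, to pin down \emph{which} endpoint of $[0,\pi/4]$ is the minimizer, I would compare $\sigma_p(0)$ and $\sigma_p(\pi/4)$ directly and invoke the intuition already recorded after Lemma~\ref{lem: mathcal L monotonicity} and Definition~\ref{def: min chord for arc}: for $p<2$ the $\ell_p$ ball is ``pointier along the axes,'' so a chord of given arc-length is shortest when parallel to $x=0$ or $y=0$ (tangential angle $0$), whereas for $p>2$ the ball is pointier along the diagonals, making the shortest chord the one parallel to $y=x$ or $y=-x$ (tangential angle $\pi/4$); the threshold $p=2$ is exactly where $\sigma_p$ becomes constant. Combined with the ``unique interior critical point is a maximum'' fact, this yields that the global minimum of $\sigma_p$ on $[0,\pi/4]$ is at $\theta=0$ for $p\in[1,2)$ and at $\theta=\pi/4$ for $p\in(2,\infty)$, which is the statement of the lemma; the identification $\min_\theta\sigma_p(\theta)=\gamma_p$ then follows from the explicit formulas for $\gamma_p^{\pm}$ in Theorem~\ref{thm: explored optimal algo} together with Lemma~\ref{lem: sym trajectories}, which says the robots' positions in Algorithm Wireless-Search$_p$($\phi$) realize exactly the arc of tangential angle $\phi\in\{0,\pi/4\}$. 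The main obstacle is unquestionably the monotonicity/unique-critical-point claim for $\ddd\sigma_p/\ddd\theta$: the derivative involves the non-elementary arc-length integrals defining $\psi(\theta)$, so an entirely closed-form sign analysis seems out of reach, and this is precisely the place where the numerical verification — and, for extreme $p$, the weaker-but-provable gap bound — does the work.
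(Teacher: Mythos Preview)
Your proposal is more analytically ambitious than what the paper actually does. The paper gives \emph{no} analytical proof of this lemma at all: it simply describes how to compute $\sigma_p(\theta)$ numerically (by solving the arc-length integral equations for the endpoints $A,B$ and $C,D$ at tangential angles $0$ and $\pi/4$, then for intermediate points $R,T$), plots the resulting curves, and observes the stronger fact (stated as Lemma~\ref{lem: min arc is what I used restate x2}) that $\sigma_p$ is \emph{monotone} on $[0,\pi/4]$ --- increasing for $p\in[1,2)$, decreasing for $p\in(2,\infty)$. There is no differentiation, no critical-point analysis, no symmetry argument; the entire content is ``compute and plot for $p$ in a feasible range, and for large $p$ fall back on the weak lower bound $1+\pi_p$.''

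Your framework of differentiating $\sigma_p$ via the Fundamental Theorem of Calculus and using the reflection symmetries of $\mathcal C_p$ to pin down $\theta=0,\pi/4$ as critical points is a genuine addition and is correct as far as it goes. However, your structural claim is garbled and in fact contradicts the paper's numerical finding: you write that ``$\sigma_p$ has exactly one interior critical point and it is a maximum,'' but the plots show $\sigma_p$ is strictly monotone on $[0,\pi/4]$, so there are \emph{no} interior critical points. (This does not affect your conclusion --- either way the minimum is at an endpoint --- but the intermediate picture you sketch is wrong.) The cleaner statement, consistent with both your symmetry observation and the paper's numerics, is: $\sigma_p'(0)=\sigma_p'(\pi/4)=0$ by symmetry, there are no critical points in $(0,\pi/4)$, hence $\sigma_p$ is monotone there, and the sign of the monotonicity flips at $p=2$.

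In short: your plan and the paper's ``proof'' both bottom out in the same numerical verification; you add an analytical scaffolding the paper does not attempt, but you should correct the ``unique interior maximum'' claim to ``no interior critical points, hence monotone.''
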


For fixed values of $p$, Lemma~\ref{lem: min arc is what I used restate} can be verified with confidence of at least 6 significant digits in MATHEMATICA. 
Due to precision limitations, the values of $p$ cannot be too small, neither too big, even though a modified working precision can handle more values. 
With standard working precision, any $p$ in the range between $1.001$ and $20$ can be handled within a few seconds. 
For large values of $p$, Lemma~\ref{sec: lower bound weak} gives a nearly tight bound. For example, if $p=1,000$, the performance of our algorithm is $4.9993023351$, while the lower bound of Lemma~\ref{sec: lower bound weak} is $1+\pi_{1000}\approx 4.9972283728$.

\ignore{
standard precision values of $p$ cannot be arbitrarily close to 1, but $p=1.001$ can be handled without modifications. Similarly, $p$ cannot be arbitrarily large, but the default precision can handle values up to $p=45$. 
By increasing the machine precision, one can compute $\sigma_p$ and the performance of algorithm Wireless-Search$_p$($\pi/4$) for much higher values. Nevertheless, for large values, Lemma~\ref{sec: lower bound weak} gives a nearly tight bound. For example, if $p=1,000$, the performance of our algorithm is $4.9993023351$, while the lower bound of Lemma~\ref{sec: lower bound weak} is $1+\pi_{1000}\approx 4.9972283728$.
}

Next we provide a visual analysis of function $\sigma_p$ that effectively justifies Lemma~\ref{lem: min arc is what I used restate}. In fact, we show the following stronger statement, see Figures~\ref{fig: SigmpaP3Dx2},\ref{fig: SigmpaP2Dx6small},\ref{fig: SigmpaP2Dx6large}.

\begin{lemma}
\label{lem: min arc is what I used restate x2}
Function $\sigma_p:[0,\pi/4]\mapsto \reals$ is increasing when $p\in [1,2)$ and decreasing when $p \in (2,\infty)$.
\end{lemma}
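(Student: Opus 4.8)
The plan is to reduce the monotonicity of $\sigma_p$ to a concrete calculus computation by setting up an explicit parameterization of "an arc of length $e_p$ with tangential angle $\theta$." First I would fix the midpoint $M=\rho_p(\theta)$ and describe the endpoints $A=A(\theta)$, $B=B(\theta)$ as the unique points on $\mathcal{C}_p$ satisfying $\mu_p(\arccc{BM})=\mu_p(\arccc{MA})=e_p/2$; these are well-defined and, by the implicit function theorem applied to the arc-length integral $\int \norm{\rho_p'}_p$, they depend smoothly on $\theta$ away from the corner cases. Then $\sigma_p(\theta)=\norm{A(\theta)-B(\theta)}_p$, and the goal is $\sigma_p'(\theta)>0$ on $(0,\pi/4)$ for $p\in[1,2)$ and $\sigma_p'(\theta)<0$ on $(0,\pi/4)$ for $p\in(2,\infty)$.

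The key computational step is to differentiate. Write $A(\theta)=\rho_p(\alpha(\theta))$ and $B(\theta)=\rho_p(\beta(\theta))$ where $\alpha,\beta$ are determined by the two arc-length equations; differentiating those equations gives $\alpha'(\theta)=\norm{\rho_p'(\theta)}_p/\norm{\rho_p'(\alpha)}_p$ and similarly $\beta'(\theta)=\norm{\rho_p'(\theta)}_p/\norm{\rho_p'(\beta)}_p$ (the ratios arising because a unit displacement of $M$ along the circle must be matched by displacements of $A,B$ that preserve arc length). Then by the chain rule
\[
\sigma_p'(\theta)=\frac{1}{\sigma_p(\theta)^{p-1}}\sum_{i=1}^2 |A_i-B_i|^{p-1}\operatorname{sgn}(A_i-B_i)\bigl(\alpha'(\theta)\,\partial_i\rho_p(\alpha)-\beta'(\theta)\,\partial_i\rho_p(\beta)\bigr),
\]
and one analyzes the sign of this expression. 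I expect the clean way to finish is to exploit the symmetry established in Lemma~\ref{lem: symmetries} and Lemma~\ref{lem: symmetric measures}: at $\theta=0$ and $\theta=\pi/4$ the configuration is symmetric (so $\alpha'=-\beta'$ in an appropriate sense and these are the critical points), and one wants to show no other critical points exist in between and determine which endpoint is the min. The cleanest reduction is to show $\sigma_p'$ has constant sign on $(0,\pi/4)$, equivalently that $\sigma_p$ has no interior critical point, using convexity/monotonicity properties of $\norm{\rho_p'(\cdot)}_p$ on a quarter-arc (it is monotone on $[0,\pi/4]$, with the monotonicity direction flipping at $p=2$ — this is exactly the mechanism by which $p=2$ is the threshold).

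The main obstacle — and the reason the paper resorts to MATHEMATICA — is that even after this reduction the sign of $\sigma_p'(\theta)$ is governed by a transcendental inequality in $p$ and $\theta$ involving $\sin_p,\cos_p$, their derivatives, the implicitly-defined arc endpoints $\alpha(\theta),\beta(\theta)$, and the nonelementary constant $e_p$ (itself an integral, from Theorem~\ref{thm: explored optimal algo}). There is no closed form for $\alpha(\theta),\beta(\theta)$ as functions of $\theta$, so the sign analysis cannot be completed symbolically; I would therefore present the analytic reduction to "sign of $\sigma_p'$ on $(0,\pi/4)$" rigorously, verify the boundary behavior and the $p=2$ degeneracy by hand (where $\sigma_2\equiv\gamma_2$ is constant since $\mathcal{L}_2(u)=2\sin(u/2)$ is rotation-invariant), and then invoke the numerical check over $p\in(1,2)\cup(2,20)$ (and the $\pi_p$-based bound of Lemma~\ref{sec: lower bound weak} for large $p$) to certify the constant-sign claim — which is precisely the structure the paper adopts. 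An alternative, more robust finish would be to bound $\sigma_p'$ away from zero on compact sub-intervals by interval arithmetic, but the qualitative monotonicity statement of Lemma~\ref{lem: min arc is what I used restate x2} is stronger than needed for Theorem~\ref{thm: upper bound best algo}, which only requires the location of the minimum.
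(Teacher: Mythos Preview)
Your proposal is sound as a plan, but it takes a substantially more analytic route than the paper does. The paper offers no symbolic argument for Lemma~\ref{lem: min arc is what I used restate x2} at all: it simply explains how to compute $\sigma_p(\theta)$ numerically (via the parameterizations $r_p$ and $\bar r_p$ and the arc-length integrals determining the endpoints), produces the plots in Figures~\ref{fig: SigmpaP3Dx2}--\ref{fig: SigmpaP2Dx6large}, and lets those plots stand as the verification. There is no differentiation of $\sigma_p$, no implicit-function argument, and no identification of critical points.

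What your approach buys is structure: your derivation of $\alpha'(\theta)=\norm{\rho_p'(\theta)}_p/\norm{\rho_p'(\alpha)}_p$ and $\beta'(\theta)=\norm{\rho_p'(\theta)}_p/\norm{\rho_p'(\beta)}_p$ is correct, and from it one can prove rigorously that $\theta=0$ and $\theta=\pi/4$ are critical points of $\sigma_p$ (a small correction: at $\theta=0$ one gets $\alpha'=\beta'$, not $\alpha'=-\beta'$; the vanishing of $\sigma_p'(0)$ comes instead from $A-B$ and $A'-B'$ being $\ell_p$-orthogonal there). That is genuinely more informative than the paper's pure plotting. On the other hand, your proposed mechanism --- that monotonicity of $\norm{\rho_p'(\cdot)}_p$ on $[0,\pi/4]$ with a direction flip at $p=2$ is what drives the sign of $\sigma_p'$ --- is only a heuristic; it is not clear it can be turned into a proof without numerics, and you correctly concede this. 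So in the end both approaches rely on computer assistance for the core inequality; yours front-loads more analysis and isolates a sharper numerical target (the sign of $\sigma_p'$ rather than the shape of $\sigma_p$), while the paper's is blunter but quicker to implement.
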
 

Note that function $\sigma_2$ is constant. In particular, its value equals the distance of the robots, in the worst placement of the exit, the moment the exit is found, when searching in the Euclidean space. Since $e_2=4\pi/3$, it follows that $\sigma_2(\theta)=\gamma_2=\sqrt3$, for all $\theta\in [0,\pi/4]$. 

\begin{figure}[h!]
  \centering
  \includegraphics[width=0.9\linewidth]{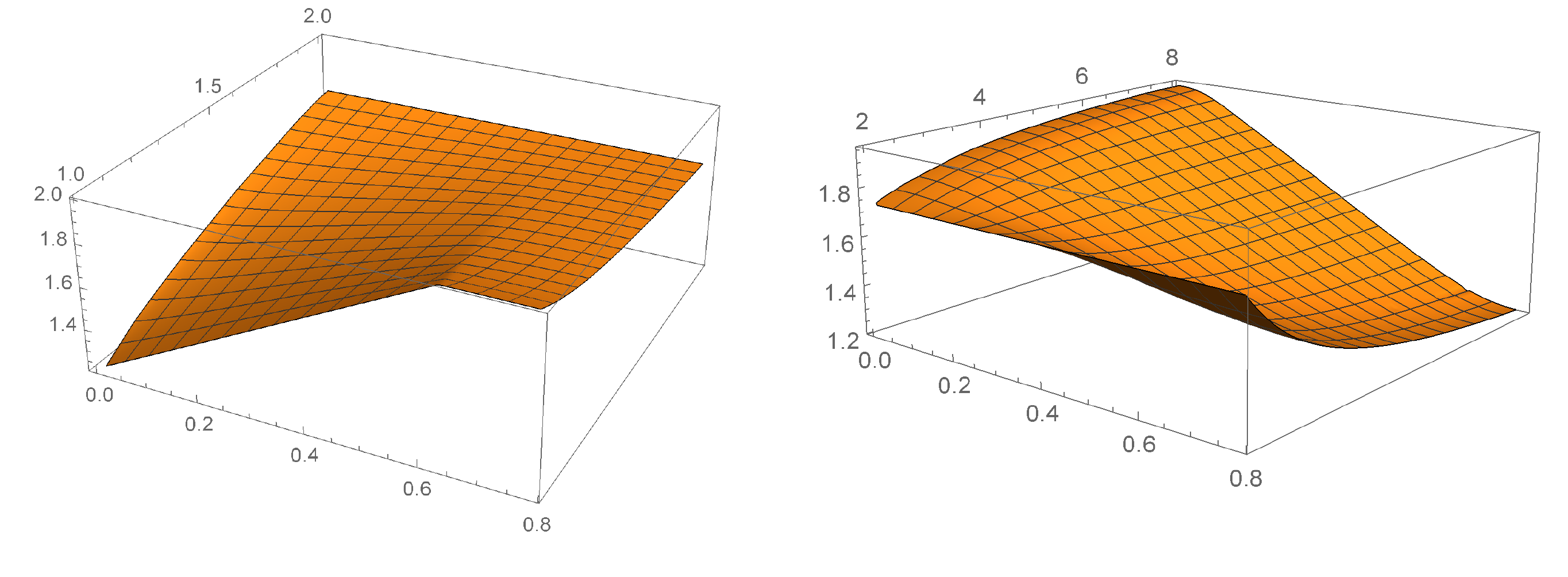}
\caption{
Figures depict $\sigma_p(\theta)$ for various values of $p$ and for $\theta \in [0,\pi/4]$. Left-hand side figure shows increasing function $\sigma_p$ for $p\in (1,2)$. 
Right-hand side figure shows decreasing function $\sigma_p$ for $p\in (2,10]$. 
}
\label{fig: SigmpaP3Dx2}
\end{figure}

\begin{figure}[h!]
  \centering
  \includegraphics[width=0.7\linewidth]{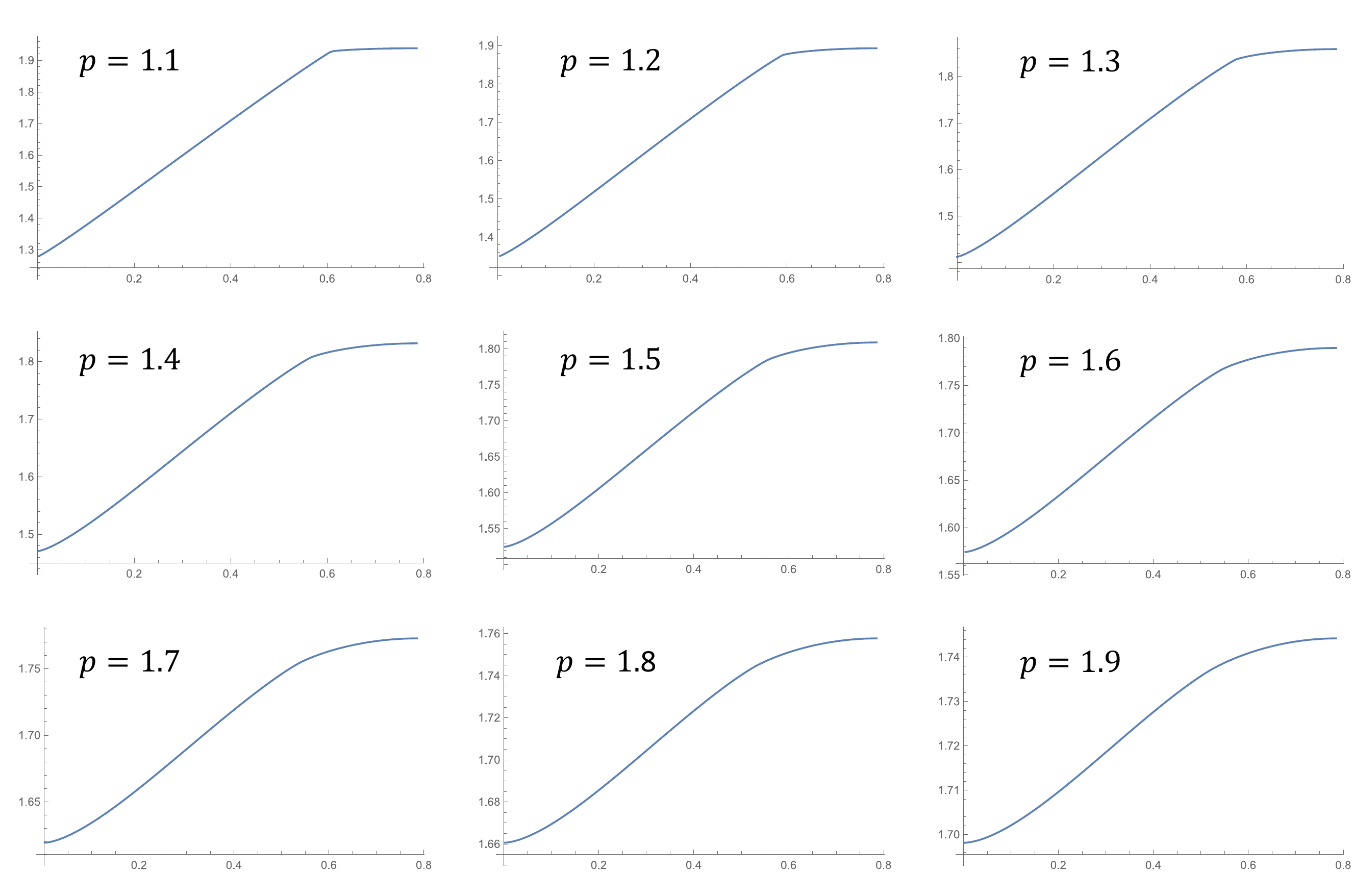}
\caption{
Figures depict increasing functions $\sigma_p(\theta)$, when $\theta \in [0,\pi/4]$, for various values of $p\in (1,2)$. 
}
\label{fig: SigmpaP2Dx6small}
\end{figure}

\begin{figure}[h!]
  \centering
  \includegraphics[width=0.7\linewidth]{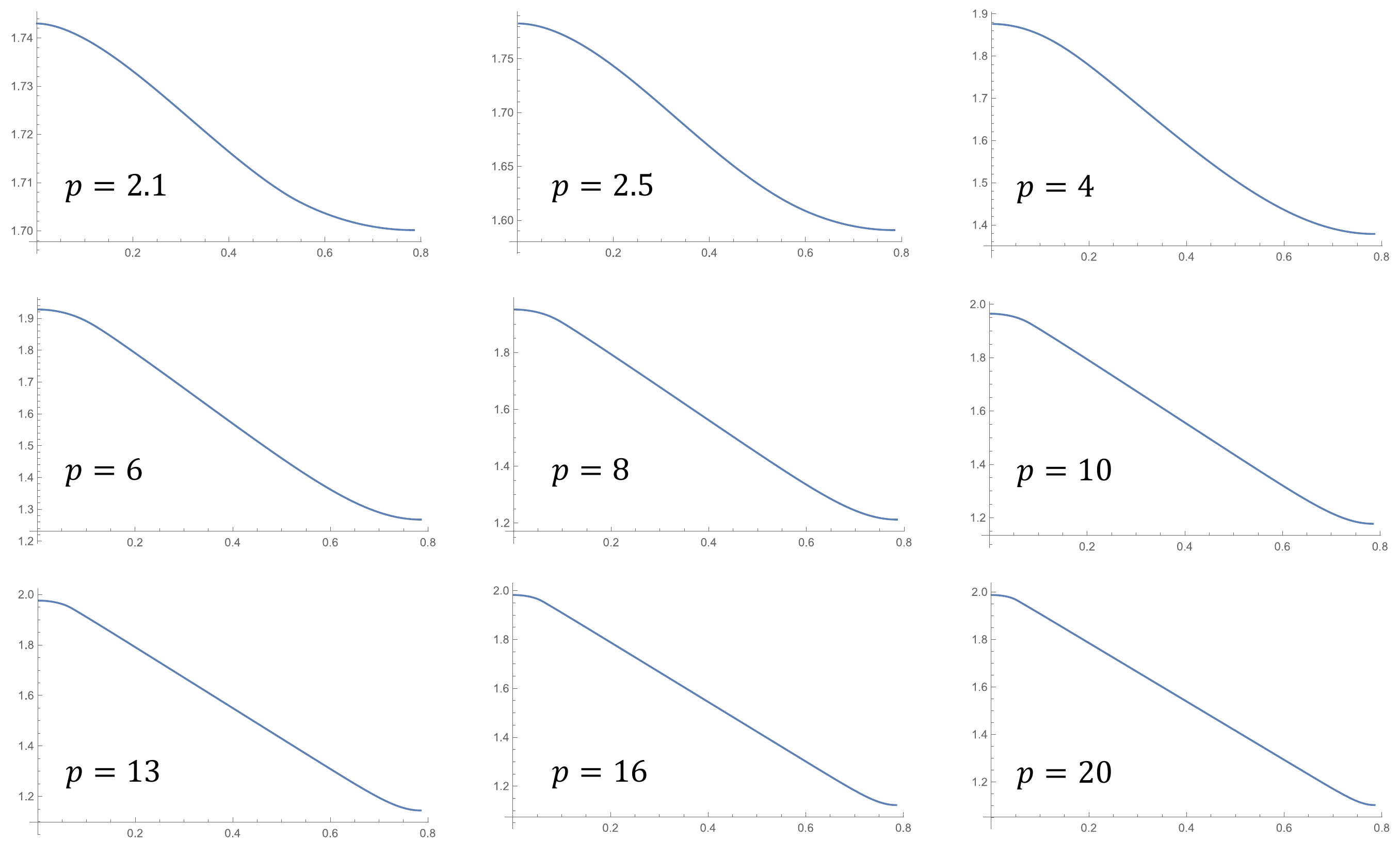}
\caption{
Figures depict decreasing functions $\sigma_p(\theta)$, when $\theta \in [0,\pi/4]$, for various values of $p\in (2,20]$. 
}
\label{fig: SigmpaP2Dx6large}
\end{figure}

We conclude the section by giving the technical details as to how computer-assisted numerical calculations can verify Lemma~\ref{lem: min arc is what I used restate} (and Lemma~\ref{lem: min arc is what I used restate x2}), and how the figures for $\sigma_p$ were produced. 
The reader may refer to Figure~\ref{fig: MinArcOrientation}. 
\begin{figure}[h!]
  \centering
  \includegraphics[width=0.7\linewidth]{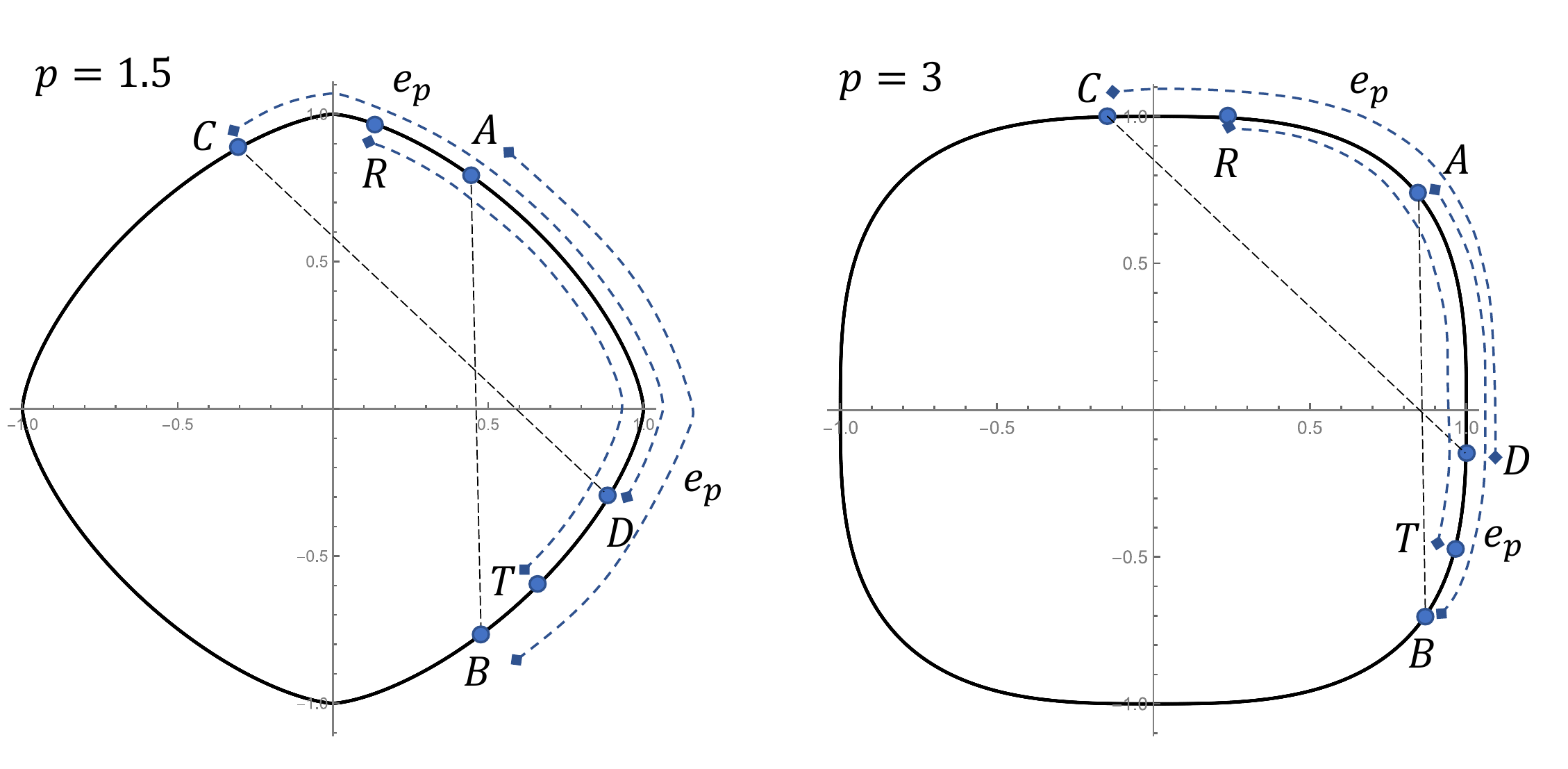}
\caption{
Two unit circles $\mathcal C_p$, for $p=1.5$ left, and $p=3$ right. For $p\in (1,2)$, arc $\arccc{BA}$ (of some fixed length) induces chord $AB$ of smallest length, and arc $\arccc{DC}$ (of the same arc length) induces chord $CD$ of largest length. 
For $p\in (2,\infty)$, arc $\arccc{BA}$ (of some fixed length) induces chord $AB$ of largest length, and arc $\arccc{DC}$ (of the same arc length) induces chord $CD$ of smallest length. 
}
\label{fig: MinArcOrientation}
\end{figure}
For each $p\geq 1$, we explain how we can calculate $\sigma_p(\theta)$ for $0\leq \theta \leq \pi/4$.

First we find points $A,B$ on $\mathcal C_p$ such that $\arccc{BA}$ has length $e_p$ and tangential angle $0$. For this, we set $A=(x,y)$ so that $B=(x,-y)$. Using the parametric equation $\bar r_p(t)=\left( (1-|t|^p)^{1/p},t \right)$, $t\in [-1,1]$, that describes $\mathcal C_p$ in the 4th and 1st quadrant, we find $w_a\geq0$ that is the solution to 
$$
\int_{-w}^w \norm{\bar{r_p}'(s)}_p \ddd s =e_p
$$
Therefore, $A=\left( (1-w_a^p)^{1/p},w_a \right)$.

Next we find first points $C,D$ on $\mathcal C_p$ such that $\arccc{DC}$ has length $e_p$ and tangential angle $\pi/4$. For this, we set $C=(x,y)$ so that $D=(y,x)$. We us parametric equation $r_p(t)=\left( t,(1-|t|^p)^{1/p} \right)$, $t\in [-1,1]$,that describes $\mathcal C_p$ in the 1st and 2nd quadrant. 
We observe that if $e_p > \pi_p/2$, then $C$ lies in the 2nd quadrant, and if $e_p<\pi_p/2$, then $C$ lies in the 1st quadrant. Therefore, we need to find $w_c$ satisfying 
\begin{align*}
2\int_{w}^0 \norm{r_p'(s)}_p \ddd s +\pi_p/2=e_p &\textrm{, if } e_p > \pi_p/2 \\
2\int_{w}^{2^{-1/p}} \norm{r_p'(s)}_p \ddd s =e_p &\textrm{, if } e_p < \pi_p/2 
\end{align*}
Then, we have that $C=\left( w_c,(1-|w_c|^p)^{1/p} \right)$, so that $C$ lies in the 2nd quadrant if $w_c<0$ and in the 1st quadrant if $w_c>0$.

Now we need to consider arbitrary point $R$ in the arc $\arccc{AC}$, and for each such point, find $T$ in the arc $\arccc{BD}$ satisfying $\mu_p(\arccc{TR})=e_p$. 
In particular, if $R=(x_R,y_R)$, then $w_c \leq x_r \leq (1-w_a^p)^{1/p}$, where $x_R = (1-w_a^p)^{1/p}$ gives tangential angle $0$ and $x_R=w_c$ gives tangential angle $\pi/4$. 
To conclude, for each $x_r \in [w_c , (1-w_a^p)^{1/p}]$ we find point $T$, and return chord length $\norm{RT}_p$. 
Figures~\ref{fig: SigmpaP3Dx2},\ref{fig: SigmpaP2Dx6small},\ref{fig: SigmpaP2Dx6large} depict $\norm{RT}_p$ (y-axis) as a function of $\left( 1-\frac{(x_r-w_c)}{(1-w_a^p)^{1/p}-w_c}\right)\frac{\pi}{4}$, and as $x_r$ ranges in $[w_c , (1-w_a^p)^{1/p}]$. 
This corresponds exactly to the plot of $\norm{RT}_p$, i.e. to $\sigma_p(\theta)$, as a function of the tangential angle $\theta$ of $\arccc{TR}$, only that the $x$-axis corresponding to the tangential angle is stretched according to our transformations. 
Overall the plots verify that $\sigma_p:[0,\pi/4]\mapsto \reals$ is increasing when $p\in [1,2)$ and decreasing when $p \in (2,\infty)$.

\section{Discussion}
We provided tight upper and lower bounds for the evacuation problem of two searchers in the wireless model from the unit circle in $\ell_p$ metric spaces, $p\geq 1$. This is just a starting point of revisiting well studied search and evacuation problems in general metric spaces that do not enjoy the symmetry of the Euclidean space. In light of the technicalities involved in the current manuscript, we anticipate that the pursuit of the aforementioned open problems will also give rise to new insights in convex geometry and computational geometry.

\bibliographystyle{plain}

\bibliography{ellp-refs}

\appendix

\section{Proof of Theorem~\ref{thm: explored optimal algo}}
\label{sec: proof of positive results}

Next we generalize the proof ideas of Lemma~\ref{lem: sym trajectories}. More specifically, we present a useful implication of Lemma~\ref{lem: sym trajectories} that will be invoked repeatedly in our analysis and that pertains to the distance of the two robots while they are searching for the exit.

\begin{lemma}
\label{equa: distance robots}
For $p \in (1,\infty)$, consider an execution of Algorithm Wireless-Search$_p$($\phi$), and let $(x_\tau,y_\tau)$ be the position of robot \#1 at time $\tau$. Then, we have
\begin{align*}
\delta_{p,0}(\tau) & = 2|y_\tau|, \\
\delta_{p,\pi/4}(\tau) & = 2^{1/p}|x_\tau-y_\tau|.
\end{align*}
Moreover, for $\phi \in \{0,\pi/4\}$, function $\delta_{p,\phi}(\tau)$ is strictly increasing when $\tau \in [0,\pi_p/2]$ and strictly decreasing when $\tau \in [\pi_p/2, \pi_p]$.
\end{lemma}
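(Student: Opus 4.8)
The plan is to first establish the two closed-form expressions for $\delta_{p,\phi}(\tau)$ using the symmetry statement of Lemma~\ref{lem: sym trajectories}, and then to prove the monotonicity claim by reducing it to a monotonicity statement about a single coordinate of the robot's position as a function of arc length. For the first part, recall that by Lemma~\ref{lem: sym trajectories}, when $\phi \in \{0,\pi/4\}$ and robot \#1 is at $\rho_p(\phi+t)$, robot \#2 is at $\rho_p(\phi-t)$. When $\phi=0$, if robot \#1 is at $(x_\tau,y_\tau)=\rho_p(t)=(\cos_p t,\sin_p t)$, then robot \#2 is at $\rho_p(-t)=(\cos_p t,-\sin_p t)=(x_\tau,-y_\tau)$, so $\delta_{p,0}(\tau)=\norm{(0,2y_\tau)}_p=2|y_\tau|$. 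When $\phi=\pi/4$, the reflection across the line $y=x$ (which by Lemma~\ref{lem: symmetries} maps the trajectory $\rho_p(\pi/4+t)$ to $\rho_p(\pi/4-t)$) sends $(x_\tau,y_\tau)$ to $(y_\tau,x_\tau)$, so $\delta_{p,\pi/4}(\tau)=\norm{(x_\tau-y_\tau,\,y_\tau-x_\tau)}_p=\left(2|x_\tau-y_\tau|^p\right)^{1/p}=2^{1/p}|x_\tau-y_\tau|$. Here I use that both robots reach their symmetric points at the same time $\tau$ (the $\alpha_1=\alpha_2$ assertion of Lemma~\ref{lem: sym trajectories}), so these are genuinely simultaneous positions.

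For the monotonicity claim, fix $\phi=0$; the case $\phi=\pi/4$ is analogous after rotating by $\pi/4$ and using Lemma~\ref{lem: symmetric measures} to see that the arc-length parameterizations behave symmetrically about $\phi$. By the formula just derived it suffices to show that $|y_\tau|$ is strictly increasing on $[0,\pi_p/2]$ and strictly decreasing on $[\pi_p/2,\pi_p]$, where $\tau$ is the elapsed search time (equal to the arc length traversed from $\rho_p(0)=(1,0)$). On $[0,\pi_p/2]$ robot \#1 traverses the first quadrant arc from $(1,0)$ to $(0,1)$; using the alternative parameterization $r_p$ from~\eqref{equa: alternative param} (or equivalently $\bar r_p$), the $y$-coordinate goes monotonically from $0$ to $1$, and since the map from arc length to the point on the arc is a bijection (the curve is injective and continuously differentiable in the quadrant) and the $y$-coordinate is strictly monotone in the natural parameter, $|y_\tau|=y_\tau$ is strictly increasing in $\tau$. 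On $[\pi_p/2,\pi_p]$ robot \#1 traverses the second-quadrant arc from $(0,1)$ to $(-1,0)$, along which the $y$-coordinate decreases strictly from $1$ to $0$, so $|y_\tau|$ is strictly decreasing. Concatenating gives the claim.

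The main obstacle is making the ``strictly monotone in the natural parameter'' step rigorous without circular reasoning about which parameter measures time: I need to argue that the $y$-coordinate of $\rho_p(t)$, i.e. $\sin_p(t)$, is strictly increasing on $t\in[0,\pi/2]$ and strictly decreasing on $t\in[\pi/2,\pi]$, and then transfer this through the (strictly increasing, continuous) change of variables from the angular parameter $t$ to arc length $\tau$. The monotonicity of $\sin_p$ on $[0,\pi/2]$ follows from $\sin_p(t)=\sin t / N_p(t)$ with $N_p(t)=(|\sin t|^p+|\cos t|^p)^{1/p}$: on $[0,\pi/2]$ one can check directly (e.g. by differentiating, or by writing $\sin_p(t) = (1+\cot^p t)^{-1/p}$ for $t\in(0,\pi/2)$ and noting $\cot t$ is strictly decreasing) that it is strictly increasing; the second quadrant follows from the reflection symmetry $y=0$... more precisely from $\sin_p(\pi-t)=\sin_p(t)$ combined with the fact that we are moving through decreasing... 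I would instead phrase it via the $r_p$ parameterization where the $y$-coordinate is literally $(1-|s|^p)^{1/p}$, a function whose monotonicity in $s\in[-1,1]$ (increasing on $[-1,0]$, decreasing on $[0,1]$) is transparent, and whose correspondence with arc length is already recorded in the paragraph following~\eqref{equa: alternative param}. This sidesteps any delicate differentiation of the generalized trigonometric functions and reduces everything to the stated monotonicity of $s(t)$ and elementary calculus.
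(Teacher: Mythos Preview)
Your derivation of the two closed forms for $\delta_{p,\phi}$ is correct and matches the paper's argument essentially verbatim. Your monotonicity argument for $\phi=0$ is also correct and, if anything, slightly cleaner than the paper's: both arguments pass through the parameterization $r_p(s)$ and the chain rule via $s(\tau)$, but where the paper differentiates $\bar\delta_{p,0}(s)=2(1-(-s)^p)^{1/p}$ explicitly, you simply observe that $(1-|s|^p)^{1/p}$ is transparently increasing on $[-1,0]$ and decreasing on $[0,1]$. That is fine.

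The gap is in the sentence ``the case $\phi=\pi/4$ is analogous after rotating by $\pi/4$.'' There is no rotation by $\pi/4$ that is an isometry of $(\reals^2,d_p)$ when $p\neq 2$; this is precisely the lack of symmetry the paper is built around. Concretely, for $\phi=\pi/4$ you must show that $|x_\tau-y_\tau|$ is strictly increasing on $[0,\pi_p/2]$, i.e.\ that $\bar\delta_{p,\pi/4}(s)=2^{1/p}\big((1-|s|^p)^{1/p}+s\big)$ is strictly increasing for $s\in[-2^{-1/p},2^{-1/p}]$. Unlike the $\phi=0$ case, this is \emph{not} transparent: on $[0,2^{-1/p}]$ the two summands move in opposite directions, and one must check that the derivative $1-s^{p-1}(1-s^p)^{1/p-1}$ stays nonnegative. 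The paper carries out exactly this computation and observes that the inequality $1\ge s^{p-1}(1-s^p)^{1/p-1}$ is equivalent to $s\le 2^{-1/p}$, with equality only at the endpoint. Your $r_p$-based framework handles this case perfectly well once you actually do the calculation; you just cannot skip it by appealing to a rotation that does not exist. The same remark applies to your treatment of the decreasing half for $\phi=\pi/4$, which you also fold into ``analogous''; the paper instead dispatches the decreasing half uniformly for both values of $\phi$ via the point symmetry of Lemma~\ref{lem: symmetries}.
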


\begin{proof}
In the execution of Algorithm Wireless-Search$_p$($\phi$), suppose that robot \#1 follows trajectory $\rho_p(\phi+t)$, and let $(x_\tau,y_\tau)$ be its position after robots have searched the perimeter for time $\tau$. 
By Lemma~\ref{lem: sym trajectories}, the other robot is located at point $\rho_p(\phi-t)$. 

In particular, when $\phi=0$, the location of the two robots are $(x_\tau,y_\tau)$ and $(x_\tau,-y_\tau)$. As a result their $\ell_p$ distance is equal to 
$$
\delta_{p,0}(\tau) = \left( |x_\tau - x_\tau|^p + |y_\tau + y_\tau|^p\right)^{1/p}=
2|y_\tau|.
$$
When $\phi=\pi/4$, the location of the two robots are $(x_\tau,y_\tau)$ and $(y_\tau,x_\tau)$. As a result their $\ell_p$ distance is equal to 
$$
\delta_{p,\pi/4}(\tau) = \left( |x_\tau - y_\tau|^p + |y_\tau - x_\tau|^p\right)^{1/p}=
2^{1/p}|x_\tau - y_\tau|.
$$

Next we focus on the case that $\tau \in [0 ,\pi/2]$. The trajectory of robot \#1 can be alternatively described by parameterization~\eqref{equa: alternative param}. 
As a result, robots' distance can be described by some function $\bar{\delta}_{p,0}(s)$ on $s\in [-1,1]$. Moreover, for every $\tau \in [0 ,\pi/2]$ there exists unique $s=s(\tau)$ such that $(x_\tau,y_\tau)=r_p(s)$. Showing that $\bar{\delta}_{p,0}(s)$ is strictly increasing in $\tau \in [0 ,\pi/2]$, we can calculate (using the chain rule) 
$$
\frac{\partial}{\partial \tau} \bar{\delta}_{p,\phi}(s(\tau))
=
\bar{\delta}_{p,\phi}'(s(\tau)) \cdot s'(\tau).
$$
Clearly $s(\tau)$ is increasing in $\tau \in [0,\pi_p/2]$, and hence $s'(\tau)>0$. So the main claim of the lemma that robots' distances are strictly increasing in $\tau \in [0,\pi_p/2]$ follows by showing that 
$\frac{\partial}{\partial s} \bar{\delta}_{p,\phi}'(s) >0$ .

When $\phi=0$ robot \#1 moves along $r_p(s) = \left( -s, \left(1- |s|^p\right)^{1/p} \right)$, where $-1\leq s \leq 0$ (since $r_p(0)= \rho_p(\pi/2)$, a position that is reached after searching for time $\pi_p/2$). But then, 
$$
\bar{\delta}_{p,0}(s) = 2 \left(1- |s|^p\right)^{1/p} = 2 \left(1- (-s)^p\right)^{1/p}.
$$
Hence, 
$$
\bar{\delta}_{p,0}'(s) = 2 \left(1-(-s)^p\right)^{\frac{1}{p}-1} (-s)^{p-1} >0
$$
for all $s\in (-1,0)$ as wanted.

When $\phi=\pi/4$, robot \#1 moves along $r_p(s) = \left( -s, \left(1- |s|^p\right)^{1/p} \right)$, where $-2^{-1/p}\leq s \leq 2^{-1/p}$ (since $r_p(-2^{-1/p})= \rho_p(\pi/4)$ and $r_p(2^{-1/p})= \rho_p(3\pi/4)$ and the latter position is reached after searching for time $\pi_p/2$). 
Note also, that in this case, $\delta_{p,\pi/4}(\tau)= 2^{1/p}(y_\tau- x_\tau)$, and hence 
\begin{equation}
\label{equa: distance pi/4}
\bar{\delta}_{p,\pi/4}(s) = 2^{1/p}\left( 
 \left(1- |s|^p\right)^{1/p} +s
 \right).
\end{equation}
We distinguish two cases in order to compute $\bar{\delta}_{p,\phi}'(s)$. 
First, when $-2^{-1/p}\leq s \leq 0$, we have 
$$
\bar{\delta}_{p,\pi/4}'(s) = 
 2^{1/p}
 \left( 
\left(1- (-s)^p\right)^{1/p}+s
\right)'
=
 2^{1/p}
\left( 
(-s)^{p-1} \left(1-(-s)^p\right)^{\frac{1}{p}-1}+1
\right)>0.
$$
Second, when $0\leq s \leq 2^{-1/p}$, we have
$$
\bar{\delta}_{p,\pi/4}'(s) = 
 2^{1/p}
 \left( 
\left(1- (s)^p\right)^{1/p}+s
\right)'
=
 2^{1/p}
\left( 
1-s^{p-1} \left(1-s^p\right)^{\frac{1}{p}-1}
\right).
$$
Elementary algebraic calculations show that $1\geq s^{p-1} \left(1-s^p\right)^{\frac{1}{p}-1}$ exactly when $s \leq 2^{-1/p}$, and equality holds if $s = 2^{1/p}$.
We conclude that $\delta_{p,\phi}(\tau)$ is strictly increasing when $\tau \in [0,\pi_p/2]$ as promised. The fact that $\delta_{p,\phi}(\tau)$ is strictly decreasing when $\tau \in [0,\pi_p/2]$ is immediate from Lemma~\ref{lem: symmetries}.
\qed \end{proof}
It is interesting to note that $\delta_{p,\phi}(\tau)$ does not admit, in general, nice representations, and in fact calculating their values even for certain values of $\tau$ (and for arbitrary $p,\phi$) require numerical solutions of highly technical non linear equations. 
Next we provide worst case analysis of Algorithm~\ref{algo: wireless search phi} when $\phi \in \{ 0, \pi/4 \}$, that is we determine $E_{p,\phi} = \max_{\tau \in [0,\pi_p]} \mathcal E_{p,\phi} (\tau)$. For this, we take advantage of Lemma~\ref{equa: distance robots}, according to which $\mathcal E_{p,\phi} (\tau)$ is increasing when $\tau \in [0,\pi_p/2]$ for both $\phi=0,\pi/4$.\footnote{We believe that 
$\mathcal E_{p,\phi} (\tau)$
is increasing in $\tau \in [0,\pi_p/2]$ for all $\phi\in[0,\pi/4]$, even though that would be hard to prove. Nevertheless, such algorithms will not be optimal, and hence this property, even if true, is irrelevant to our analysis.} As a result, we will look for maximizers in $\tau \in [\pi_p/2,\pi_p]$.

\begin{lemma}
\label{lem: performance phi=0}
For $p\in (1,2]$, set $s_p = \left(\left(2^p-1\right)^{\frac{1}{p-1}}+1\right)^{-1/p}$. Then, we have 
$$
E_{p,0}
=
1+\pi_p/2+\int_{0}^{s_p} \left(z^{p^2-p} \left(1-z^p\right)^{1-p}+1\right)^{1/p} \ddd z
+ 
2 (1-s_p^p)^{1/p}. 
$$
\end{lemma}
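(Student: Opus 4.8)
The plan is to determine the worst-case evacuation cost $E_{p,0}=\max_{\tau\in[0,\pi_p]}\mathcal E_{p,0}(\tau)$ for $p\in(1,2]$ by analyzing where $\mathcal E_{p,0}(\tau)=1+\tau+\delta_{p,0}(\tau)$ is maximized. By Lemma~\ref{equa: distance robots}, $\mathcal E_{p,0}(\tau)$ is strictly increasing on $[0,\pi_p/2]$, so the maximizer lies in $[\pi_p/2,\pi_p]$. On this second interval, robot \#1 is moving through the second quadrant; using the alternative parameterization $r_p(s)=(-s,(1-|s|^p)^{1/p})$ with $s\in[0,1]$ (so that $s=0$ corresponds to $\rho_p(\pi/2)$, reached at search time $\pi_p/2$, and $s=1$ to $\rho_p(\pi)$, reached at search time $\pi_p$), I would express both the elapsed search time and the inter-robot distance $\delta_{p,0}$ as functions of $s$.

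Concretely, I would write the elapsed search time past $\pi_p/2$ as $\tau(s)=\pi_p/2+\int_0^s\norm{r_p'(z)}_p\,\ddd z$, and by Lemma~\ref{equa: distance robots} the distance is $\delta_{p,0}(s)=2(1-s^p)^{1/p}$. Hence along the second half of the trajectory $\mathcal E_{p,0}$ becomes a function $g(s):=1+\pi_p/2+\int_0^s\norm{r_p'(z)}_p\,\ddd z+2(1-s^p)^{1/p}$. Computing $\norm{r_p'(z)}_p$ from $r_p'(z)=(-1,-z^{p-1}(1-z^p)^{1/p-1})$ gives $\norm{r_p'(z)}_p=\left(1+z^{(p-1)p}(1-z^p)^{(1/p-1)p}\right)^{1/p}=\left(z^{p^2-p}(1-z^p)^{1-p}+1\right)^{1/p}$, which is exactly the integrand appearing in the statement. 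So the candidate value in the lemma is precisely $g(s_p)$, and what remains is to prove $s_p$ is the maximizer of $g$ on $[0,1]$.

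For this I would invoke the Fundamental Theorem of Calculus: $g'(s)=\norm{r_p'(s)}_p+\frac{d}{ds}\big(2(1-s^p)^{1/p}\big)=\left(s^{p^2-p}(1-s^p)^{1-p}+1\right)^{1/p}-2s^{p-1}(1-s^p)^{1/p-1}$. Setting $g'(s)=0$ and simplifying (raising to the $p$-th power and clearing denominators) should reduce to $s^{p^2-p}(1-s^p)^{1-p}+1=2^p s^{p(p-1)}(1-s^p)^{1-p}$, i.e. $(1-s^p)^{1-p}\big(2^p s^{p(p-1)}-s^{p^2-p}\big)=1$, which after factoring $s^{p^2-p}(2^p-1)$ collapses to $s^p(2^p-1)^{1/(p-1)}=1-s^p$ and hence $s^p=\big((2^p-1)^{1/(p-1)}+1\big)^{-1}$, exactly $s_p^p$ as defined. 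To finish I would check that for $p\in(1,2]$ this critical point indeed yields the maximum on the relevant interval — e.g., by verifying $g'$ changes sign from positive to negative at $s_p$ (equivalently, that $g'(0^+)>0$, consistent with $\mathcal E_{p,0}$ still increasing past $\pi_p/2$, and that $g'$ is eventually negative as $s\to 1$), and that the endpoints do not beat $g(s_p)$; the restriction $p\le 2$ is what guarantees the sign behavior works out and that $s_p$ lies in the legitimate range. I expect the main obstacle to be this last monotonicity/sign-change verification of $g'$, since the integrand and its derivative are algebraically heavy and one must be careful that the unique critical point is a genuine global maximum on $[\pi_p/2,\pi_p]$ rather than merely a stationary point, using the already-established fact that $\delta_{p,0}$ is strictly increasing then strictly decreasing on the two halves.
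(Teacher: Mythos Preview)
Your proposal is correct and follows essentially the same approach as the paper: reduce to the interval $[\pi_p/2,\pi_p]$ via the monotonicity from Lemma~\ref{equa: distance robots}, reparameterize by $s\in[0,1]$ using $r_p$, write the cost as $g(s)=1+\pi_p/2+\int_0^s\norm{r_p'(z)}_p\,\ddd z+2(1-s^p)^{1/p}$, differentiate via the Fundamental Theorem of Calculus, solve $g'(s)=0$ algebraically to obtain the unique critical point $s_p$, and confirm it is the global maximizer by checking $g'(0^+)>0$ and $g'(s)\to-\infty$ as $s\to1^-$. One small remark: the restriction $p\le 2$ is not actually what makes the sign analysis or the location of $s_p$ work (that goes through for all $p>1$); rather, $p\le 2$ is the regime in which the $\phi=0$ algorithm is the relevant one to analyze.
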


\begin{proof}
By Lemma~\ref{equa: distance robots}, the evacuation time of Wireless-Search$_p$(0) is maximized when the exit is reported when robot \#1 is at location $r_p(s) = \left( -s, \left(1- s^p\right)^{1/p} \right),
$ for some $s\in [0,1]$, that is, after each robot has searched $\pi_p/2$ part of the unit circle. Clearly, robots have spent time $\pi_p/2$ searching till they reach $r_p(0)$. They also need additional time $\int_0^s \norm{r_p'(z)}_p \ddd z$ till the exit is reported, at which time their distance, as per Lemma~\ref{equa: distance robots}, equals $2\left(1- s^p\right)^{1/p}$. Overall, the evacuation cost in this case is described by the following function 
$$
f(s) = 1+\pi_p/2 + \int_0^s \norm{r_p'(z)}_p \ddd z +2\left(1- s^p\right)^{1/p},
$$ 
where $s\in [0,1]$. 
The proof of our main claim follows by technical Lemma~\ref{lem: maximizer p12} that shows that $f(s)$ is indeed maximized at $s=s_p$. 
\qed \end{proof}

\begin{lemma}
\label{lem: maximizer p12}
Function 
$
f(s) = 1+\pi_p/2 + \int_0^s \norm{r_p'(z)}_p \ddd z +2\left(1- s^p\right)^{1/p},
$
over $s\in [0,1]$ is maximized at $s_p=\left(\left(2^p-1\right)^{\frac{1}{p-1}}+1\right)^{-1/p}$.
\end{lemma}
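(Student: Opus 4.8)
The plan is to locate the maximizer by a first-derivative analysis; the only genuine subtlety is to argue that the unique interior critical point is a \emph{global} maximum on the closed interval $[0,1]$, which I will get for free from a monotonicity argument rather than from a second-derivative computation or delicate endpoint asymptotics.

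First I would make the integrand explicit. For $z \in [0,1]$ we have $r_p(z) = (-z,(1-z^p)^{1/p})$, hence $r_p'(z) = \left(-1,\, -z^{p-1}(1-z^p)^{1/p-1}\right)$ and
$$
\norm{r_p'(z)}_p = \left(1 + z^{p^2-p}(1-z^p)^{1-p}\right)^{1/p},
$$
which also recovers the integrand appearing in Theorem~\ref{thm: explored optimal algo}. The function $f$ is continuous on $[0,1]$ (near $s=1$ the integrand is $O\!\left((1-s)^{1/p-1}\right)$, which is integrable), and by the Fundamental Theorem of Calculus, for $s \in (0,1)$,
$$
f'(s) = \left(1 + s^{p^2-p}(1-s^p)^{1-p}\right)^{1/p} - 2\, s^{p-1}(1-s^p)^{1/p-1}.
$$

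The key step is to compare the two terms of $f'$ by raising to the power $p$. Writing $C := s^{p^2-p}(1-s^p)^{1-p} > 0$ for $s\in(0,1)$, the first term is $(1+C)^{1/p}$, while the $p$-th power of the second term equals exactly $2^p C$; since both terms are positive, for $s\in(0,1)$ we get
$$
f'(s) > 0 \iff 1 + C > 2^p C \iff 1 > (2^p-1)C \iff (1-s^p)^{p-1} > (2^p-1)\, s^{p^2-p}.
$$
Now set $g(s) := (2^p-1)\, s^{p^2-p} - (1-s^p)^{p-1}$. For $p>1$ the first summand is strictly increasing from $0$ (as $p^2-p>0$) and $(1-s^p)^{p-1}$ is strictly decreasing to $0$, so $g$ is strictly increasing on $(0,1)$ with $g(0^+) = -1 < 0$ and $g(1^-) = 2^p-1 > 0$. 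Hence $g$ has a unique root $s^\ast$, and the displayed equivalence shows $f' > 0$ on $(0,s^\ast)$ and $f' < 0$ on $(s^\ast,1)$; thus $f$ is strictly increasing on $[0,s^\ast]$ and strictly decreasing on $[s^\ast,1]$, so it attains its maximum at $s^\ast$. Finally, solving $g(s^\ast)=0$, i.e.\ $(2^p-1)\, (s^\ast)^{p^2-p} = (1-(s^\ast)^p)^{p-1}$, and raising both sides to the power $1/(p-1)$ (legitimate since all quantities are positive and $p>1$) gives $(2^p-1)^{1/(p-1)}\,(s^\ast)^p = 1 - (s^\ast)^p$, whence $(s^\ast)^p = \left((2^p-1)^{1/(p-1)}+1\right)^{-1}$, i.e.\ $s^\ast = s_p$.

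I expect the only points requiring care are (i) justifying that $f$ extends continuously to $s=1$ despite the integrand blowing up there, and (ii) being explicit that the power-$p$ and power-$1/(p-1)$ manipulations are equivalences because every quantity involved is strictly positive on $(0,1)$; both are routine. As a sanity check, at $p=2$ the formula gives $s_2 = (3+1)^{-1/2} = 1/2$, consistent with $\gamma_2 = 2(1-s_2^2)^{1/2} = \sqrt{3}$.
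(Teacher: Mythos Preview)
Your proof is correct and follows essentially the same approach as the paper: compute $f'(s)$ via the Fundamental Theorem of Calculus, reduce $f'(s)=0$ to the algebraic equation $(2^p-1)C=1$ with $C=s^{p(p-1)}(1-s^p)^{1-p}$, and conclude that the unique interior critical point is the maximizer from the sign pattern of $f'$. Your monotonicity argument for the auxiliary function $g$ is in fact a slightly cleaner way to certify the sign change of $f'$ than the paper's endpoint-limit reasoning.
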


\begin{proof}
Note that $f(0) = 1+\pi_p/2+2$, and that $f(1)=1+\pi_p$ (where $\pi_p\leq 4$). Hence, the maximum is either $3+\pi_p$, or it is attained at some critical point of $f(s)$. Indeed, we verify next that $s=s_p$ is the only value in $[0,1]$ which is a root to $f'(s)=0$. For this recall that since $s\in [0,1]$ we have $r_p(s) = \left( -s, \left(1- s^p\right)^{1/p} \right)$. 
Indeed, by the Fundamental Theorem of Calculus, we have 
\begin{align*}
f'(s) &=
\norm{r_p'(s)}_p + \frac{\partial}{\partial s} 2\left(1- s^p\right)^{1/p} \\
& =
\left( 1 + \frac{s^{p(p-1)}}{\left(1-s^p\right)^{p-1}} \right)^{1/p}
-
2 \frac{s^{p-1}}{\left(1-s^p\right)^{1-1/p}}.
\end{align*}
Since $s\in [0,1]$ is follows that $f'(s)=0$ exactly when 
\begin{align*}
& 1 + \frac{s^{p(p-1)}}{\left(1-s^p\right)^{p-1}} 
= 2^p \frac{s^{p(p-1)}}{\left(1-s^p\right)^{p-1}}. \\
\Leftrightarrow  &
\frac{s^{p(p-1)}}{\left(1-s^p\right)^{p-1}}
= (2^p-1)^{-1} \\
\Leftrightarrow  &
\frac{s^{p}}{1-s^p}
= (2^p-1)^{-p+1} \\
\Leftrightarrow  &
s^p
= \left( 
\left(2^p-1\right)^{\frac{1}{p-1}}+1 \right)^{-1}\\
\Leftrightarrow  &
s
= \left(\left(2^p-1\right)^{\frac{1}{p-1}}+1\right)^{-1/p}.
\end{align*}
In other words, $s_p$ is the unique critical point to $f(s)$. 
Finally, to see that $s_p$ is indeed a maximizer, note that $\lim_{s\rightarrow 0^+} f'(s)=1>0$ and that $\lim_{s\rightarrow 1^-} f'(s) = - \infty$. We conclude that $f(s)$ is strictly increasing at $s\rightarrow 0^+$ and strictly decreasing at $s\rightarrow 1^-$. Since moreover $f'(s)$ has a unique root in $[0,1]$, it follows that $f'(s)$ is strictly concave in $[0,1]$ , and hence any root of $f'(s)$ in the same interval is a maximizer of $f(s)$. 
\qed \end{proof}

\ignore{
\begin{corollary}
For all $p \in (1,2]$, the critical chord distance that induces maximum evacuation time is equal to 
$$
2 (1-s_p^p)^{1/p} =
\frac{2}{\left( \left(2^p-1\right)^{\frac{1}{1-p}}+1\right)^{1/p}} 
$$
We need to show that among all chords with length the value above, the one that maximizes the corresponding arc is when parallel to x=0 or y=0. 

Similarly, the unexplored arc has length 
$$
2\int_{s_p}^1 \norm{r_p'(z)}_p \ddd z
=
2\int_{s_p}^1 \left(z^{p^2-p} \left(1-z^p\right)^{1-p}+1\right)^{1/p} \ddd z.
$$
We need to show that among all arcs (less than $\pi_p$) with length as above, the one that minimizes the corresponding chord is when x=0 or y=0 are bisectors of the arc (or when chord is parallel to x=0 or y=0). 
\end{corollary}
}

\begin{lemma}
\label{lem: performance phi=pi_p/4}
For $p\in [2,\infty)$, let $w_p$ be the unique\footnote{See Lemma~\ref{lem: critical p2infty} and its proof.} root
 to equation $w^p+1=2(1-w)^p$.
Let also $s_p = \left(w_p^{p/(p-1)} +1\right)^{-1/p}$. 
Then, we have that
$$
E_{p,\pi/4}=
1+\pi_p/2+\int_{2^{-1/p}}^{s_p} \left(z^{p^2-p} \left(1-z^p\right)^{1-p}+1\right)^{1/p} \ddd z
+ 
2^{1/p} \left( 
\left(1-s_p^p\right)^{1/p}
+s_p
\right),
$$
or $E_{p,\pi/4}=1+\pi_p$. 
\end{lemma}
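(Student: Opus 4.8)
The plan is to follow the template of Lemmas~\ref{lem: performance phi=0} and~\ref{lem: maximizer p12}. By Lemma~\ref{equa: distance robots} the evacuation cost $\mathcal E_{p,\pi/4}(\tau)=1+\tau+\delta_{p,\pi/4}(\tau)$ is strictly increasing on $[0,\pi_p/2]$, so $E_{p,\pi/4}=\max_{\tau\in[\pi_p/2,\pi_p]}\mathcal E_{p,\pi/4}(\tau)$ and it remains to optimize over the second half of the search. By Lemma~\ref{lem: sym trajectories}, during $\tau\in[\pi_p/2,\pi_p]$ robot \#1 moves from $\rho_p(3\pi/4)$ to $\rho_p(5\pi/4)$, and I split this trip at $\rho_p(\pi)$ into the \emph{second-quadrant piece} $\tau\in[\pi_p/2,3\pi_p/4]$ and the \emph{third-quadrant piece} $\tau\in[3\pi_p/4,\pi_p]$.

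First I would handle the second-quadrant piece. Continuing the parameterization $r_p$ of~\eqref{equa: alternative param} past $s=2^{-1/p}$, robot \#1 occupies $r_p(s)$ for $s\in[2^{-1/p},1]$ (here $r_p(2^{-1/p})=\rho_p(3\pi/4)$ and $r_p(1)=\rho_p(\pi)$), its elapsed search time is $\pi_p/2+\int_{2^{-1/p}}^{s}\norm{r_p'(z)}_p\,\ddd z$, and by~\eqref{equa: distance pi/4} the two robots are at distance $2^{1/p}\big((1-s^p)^{1/p}+s\big)$. Hence on this piece $\mathcal E_{p,\pi/4}(\tau)=g(s)$, where, using $\norm{r_p'(z)}_p=\big(z^{p^2-p}(1-z^p)^{1-p}+1\big)^{1/p}$ exactly as in the proof of Lemma~\ref{lem: maximizer p12},
$$g(s):=1+\pi_p/2+\int_{2^{-1/p}}^{s}\left(z^{p^2-p}\left(1-z^p\right)^{1-p}+1\right)^{1/p}\ddd z+2^{1/p}\left(\left(1-s^p\right)^{1/p}+s\right),\qquad s\in[2^{-1/p},1].$$
Differentiating by the Fundamental Theorem of Calculus gives $g'(s)=\big(1+a^p\big)^{1/p}+2^{1/p}(1-a)$ with $a:=s^{p-1}(1-s^p)^{1/p-1}$; since $a=a(s)$ increases strictly from $1$ to $+\infty$ on $[2^{-1/p},1)$ while $a\mapsto(1+a^p)^{1/p}+2^{1/p}(1-a)$ is strictly decreasing (its $a$-derivative is $(a^p/(1+a^p))^{(p-1)/p}-2^{1/p}<1-2^{1/p}<0$), the function $g'$ is strictly decreasing in $s$, with $g'(2^{-1/p})=2^{1/p}>0$ and $g'(s)\to-\infty$ as $s\to1^-$; hence $g$ has a unique critical point $s_p\in(2^{-1/p},1)$, which is its maximizer. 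Setting $g'(s_p)=0$, i.e. $(1+a^p)^{1/p}=2^{1/p}(a-1)$ (which forces $a>1$), raising to the $p$-th power and substituting $w:=1/a=s^{-(p-1)}(1-s^p)^{1-1/p}\in(0,1)$ turns the equation into $w^p+1=2(1-w)^p$; its unique root $w_p$ (see Lemma~\ref{lem: critical p2infty}; alternatively, $w\mapsto w^p+1-2(1-w)^p$ is strictly increasing on $(0,1)$ and changes sign) satisfies $w_p=s_p^{-(p-1)}(1-s_p^p)^{1-1/p}$, which upon raising to the power $p/(p-1)$ gives exactly $s_p=(w_p^{p/(p-1)}+1)^{-1/p}$. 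Therefore, restricted to this piece, $\mathcal E_{p,\pi/4}$ attains its maximum $g(s_p)$, which is precisely the claimed formula.

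For the third-quadrant piece I would use the central symmetry of $\mathcal C_p$ (Lemma~\ref{lem: symmetries}): robot \#1 occupies $(-\cos_p\theta,-\sin_p\theta)$ for $\theta\in[0,\pi/4]$, so by Lemma~\ref{equa: distance robots} the robots' distance equals $2^{1/p}(\cos_p\theta-\sin_p\theta)$. Setting $\sigma:=\cos_p\theta\in[2^{-1/p},1]$ (so $\sin_p\theta=(1-\sigma^p)^{1/p}$ and, since $\norm{r_p'(\cdot)}_p$ is even, the total elapsed search time equals $3\pi_p/4+\int_{\sigma}^{1}\norm{r_p'(z)}_p\,\ddd z$) turns $\mathcal E_{p,\pi/4}$ into $h(\sigma)=1+3\pi_p/4+\int_{\sigma}^{1}\norm{r_p'(z)}_p\,\ddd z+2^{1/p}\big(\sigma-(1-\sigma^p)^{1/p}\big)$. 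A short computation gives $h'(\sigma)=-(1+b^p)^{1/p}+2^{1/p}(1+b)$ with $b:=\sigma^{p-1}(1-\sigma^p)^{1/p-1}>0$, and this is positive because $2(1+b)^p\ge 2(1+b^p)>1+b^p$ (using $(1+b)^p\ge 1+b^p$ for $p\ge1$); as $\sigma$ decreases when $\tau$ increases, $\mathcal E_{p,\pi/4}$ is strictly decreasing on $[3\pi_p/4,\pi_p]$, so its maximum there is at $\tau=3\pi_p/4$, i.e. $h(1)=g(1)\le g(s_p)$, already subsumed above (likewise the junction value $\mathcal E_{p,\pi/4}(\pi_p/2)=g(2^{-1/p})\le g(s_p)$). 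Combining the two pieces yields $E_{p,\pi/4}=g(s_p)$; the alternative value $1+\pi_p$ in the statement is the trivial endpoint value $\mathcal E_{p,\pi/4}(\pi_p)$, retained so that, should one prefer to bypass the boundary-sign bookkeeping for $g'$, the weaker dichotomy still holds (and, as noted in the text following Lemma~\ref{lem: lower bound generic}, that case is in any event settled directly by Lemma~\ref{sec: lower bound weak}).

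The step I expect to be the main obstacle is the critical-point analysis of $g$ on $[2^{-1/p},1]$: the algebraic reduction of $g'(s)=0$ to $w^p+1=2(1-w)^p$ via $w=1/a$, the verification that this equation has a unique admissible root, and the monotonicity argument for $g'$ that certifies the critical point is a maximizer rather than a saddle point. This is a somewhat more delicate version of the analogous step in Lemma~\ref{lem: maximizer p12}, precisely because here the distance term $2^{1/p}\big((1-s^p)^{1/p}+s\big)$ is itself non-monotone on $[2^{-1/p},1]$ (it peaks at $s=2^{-1/p}$), so the positivity of $g'$ near the left endpoint and its divergence to $-\infty$ near the right endpoint must be extracted carefully. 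By contrast, the reduction to the two sub-intervals and the strict monotonicity of $\mathcal E_{p,\pi/4}$ on the third-quadrant piece are routine once the parameterizations above and the inequality $(1+b)^p\ge 1+b^p$ are in place.
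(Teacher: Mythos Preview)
Your proof is correct and follows the same two-piece decomposition as the paper: split the search interval $[\pi_p/2,\pi_p]$ at the point where robot~\#1 passes $\rho_p(\pi)$, parameterize the second-quadrant piece via $r_p(s)$ on $[2^{-1/p},1]$, and reduce the critical-point equation to $w^p+1=2(1-w)^p$ through the substitution $w=1/a$. Your third-quadrant parameterization via $\sigma=\cos_p\theta$ is a cosmetic variant of the paper's $f_2$ on $[-1,-2^{-1/p}]$, and your inequality $2(1+b)^p>1+b^p$ is equivalent to the paper's observation (Lemma~\ref{lem: no critical p2infty}) that $f_2$ has no critical points.

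Where you go further than the paper is in the analysis of $g$: the paper's Lemma~\ref{lem: critical p2infty} only shows $s_p$ is the \emph{unique} critical point of $g$, and the paper then argues that, being preceded by an increasing piece, this critical point is either a maximizer or a saddle, hence the disjunct ``or $E_{p,\pi/4}=1+\pi_p$'' in the statement. You instead prove that $g'$ is strictly decreasing on $[2^{-1/p},1)$ (via the monotonicity of $a(s)$ and the bound $(a^p/(1+a^p))^{(p-1)/p}<1<2^{1/p}$), which rules out the saddle case outright and yields the sharper conclusion $E_{p,\pi/4}=g(s_p)$ unconditionally. This is a clean strengthening; the paper's weaker disjunctive form is harmless downstream (as you note, the $1+\pi_p$ case is covered by Lemma~\ref{sec: lower bound weak}), but your argument is more self-contained.
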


\begin{proof}

By Lemma~\ref{equa: distance robots}, the evacuation time of Wireless-Search$_p$($\pi/4$) is maximized when the exit is reported when robot \#1 is at location $\rho_p(t)$ for some $t \in [3\pi/4, 5\pi/4]$. We examine separately the cases $t \in [3\pi/4, \pi]$ and $t \in [\pi, 5\pi/4]$.

First, we restrict the analysis to $t \in [3\pi/4, \pi]$.
The location of robot \#1 is given by $r_p(s) = \left( -s, \left(1- s^p\right)^{1/p} \right)$ for some $s\in [2^{-1/p},1]$, in which interval the exit is reported. 
Clearly, robots have spent time $\pi_p/2$ searching till they reach $r_p(2^{-1/p})$.
They also need additional time $\int_{2^{-1/p}}^s \norm{r_p'(z)}_p \ddd z$ till the exit is reported, at which time their distance, as per Lemma~\ref{equa: distance robots}, equals 
$2^{1/p}\left( 
 \left(1- s^p\right)^{1/p} +s
 \right)$
 (see~\eqref{equa: distance pi/4}, and recall that $s\geq 0$). 
 Overall, the evacuation cost in this case is described by the function 
$$
f_1(s) = 1+\pi_p/2 + \int_{2^{-1/p}}^s \norm{r_p'(z)}_p \ddd z +2^{1/p}\left( 
 \left(1- s^p\right)^{1/p} +s
 \right),
$$ 
where $s\in [2^{-1/p},1]$. 
In technical Lemma~\ref{lem: critical p2infty}, we show that $s=s_p$ is the unique critical point of $f(s)$ in $[2^{-1/p},1]$, and hence the unique candidate maximizer in the same interval.

Second, we restrict the analysis to $t \in [\pi, 5\pi/4]$.
Our main claim in this case is that the corresponding evacuation cost function has no critical point, and the lemma will follow. 
In order to calculate the evacuation cost function, we still use parameterization~\ref{equa: alternative param}, which however cannot describe the location of robot \#1 (which is now moving in the 3rd quadrant). For this we will rely on the lemmata we already introduced pertaining to the symmetries of $\mathcal C_p$. 

Clearly, robots have spent time $3\pi_p/4$ searching till they reach point 
$\rho_p(\pi)=r_p(1)=(-1,0)$.
Utilizing Lemma~\ref{lem: symmetric measures} (and the point of symmetry, as per Lemma~\ref{lem: symmetries}), robots also need additional time $\int_{-1}^s \norm{r_p'(z)}_p \ddd z$ till the exit is reported, for some $s\in [-1,-2^{-1/p}]$. At this time, their distance, as per Lemma~\ref{equa: distance robots} (note that in this case $x_\tau \geq y_\tau$), equals 
$2^{1/p}\left( 
 - \left(1- (-s)^p\right)^{1/p} -s
 \right)$. 
 Overall, the evacuation cost in this case is described by function 
$$
f_2(s) = 1+3\pi_p/4 + \int_{-1}^s \norm{r_p'(z)}_p \ddd z 
-
2^{1/p}\left( 
 \left(1- (-s)^p\right)^{1/p} +s
 \right).
$$ 
where $s\in [-1,2^{-1/p}]$. 
In technical Lemma~\ref{lem: no critical p2infty}, we show that $f_2(s)$ has no critical points when $s\leq 0$, and hence no critical points when $s\in [-1,2^{-1/p}]$.

Overall, we showed that after time $\pi_p/2$ of searching, the evacuation cost function has a unique critical point with respect to time. Since the evacuation cost was increasing in the first $\pi_p/2$ time of searching, it follows that the critical point is a maximizer, unless it is a saddle point, in which case the worst case cost is attained at the end of the search, that is in case the cost is $1+\pi_p$. 
\qed \end{proof}

\begin{lemma}
\label{lem: critical p2infty}
For $p\in [2,\infty)$, let $w_p$ be the unique root
 to equation $w^p+1=2(1-w)^p$. 
Then, $s_p = \left(w_p^{p/(p-1)} +1\right)^{-1/p}$ is the unique critical point of function 
$$
f_1(s) = 1+\pi_p/2 + \int_{2^{-1/p}}^s \norm{r_p'(z)}_p \ddd z 
+
2^{1/p}\left( 
 \left(1- s^p\right)^{1/p} +s
 \right).
$$ 
when $s\in [2^{-1/p},1]$. 
\end{lemma}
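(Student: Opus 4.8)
The plan is to follow the template of the proof of Lemma~\ref{lem: maximizer p12}: differentiate $f_1$ by the Fundamental Theorem of Calculus, set $f_1'(s)=0$, and reduce the resulting transcendental equation, via a monotone change of variables, to the defining equation $w^p+1=2(1-w)^p$ of $w_p$ (which along the way must itself be shown to have a unique relevant root).

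First I would make the integrand explicit. From the parameterization~\eqref{equa: alternative param}, for $z\in(0,1)$ we have $r_p'(z)=\bigl(-1,\,-z^{p-1}(1-z^p)^{1/p-1}\bigr)$, hence $\norm{r_p'(z)}_p=\bigl(1+z^{p^2-p}(1-z^p)^{1-p}\bigr)^{1/p}$. Abbreviating $g(s):=s^{p-1}(1-s^p)^{1/p-1}=\bigl(\tfrac{s^p}{1-s^p}\bigr)^{(p-1)/p}$, so that $z^{p^2-p}(1-z^p)^{1-p}=g(z)^p$, the Fundamental Theorem of Calculus gives
\[
f_1'(s)=\bigl(1+g(s)^p\bigr)^{1/p}+2^{1/p}\bigl(1-g(s)\bigr).
\]
Since $\tfrac{s^p}{1-s^p}$ is strictly increasing on $[2^{-1/p},1)$ and equals $1$ at $s=2^{-1/p}$, the map $g$ is a strictly increasing bijection of $[2^{-1/p},1)$ onto $[1,\infty)$; in particular $g(s)\ge 1$ throughout, with equality only at $s=2^{-1/p}$, where $f_1'(2^{-1/p})=2^{1/p}>0$, so the left endpoint is not critical. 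For $s\in(2^{-1/p},1)$ the magnitudes of both summands are positive, so $f_1'(s)=0$ is equivalent, after isolating and raising to the $p$-th power, to $1+g(s)^p=2\bigl(g(s)-1\bigr)^p$.

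Next I would substitute. Setting $v:=g(s)$, which ranges bijectively and monotonically over $(1,\infty)$ as $s$ ranges over $(2^{-1/p},1)$, the equation becomes $v^p+1=2(v-1)^p$; the further substitution $v=1/w$ (a decreasing bijection of $(1,\infty)$ onto $(0,1)$) turns it, after multiplying through by $w^p$, into $w^p+1=2(1-w)^p$. Thus critical points of $f_1$ in $(2^{-1/p},1)$ are in bijection with roots $w\in(0,1)$ of $\psi(w):=2(1-w)^p-w^p-1$. The main (and essentially only) obstacle is uniqueness of that root, which I would dispatch by monotonicity: $\psi(0)=1>0$, $\psi(1)=-2<0$, and $\psi'(w)=-2p(1-w)^{p-1}-pw^{p-1}<0$ on $(0,1)$ for $p\ge2$, so $\psi$ is strictly decreasing and has exactly one root $w_p\in(0,1)$ (this is also the content of the uniqueness claim flagged in the footnote of Lemma~\ref{lem: performance phi=pi_p/4}).

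Finally I would undo the substitutions: $g(s_p)=1/w_p$ gives $\tfrac{s_p^p}{1-s_p^p}=w_p^{-p/(p-1)}$, hence $s_p^p=\bigl(w_p^{p/(p-1)}+1\bigr)^{-1}$, i.e. $s_p=\bigl(w_p^{p/(p-1)}+1\bigr)^{-1/p}$; and since $w_p\in(0,1)$ strictly, $s_p$ lies in the open interval $(2^{-1/p},1)$, so it is the unique critical point of $f_1$ on $[2^{-1/p},1]$. As a sanity check one can verify $f_1'(s)\to-\infty$ as $s\to1^-$, since there $(1+g(s)^p)^{1/p}\sim g(s)$ and hence $f_1'(s)\sim 2^{1/p}+(1-2^{1/p})g(s)\to-\infty$, which is consistent with exactly one interior critical point and no critical point accumulating at the endpoints.
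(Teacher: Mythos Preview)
Your proposal is correct and follows essentially the same route as the paper: differentiate $f_1$ via the Fundamental Theorem of Calculus, substitute $g(s)=s^{p-1}(1-s^p)^{1/p-1}$, reduce to the defining equation of $w_p$, prove uniqueness of the root by monotonicity, and invert. The one genuine difference is that you insert the extra change of variables $v=1/w$ to pass from $v\in(1,\infty)$ (the actual range of $g$ on $[2^{-1/p},1)$) to $w\in(0,1)$; the paper sets $w=g(s)$ directly and writes the critical equation as $w^p+1=2(1-w)^p$ with $w_p\in(0,1)$, silently conflating the substitution variable with its reciprocal. Your version makes the ranges and the inversion to $s_p=(w_p^{p/(p-1)}+1)^{-1/p}$ line up transparently, and your endpoint checks ($f_1'(2^{-1/p})=2^{1/p}>0$, $f_1'(s)\to-\infty$ as $s\to1^-$) are a welcome addition.
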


\begin{proof}
Using the Fundamental Theorem of Calculus, we have 
\begin{align*}
f_1'(s) &=
\norm{r_p'(s)}_p + 
2^{1/p}
\frac{\partial}{\partial s} \left( 
 \left(1- s^p\right)^{1/p} +s
 \right) \\
& =
\left( 1 + \frac{s^{p(p-1)}}{\left(1-s^p\right)^{p-1}} \right)^{1/p}
+
2^{1/p}\left( 
- \frac{s^{p-1}}{\left(1-s^p\right)^{1-1/p}} +1
 \right).
\end{align*}
Set $w:=\frac{s^{p-1}}{\left(1-s^p\right)^{1-1/p}}$, and note that a critical point $s=s(w)$ must satisfy
$$
w^p+1=2(1-w)^p.
$$
The latter equation has a unique solution $w_p\in (0,1)$. 
To see why, define $a(w):=w^p+1-2(1-w)^p$, and note that $a(0)=-1$ and $a(1)=2$. Moreover $a(w)$ is clearly strictly increasing in $a\in (0,1)$, so indeed $a(w)$ has a unique root in $(0,1)$.

Now solving expression $w=\frac{s^{p-1}}{\left(1-s^p\right)^{1-1/p}}$ for $s$ gives the unique solution $s(w)=\left(w_p^{p/(p-1)} +1\right)^{-1/p}$. Some straightforward calculations then show $w\geq 0$ implies that $s(w)\leq 1$ and that $w\leq 1$ implies that $s(w) \geq 2^{-1/p}$, as wanted. 
\qed \end{proof}

\begin{lemma}
\label{lem: no critical p2infty}
Function 
$$
f_2(s) = 1+3\pi_p/4 + \int_{-1}^s \norm{r_p'(z)}_p \ddd z 
-
2^{1/p}\left( 
 \left(1- (-s)^p\right)^{1/p} +s
 \right).
$$ 
has no critical points when $s\leq 0$. 
\end{lemma}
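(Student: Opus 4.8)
The plan is to compute $f_2'$ via the Fundamental Theorem of Calculus and show it is strictly negative on the relevant range, so that it never vanishes. Since we are only concerned with $s\leq 0$ (inside the domain $[-1,2^{-1/p}]$ of $f_2$), I would write $-s=|s|\geq 0$ and use $r_p(s)=\left(-s,\left(1-(-s)^p\right)^{1/p}\right)$, so that $r_p'(s)=\left(-1,\ (1-(-s)^p)^{1/p-1}(-s)^{p-1}\right)$ and hence, exactly as in the proof of Lemma~\ref{lem: maximizer p12} with $s$ replaced by $-s$,
$$
\norm{r_p'(s)}_p=\left(1+\frac{(-s)^{p(p-1)}}{\left(1-(-s)^p\right)^{p-1}}\right)^{1/p}.
$$
Differentiating the closed-form part of $f_2$ gives $\frac{\partial}{\partial s}\left(\left(1-(-s)^p\right)^{1/p}+s\right)=\frac{(-s)^{p-1}}{\left(1-(-s)^p\right)^{1-1/p}}+1$, the only subtlety being the sign bookkeeping in $\frac{\ddd}{\ddd s}(-s)^p=-p(-s)^{p-1}$.

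Next, mirroring the substitution in Lemma~\ref{lem: critical p2infty}, I would put $w:=\frac{(-s)^{p-1}}{\left(1-(-s)^p\right)^{1-1/p}}\geq 0$, note $w^p=\frac{(-s)^{p(p-1)}}{\left(1-(-s)^p\right)^{p-1}}$, and obtain the compact form
$$
f_2'(s)=\left(1+w^p\right)^{1/p}-2^{1/p}\left(1+w\right).
$$
A critical point would therefore require $1+w^p=2(1+w)^p$, and the key step is the elementary inequality $1+w^p<2(1+w)^p$ valid for all $w\geq 0$ and $p\geq 1$. This I would deduce from superadditivity of $x\mapsto x^p$ on $[0,\infty)$: since $\tfrac{1}{1+w},\tfrac{w}{1+w}\in[0,1]$ and $p\geq 1$ give $x^p\leq x$, we have $\frac{1+w^p}{(1+w)^p}=\left(\tfrac{1}{1+w}\right)^p+\left(\tfrac{w}{1+w}\right)^p\leq 1$, i.e. $1+w^p\leq(1+w)^p$; doubling and using $1+w^p>0$ yields $1+w^p<2(1+w)^p$, equivalently $f_2'(s)<0$. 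As $s$ ranges over $[-1,0)$ the quantity $w$ ranges over all of $[0,\infty)$, so this rules out critical points throughout; the endpoint $s=-1$ (where $w\to\infty$ and $f_2'(s)\to-\infty$) is likewise not a critical point.

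I do not expect a genuine obstacle here: once $f_2'$ is rewritten in the variable $w$ the statement reduces to the one-line convexity/superadditivity inequality above. The only care needed is in the sign bookkeeping when differentiating $\left(1-(-s)^p\right)^{1/p}$ for negative $s$, in observing that the range restriction $s\leq 0$ does not exclude any finite value of $w$, and in checking the boundary behaviour at $s=-1$, where both summands of $f_2'$ diverge but their difference tends to $-\infty$.
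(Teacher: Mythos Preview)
Your proof is correct and follows essentially the same route as the paper: compute $f_2'$ via the Fundamental Theorem of Calculus, substitute $w=\frac{(-s)^{p-1}}{(1-(-s)^p)^{1-1/p}}\geq 0$, and reduce to showing $1+w^p<2(1+w)^p$ for all $w\geq 0$. The only difference is in this last step, where the paper argues that $b(w):=2(1+w)^p-w^p-1$ satisfies $b(0)=1$ and $b'(w)>0$, while you obtain the same conclusion more directly from the one-line bound $1+w^p\leq(1+w)^p$ (via $x^p\leq x$ on $[0,1]$), which also yields the slightly stronger statement $f_2'(s)<0$ throughout.
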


\begin{proof}
Recall that $s\leq 0$ and that $r_p(s)=\left( -s, \left(1- |s|^p\right)^{1/p} \right)
=\left( -s, \left(1- (-s)^p\right)^{1/p} \right)$. 
Using the Fundamental Theorem of Calculus, we have 
\begin{align*}
f_2'(s) &=
\norm{r_p'(s)}_p 
-
2^{1/p}
\frac{\partial}{\partial s} 
\left( 
 \left(1- (-s)^p\right)^{1/p} +s
 \right)
  \\
& =
\left( 1 + \frac{(-s)^{p(p-1)}}{\left(1-(-s)^p\right)^{p-1}} \right)^{1/p}
-
2^{1/p}\left( 
\frac{(-s)^{p-1}}{\left(1-(-s)^p\right)^{1-1/p}} +1
 \right).
\end{align*}
Set $w:=\frac{(-s)^{p-1}}{\left(1-(-s)^p\right)^{1-1/p}}$, and note for $s\leq 0$ we have $w\geq 0$. At the same time a critical point $s=s(w)$ must satisfy
$$
w^p+1=2(1+w)^p.
$$
However, the latter equation has no non-negative root. 
To see why, define $b(w):=2(1+w)^p - w^p-1$, and note that $b(0) =1>0$. 
Moreover, when $w>0$ we have
$$
b'(w) 
= 2p (1+w)^{p-1}-p w^{p-1} 
= p(2(1+w)^{p-1}-w^{p-1}) 
\geq p (2w^{p-1}-w^{p-1})  >0,
$$
hence $b(w)$ is strictly increasing. As a result, it cannot have a root in $(0,1)$. 
\qed \end{proof}

\ignore{
\begin{corollary}
For all $p \in [2,\infty]$, the critical chord distance that induces maximum evacuation time is equal to 
$$
2^{1/p} \left( 
\left(1-s_p^p\right)^{1/p}
+s_p
\right),
 $$
where $s_p = \left(w_p^{p/(p-1)} +1\right)^{-1/p}$, 
and $w_p$ is the unique root
 to equation $w^p+1=2(1-w)^p$.

We need to show that among all chords with length the value above, the one that maximizes the corresponding arc is when parallel to y=x or y=-x. 

Similarly, the unexplored arc has length 
$$
\pi_p/2+
2\int_{s_p}^1 \norm{r_p'(z)}_p \ddd z
=
\pi_p/2+
2 \int_{s_p}^1 \left(z^{p^2-p} \left(1-z^p\right)^{1-p}+1\right)^{1/p} \ddd z
$$
We need to show that among all arcs (less than $\pi_p$) with length as above, the one that minimizes the corresponding chord is when y=x or y=-x are bisectors of the arc (or when chord is parallel to y=x or y=-x). 
\end{corollary}
}

\end{document}